\newlist{abbrv}{itemize}{1}
\setlist[abbrv,1]{label=,labelwidth=1.2in,align=parleft,itemsep=0.1\baselineskip,leftmargin=!}
\DeclareMathOperator{\SPAN}{span}
\newcommand{\vph}{\widehat{\Phi}_{\Lambda}}
\newcommand{\vphc}{\Phi_{\Lambda}}
\newcommand{\tr}{\mathrm{Tr}}
\newcommand{\abs}[1]{\left| #1 \right|}
\newcommand{\el}{\tilde{\varphi}}
\newcommand{\scp}[2]{\big\langle #1 , #2 \big\rangle}
\newcommand{\SCP}[2]{\big\langle #1 , #2 \big\rangle}
\newcommand{\bra}[1]{\langle #1 |}
\newcommand{\ket}[1]{| #1 \rangle}
\newcommand{\ketbra}[2]{| #1 \rangle \langle #2 |}
\newcommand{\ketbr}[1]{| #1 \rangle \langle #1 |}
\newcommand{\norm}[1]{\left\| #1 \right\|}
\newcommand{\Hilbert}{\mathscr{H}}
\renewcommand{\Im}{\mathrm{Im}}
\newcommand{\RRR}{\mathbb{R}}
\newcommand{\NNN}{\mathbb{N}}
\newcommand{\id}{\mathbbm{1}}
\newcommand{\HS}{\mathrm{HS}}
\newcommand{\Term}{\mathrm{Term}}
\newcommand{\as}{\mathrm{as}}
\newcommand{\mbp}{P}
\newcommand{\mbq}{Q}
\newtheorem{theorem}{Theorem}[section]
\newtheorem{lemma}[theorem]{Lemma}
\newtheorem{notation}[theorem]{Notation}
\newtheorem{remark}[theorem]{Remark}
\newtheorem{definition}[theorem]{Definition}
\newtheorem{assumption}[theorem]{Assumption}
\newtheorem{proposition}[theorem]{Proposition}
\begin{document}
\title{Mean-field Dynamics for the Nelson Model with Fermions}

\author{
Nikolai Leopold\footnote{IST Austria (Institute of Science and Technology Austria), Am Campus 1, 3400 Klosterneuburg, Austria. E-mail: {\tt nikolai.leopold@ist.ac.at}} \ and 
S\"oren Petrat\footnote{Jacobs University, Department of Mathematics and Logistics, Campus Ring 1, 28759 Bremen, Germany. E-mail: {\tt s.petrat@jacobs-university.de}}}

\date{July 8, 2019}

\maketitle

\begin{abstract}
\noindent
We consider the Nelson model with ultraviolet cutoff, which describes the interaction between non-relativistic particles and a positive or zero mass quantized scalar field. We take the non-relativistic particles to obey Fermi statistics and discuss the time evolution in a mean-field limit of many fermions. In this case, the limit is known to be also a semiclassical limit. We prove convergence in terms of reduced density matrices of the many-body state to a tensor product of a Slater determinant with semiclassical structure and a coherent state, which evolve according to a fermionic version of the Schr\"odinger-Klein-Gordon equations.
\end{abstract}

\noindent
\textbf{MSC class:} 35Q55, 81Q05, 81T10, 82C10
\\
\textbf{Keywords:} mean-field limit, Nelson model, fermionic Schr\"odinger-Klein-Gordon equations, non-linear Schr\"odinger equation

\section{Introduction}
\label{section: Nelson Introduction}
Interacting many-body systems are very difficult to analyze, and analytic or numerical solutions are usually not feasible. Therefore simpler effective equations are used to analyze these systems throughout the sciences. These approximations work very well in many settings and can be derived with heuristic arguments and good intuition. In mathematical physics the question of a rigorous justification of such effective equations is an active field of research, starting in the 1970's with works such as \cite{hepp,lanford,braunhepp,ginibrevelo,spohn:1981} (see \cite{spohneffbook} for an excellent overview). Sparked by the 2001 nobel prize for the experimental realization of a Bose-Einstein condensate there has been great interest in the derivation of effective equations for bosonic systems (we refer to \cite{lieb:2005,schleinbook} for references and an overview of the topic). More recently, there was an increasing interest in the evolution of many fermion systems. This started already in the 80's with the works \cite{narnhofer:1981,spohn:1981}, which introduce the mean-field limit for fermions and prove convergence to the classical Vlasov equation. Convergence to the Hartree-Fock equations was proven in 2004 in \cite{erdoes:2004}, where the authors consider short times and analytic interaction potentials, and in particular highlight the importance of the semiclassical structure in the derivation. The generalization to arbitrary times and a larger class of bounded interaction potentials was achieved in \cite{benedikter_2013,benedikter:2014}, see also \cite{petrat_2015} for a slightly different proof. The more recent work \cite{porta:2016} extended the results to Coulomb interaction (see also \cite{saffirio_proc} for weaker singularities), assuming a property of the Hartree-Fock dynamics that the authors only prove for the special situation of translation invariant initial data. The article \cite{benedikter:2015_2} covers mixed initial states. Several other results for different time scales (without semiclassical structure) were obtained in \cite{bardos:2003,bardos:2004,bardos:2007} for a coupling constant $N^{-1}$, in \cite{froehlich:2011} for a coupling constant $N^{-1}$ and Coulomb interaction, and in \cite{petrat_2015,bach:2015,petrat:2016} for a coupling constant $N^{-2/3}$ and singular interactions potentials. In particular, in \cite{petrat:2016} convergence to the fermionic Hartree equations is proven for Coulomb interaction with a convergence rate that distinguishes the mean-field equation from the free equation. Let us also mention the article \cite{benedikter_2018}, where the authors discuss the Bogoliubov-de Gennes equations for fermions, which is an approximation more precise than Hartree-Fock theory. In particular, they derive these equations assuming that the states are quasifree for all times. These works show that many aspects of the mean-field regime of weakly correlated bosons and fermions that interact via a pair potential are well understood by now. However, less attention has been paid to systems in which the interaction between the particles is mediated by a second quantized radiation field. Also here effective equations are of great importance because quantized radiation fields are described on Fock space, i.e., a Hilbert space for an arbitrary number of particles. The complexity of such systems is reduced tremendously when the quantized field is approximated by a pair potential or a classical radiation field. 
The articles \cite{davies_1979, hiroshima_1998, teufel2} show that the quantized radiation field can sometimes be replaced by a two-particle interaction if the particles are much slower than the bosons of the radiation field. Moreover it is possible to derive classical field equations from second quantized models \cite{ginibrenironivelo, falconi, ammarifalconi, frankenschlein, frankgang, leopold, leopold3, griesemer, correggi_2017, correggi_2018}. 
While these works focus on bosonic systems or systems with a small number of fermions, the present paper seems to be the first that considers a many particle limit of fermions which interact by means of a quantized radiation field. The scaling, which will be explained in the following, can been seen as a fermionic mean-field limit because it is chosen such that the source term of the radiation field can effectively be replaced by its mean value. Moreover, it can be viewed as a second quantized analogue of the fermionic mean-field model of \cite{benedikter_2013}. 

We consider $N$ identical fermions that interact by means of a quantized scalar field.
The state of the radiation field is represented by elements of the bosonic Fock space $\mathcal{F}_s \coloneqq \bigoplus_{n \geq 0} L^2(\mathbb{R}^3)^{\otimes_s n}$, where the subscript $s$ indicates symmetry under interchange of variables. The Hilbert space of the whole system is 
\begin{align}
\mathcal{H}^{(N)} \coloneqq L^2_{\as}\left( \mathbb{R}^{3N} \right) \otimes \mathcal{F}_s.
\end{align}
Here the subscript ``$\as$'' indicates antisymmetry under exchange of variables. An element $\Psi_N \in \mathcal{H}^{(N)}$ is a vector $\big( \Psi_N^{(n)} \big)_{n \in \mathbb{N}_0}$ with $\Psi_N^{(n)} \in L^2_{\as}(\RRR^{3N}) \otimes L^2_s(\RRR^{3n})$ and
\begin{align}
\norm{\Psi_N}^2 = \sum_{n=0}^{\infty} \int d^{3N}x \, d^{3n}k \, \abs{\Psi_{N}^{(n)}(X_N,K_n)}^2 < \infty,
\end{align}
where we use the short-hand notation $X_N = (x_1, \ldots, x_N)$ and $K_n = (k_1, \ldots k_n)$. We define the annihilation and creation operators by
\begin{align}\label{eq: Nelson pointwise creation and annihilation operators}
\begin{split}
\left( a(k) \Psi_N \right)^{(n)} (X_N,k_1,\ldots,k_n) &= ( n + 1)^{1/2} \Psi_N^{(n+1)}(X_N,k,k_1, \ldots, k_n), \\
\left( a^*(k) \Psi_N \right)^{(n)} (X_N,k_1,\ldots,k_n) &=
n^{-1/2} \sum_{j=1}^n \delta(k- k_j) \Psi_N^{(n-1)}(X_N, k_1, \ldots, \hat{k}_j, \ldots, k_n),
\end{split}
\end{align}
where $\hat{k}_j$ means that $k_j$ is left out in the argument of the function. They satisfy the commutation relations
\begin{align}
\label{eq: Nelson canonical commutation relation}
[a(k), a^*(l) ] &=  \delta(k-l), \quad
[a(k), a(l) ] = 
[a^*(k), a^*(l) ] = 0.
\end{align}
We choose units such that $\hbar=1=c$. The dispersion relation is then given by $\omega(k) = ( \abs{k}^2 + m^2 )^{1/2}$ with mass $m \geq 0$. We define the form factor of the radiation field by
\begin{equation}
\label{eq: Nelson cut off function}
\tilde{\eta}(k) = \frac{(2 \pi)^{-3/2}}{\sqrt{2 \omega(k)}} \id_{\abs{k}\leq \Lambda}(k), \quad
\text{with} \;  \id_{\abs{k}\leq \Lambda}(k) =
\begin{cases} 
1 &\text{if } \abs{k} \leq \Lambda , \\
0 &\text{otherwise} .
\end{cases}
\end{equation}
Here, $\Lambda$ is a momentum cutoff and we assume $\Lambda \geq 1$. The field operator is given by
\begin{align}
\vph(x) =  \int d^3k \, \tilde{\eta}(k)
\left( e^{ikx} a(k) + e^{-ikx} a^*(k)  \right) 
\end{align}
and the free Hamiltonian of the scalar field is the self-adjoint operator 
\begin{align}
H_f &= \int d^3k \, \omega(k) a^*(k) a(k)
\end{align}
with
\begin{align}
\label{eq: Nelson field energie domain}
\mathcal{D}(H_f) = \bigg\{ \Psi_N \in \mathcal{H}^{(N)}: \sum_{n=1}^{\infty} \int d^{3N}x \, d^{3n}k \, \Bigg\lvert\sum_{j=1}^n \omega(k_j)  \Psi_N^{(n)}(X_N, K_n)\Bigg\rvert^2  < \infty \bigg\}.
\end{align}
The full system is described by the Nelson Hamiltonian 
\begin{align}
\label{eq: Nelson Hamiltonian}
H_N =& \sum_{j=1}^N  \left(  - \Delta_j  +  \vph(x_j) \right) +  \delta_N H_f.
\end{align}
The factor $\delta_N$ is an arbitrary particle number dependent scaling parameter that allows to scale the field energy. The Nelson Hamiltonian is self-adjoint on the domain $\mathcal{D}\left(H_N \right) = \big(H^2(\RRR^{3N})\otimes \mathcal{F}_s\big) \cap \mathcal{D}(H_f)$, where $H^2$ denotes the second Sobolev space. This can be shown by applying Kato's theorem as in \cite{nelson, spohn}. The time evolution of the wave function $\Psi_{N,t}$ is governed by the Schr\"odinger equation
\begin{align}\label{eq:Schroedinger_equation_microscopic}
i \partial_t \Psi_{N,t} = N^{-1/3}  H_N \Psi_{N,t}.
\end{align}
The appearance of $N^{-1/3}$ in \eqref{eq:Schroedinger_equation_microscopic}  stems from the fact that we are interested in initial conditions which are localized in a volume of order one. Then, due to the Fermi statistics, the average kinetic energy per fermion is of order $N^{2/3}$, and the average momentum per fermion of order $N^{1/3}$. Therefore, we rescale time so we track the particles only while they move in the volume of order one, i.e., we go to time scales $N^{-1/3}$. 
This gives rise to a factor  $N^{1/3}$ in front of the time derivative.

If we use the Schr\"odinger equation \eqref{eq:Schroedinger_equation_microscopic} to compute the Ehrenfest equation for the field operator, we obtain
\begin{align}
\label{eq: Ehrenfest equations}
&\big[ \partial_t^2 + N^{-2/3} \delta_N^2 (- \Delta_x + m^2)  \big] \SCP{\Psi_{N,t}}{\vph(x) \Psi_{N,t}}
\nonumber \\
&\qquad = - N^{1/3} \delta_N (2 \pi)^{-3} \int d^3k \, e^{-ikx} \id_{\abs{k}\leq \Lambda}(k) \frac{1}{N} \SCP{\Psi_{N,t}}{\sum_{j=1}^N e^{ik x_j} \Psi_{N,t}},
\end{align}
where $\scp{\cdot}{\cdot}$ is the scalar product on $\mathcal{H}^{(N)}$, and the $x_j$'s on the right-hand side refer to the variables in $L^2_{\as}\left( \mathbb{R}^{3N} \right)$ that are integrated. Note that the integral on the right-hand side is proportional to $N^{-1}$ times the smeared out electron density (i.e., for $\Lambda\to\infty$ the electron density). Thus, for our initial conditions, the integral is a function of order one in a volume of order one. Equation~\eqref{eq: Ehrenfest equations} also shows that not only the coupling constant in front of the radiation field (which we set equal to one) but also $\delta_N$ determines the variation of the mean of the field operator.
While our main result Theorem \ref{theorem: main theorem} holds for arbitrary $\delta_N$, we believe that two choices are of particular interest.
\begin{enumerate}
\item 
For $\delta_N = N^{1/3}$ the velocities of the electrons and the bosons scale equally. Moreover, it ensures that the right-hand side of \eqref{eq: Ehrenfest equations}  and hence the variation of the mean of the field operator is of order $N^{2/3}$. This gives rise to the interesting effective evolution equations \eqref{effective_eqs} which capture the effect of the interaction.
\item
If we set $\delta_N =1$ our model corresponds to an unscaled system whose dynamics is studied for time scales of order $N^{-1/3}$. This is interesting because usually mean-field results for systems with two-particle interaction require a scaling of the coupling constant. It should be noted that most of the electrons travel on a distance of order one and hence could interact with the other electrons. However, a look at  \eqref{eq: Ehrenfest equations} shows that the group velocity of the bosons is too slow to mediate an interaction between the electrons. This implies (see Theorem~\ref{theorem: free evolution}) that the electrons effectively evolve like free particles in an external potential.
\end{enumerate}

Further insight concerning the scaling can be gained if we set $\varepsilon_N = N^{-1/3}$ and multiply \eqref{eq:Schroedinger_equation_microscopic} by $\varepsilon_N$. This gives 
\begin{align}
\label{eq:Schroedinger_equation_microscopic2}
i  \varepsilon_N \partial_t \Psi_{N,t} = \Bigg[  \sum_{j=1}^N \Big(  - \varepsilon_N^2 \Delta_j + N^{-1/2} \varepsilon_N^{1/2} \vph(x_j) \Big) + \varepsilon_N N^{-1/3} \delta_N H_f   \Bigg] \Psi_{N,t}.
\end{align}
Here, the factor $\varepsilon_N$ appears exactly where the physical constant $\hbar$ appears in the Schr\"odinger equation. Thus, for $\delta_N = N^{1/3}$, our limit can be viewed as a combined weak coupling (the $N^{-1/2}$ in front of the interaction term) and semiclassical limit. Moreover, it displays a connection to the fermionic mean-field scaling considered in \cite{benedikter_2013}, i.e., to the model
\begin{align}\label{eq:BPS_model}
i \varepsilon_N \partial_t \chi_{N,t}
&= \Bigg[ - \sum_{j=1}^N \varepsilon_N^2 \Delta_j + \frac{1}{N} \sum_{i<j}^N V(x_i - x_j)  \Bigg] \chi_{N,t}
\end{align}
with $\chi_{N,t} \in L^2_{\as}(\mathbb{R}^{3N})$, and some $V:\RRR^3\to\RRR$. Like in \cite{benedikter_2013} it will be crucial for us to consider initial data with a semiclassical structure, meaning that the kernel of the one-particle reduced density matrix is concentrated along its diagonal (see Remark~\ref{remark_sc_struc} for more details). 

We assume the initial states to be approximately of product form
\begin{align}
\label{eq: Nelson initial product state}
\bigwedge_{j=1}^N \varphi_j^0 \otimes W( N^{2/3} \alpha^0) \Omega.
\end{align}
Here, $\alpha^0 \in L^2(\RRR^3)$, $\bigwedge_{j=1}^N \varphi_j^0$ denotes the antisymmetrized tensor product (wedge product) of orthonormal $\varphi_1^0,\ldots,\varphi_N^0 \in L^2(\RRR^3)$, $\Omega$ denotes the vacuum in $\mathcal{F}_s$ and $W$ is the Weyl operator
\begin{align}
\label{eq: Nelson Weyl operator}
W(f) \coloneqq \exp \left(  \int d^3k \, \Big( f(k) a^*(k) - \overline{f(k)} a(k) \Big) \right)
\end{align} 
for all $f \in L^2(\mathbb{R}^3)$ ($\overline{f(k)}$ denotes the complex conjugate of $f(k)$). In such a state the only correlations are due to the antisymmetry of the electron wave function. During the time evolution correlations emerge but the product structure (as will be shown) is preserved in the limit $N \rightarrow \infty$ on the level of reduced density matrices. This suggests to approximate the action of the scaled field operator $N^{-2/3} \vph$ on $\Psi_{N,t}$ by a classical radiation field $\vphc(x,t)$ and replace the right-hand side of \eqref{eq: Ehrenfest equations} by a coupling to the mean electron density. In fact, Theorem \ref{theorem: main theorem} says that $\Psi_{N,t}$ can be approximated by $\bigwedge_{j=1}^N \varphi_j^t \otimes W( N^{2/3} \alpha^t) \Omega$, 
where $(\varphi_1^t, \ldots, \varphi_N^t,\alpha^t)$ solves the Schr\"odinger-Klein-Gordon equations
\begin{align}\label{effective_eqs}
\left\{\begin{array}{rl} N^{-1/3} i \partial_t \varphi_j^t(x) \!\!\!\!&= \left( -N^{-2/3}  \Delta + \vphc(x,t)  \right) \varphi_j^t(x),  \qquad \text{for~} j = 1, \ldots, N, \\ 
i \partial_t \alpha^t(k) \!\!\!\!&= N^{-1/3} \delta_N \omega(k) \alpha^t(k) + N^{-1} (2 \pi)^{3/2} \tilde{\eta}(k) \mathcal{F}\left[ \rho^t \right](k), \\
\vphc(x,t) \!\!\!\!&= \int d^3k \, \tilde{\eta}(k) \big(  e^{ikx} \alpha^t(k)  + e^{-ikx} \overline{\alpha^t(k)}  \big),
\end{array}\right.
\end{align}
with $\rho^t = \sum_{i=1}^N \abs{\varphi_i^t}^2$, $(\varphi_1^0, \ldots, \varphi_N^0,\alpha^0) \in (L^2(\mathbb{R}^3))^{N+1}$, $\varphi_1^0,\ldots,\varphi_N^0$ orthonormal, and where $\mathcal{F}[f](k) := (2\pi)^{-3/2}\int d^3x e^{-ikx} f(x)$ denotes the Fourier transform of $f\in L^2(\RRR^3)$. This system of equations is formally equivalent to
\begin{align}\label{eq: Nelson Schroedinger-Klein-Gordon system 2}
i N^{-1/3} \partial_t \varphi_j^t(x) &= \left[ -N^{-2/3} \Delta + \vphc(x,t)  \right] \varphi_j^t(x),  \quad \text{for~} j = 1, \ldots, N, \nonumber\\
\big[ \partial_t^2 + N^{-2/3} \delta_N^2 (- \Delta + m^2)  \big] \vphc(x,t)
&= - N^{-1/3} \delta_N (2 \pi)^{-3/2} \int d^3k \, e^{ikx} \id_{\abs{k}\leq \Lambda}(k) \frac{1}{N}
\mathcal{F}[\rho^t](k).
\end{align}
Its solutions have nice regularity properties because of the ultraviolet cutoff in the radiation field. For $m \in \mathbb{N}$, let $H^m(\mathbb{R}^3)$ denote the Sobolev space of order $m$ and $L_m^2(\mathbb{R}^3)$ a weighted $L^2$-space with norm
$\norm{\alpha}_{L^2_m(\mathbb{R}^3)} =  \big\| ( 1 + \abs{\cdot}^2 )^{m/2} \alpha \big\|_{L^2(\mathbb{R}^3)}$. 
Throughout this paper, we use 
\begin{proposition}
\label{theorem: solution theory}
Let $(\varphi_1^0, \ldots, \varphi_N^0, \alpha^0) \in  \bigoplus_{n=1}^N H^2(\mathbb{R}^3) \oplus L_1^2(\mathbb{R}^3)$. Then there is a strongly differentiable $\bigoplus_{n=1}^N H^2(\mathbb{R}^3) \oplus L_1^2(\mathbb{R}^3)$-valued function $(\varphi_1^t, \ldots, \varphi_N^t, \alpha^t)$ on $[0, \infty )$ that satisfies \eqref{effective_eqs}. Moreover, if $\varphi_1^0,\ldots,\varphi_N^0$ are orthonormal, then so are $\varphi_1^t,\ldots,\varphi_N^t$ for all $t\in[0,\infty)$.
\end{proposition}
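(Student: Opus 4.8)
The plan is to recast the system \eqref{effective_eqs} as a set of Duhamel integral equations, solve it locally by a contraction mapping argument, and then promote the local solution to a global one via a priori bounds that exploit the ultraviolet cutoff. After multiplying the first line of \eqref{effective_eqs} by $N^{1/3}$, the relevant free propagators are $U_\varphi(t) = e^{itN^{-1/3}\Delta}$, which restricts to a strongly continuous unitary group on $H^2(\RRR^3)$, and $U_\alpha(t) = e^{-itN^{-1/3}\delta_N\omega(\cdot)}$, which, being multiplication by a unimodular function of $k$, is a strongly continuous isometry on $L^2_1(\RRR^3)$ (it commutes with multiplication by $(1+\abs{\cdot}^2)^{1/2}$). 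On $X := \bigoplus_{n=1}^N H^2(\RRR^3)\oplus L^2_1(\RRR^3)$ a mild solution is a fixed point in $C([0,T];X)$ of the map whose components send $(\varphi_1,\ldots,\varphi_N,\alpha)$ to
\begin{align*}
\varphi_j^t &\longmapsto U_\varphi(t)\varphi_j^0 - iN^{1/3}\int_0^t U_\varphi(t-s)\,\vphc(\cdot,s)\,\varphi_j^s\,ds, \qquad j=1,\ldots,N,\\
\alpha^t &\longmapsto U_\alpha(t)\alpha^0 - iN^{-1}(2\pi)^{3/2}\int_0^t U_\alpha(t-s)\,\tilde{\eta}\,\mathcal{F}[\rho^s]\,ds,
\end{align*}
with $\vphc(\cdot,s)$ and $\rho^s = \sum_i\abs{\varphi_i^s}^2$ built from $(\varphi^s,\alpha^s)$ as in \eqref{effective_eqs}.

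The estimates making this map well defined and contractive on a ball of $C([0,T];X)$ for small $T$ all follow from the cutoff $\id_{\abs{k}\leq\Lambda}$ and the Sobolev embedding $H^2(\RRR^3)\hookrightarrow L^\infty(\RRR^3)$. Since $\vphc(x,s) = 2\,\Re\int d^3k\,\tilde{\eta}(k)e^{ikx}\alpha^s(k)$, one has $\norm{\vphc(\cdot,s)}_{W^{2,\infty}} \le C\big(\norm{\tilde{\eta}}_{L^2} + \norm{\abs{\cdot}\tilde{\eta}}_{L^2} + \norm{\abs{\cdot}^2\tilde{\eta}}_{L^2}\big)\norm{\alpha^s}_{L^2} \le C_\Lambda\norm{\alpha^s}_{L^2}$, so multiplication by $\vphc(\cdot,s)$ is bounded on $H^2$ with norm $\le C_\Lambda\norm{\alpha^s}_{L^2}$; and $\norm{\tilde{\eta}\,\mathcal{F}[\rho^s]}_{L^2_1} \le \norm{(1+\abs{\cdot}^2)^{1/2}\tilde{\eta}}_{L^2}\,\norm{\mathcal{F}[\rho^s]}_{L^\infty} \le C_\Lambda\norm{\rho^s}_{L^1} = C_\Lambda\sum_i\norm{\varphi_i^s}_{L^2}^2$. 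Together with the Lipschitz bounds $\norm{\vphc\varphi - \vphc'\varphi'}_{H^2} \lesssim \norm{\alpha-\alpha'}_{L^2}\norm{\varphi}_{H^2} + \norm{\alpha'}_{L^2}\norm{\varphi-\varphi'}_{H^2}$ and $\norm{\rho-\rho'}_{L^1} \lesssim \sum_i(\norm{\varphi_i}_{L^2}+\norm{\varphi_i'}_{L^2})\norm{\varphi_i-\varphi_i'}_{L^2}$, the map is a contraction, giving a unique local mild solution; this solution is strongly differentiable in the (weaker) norm $\bigoplus_n L^2\oplus L^2$ and satisfies \eqref{effective_eqs} pointwise there, by the standard passage from mild to strong solutions (the right-hand sides are continuous $X$-valued, so the Duhamel integrals are $C^1$ into $L^2$, and $-N^{-2/3}\Delta+\vphc(\cdot,t)$ maps $H^2\to L^2$, while $N^{-1/3}\delta_N\omega(\cdot)$ maps $L^2_1\to L^2$).

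Global existence follows from a priori bounds. Because $\vphc(\cdot,t)$ is real-valued, $-N^{-2/3}\Delta + \vphc(\cdot,t)$ is symmetric on $H^2$, so $\frac{d}{dt}\SCP{\varphi_i^t}{\varphi_j^t} = iN^{1/3}\big(\SCP{(-N^{-2/3}\Delta+\vphc)\varphi_i^t}{\varphi_j^t} - \SCP{\varphi_i^t}{(-N^{-2/3}\Delta+\vphc)\varphi_j^t}\big) = 0$; hence $\norm{\varphi_j^t}_{L^2}=\norm{\varphi_j^0}_{L^2}$ and orthonormality is preserved, which is the last assertion of the Proposition. Consequently $\norm{\rho^t}_{L^1}=\sum_i\norm{\varphi_i^0}_{L^2}^2$ is constant, so the Duhamel formula for $\alpha$ together with the isometry of $U_\alpha$ on $L^2_1$ yields $\norm{\alpha^t}_{L^2_1} \le \norm{\alpha^0}_{L^2_1} + C_\Lambda\,t\sum_i\norm{\varphi_i^0}_{L^2}^2$, i.e.\ at most linear growth and, crucially, with no reference to the $H^2$ norms of the $\varphi_j$. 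Feeding this into the Duhamel formula for $\varphi_j$ and using unitarity of $U_\varphi$ on $H^2$ gives $\norm{\varphi_j^t}_{H^2} \le \norm{\varphi_j^0}_{H^2} + C_\Lambda N^{1/3}\int_0^t \norm{\alpha^s}_{L^2}\norm{\varphi_j^s}_{H^2}\,ds$, and Gr\"onwall produces a bound on $\norm{\varphi_j^t}_{H^2}$ finite on every bounded interval. Since the $X$-norm of the solution stays finite on bounded intervals, the local solution extends to $[0,\infty)$.

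The genuinely delicate points are not in the analysis itself but in bookkeeping: one must obtain the bound on $\norm{\alpha^t}_{L^2_1}$ purely from the conserved quantity $\norm{\rho^t}_{L^1}$ \emph{before} touching $\norm{\varphi_j^t}_{H^2}$, so that the Gr\"onwall estimates close without circularity; and one must be content with strong differentiability in the $L^2$ topology rather than in $H^2$ resp.\ $L^2_1$, since $\varphi_j^0$ is not assumed to lie in $H^4$ — this is all that is required for \eqref{effective_eqs} to make sense and is what the statement intends. Everything else is routine semilinear-evolution theory once the cutoff is used to control $\tilde{\eta}$ and its polynomial multiples in $L^2$.
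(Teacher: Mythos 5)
Your proof is correct, and it follows the same basic strategy the paper announces (``a standard fixed point argument because of the ultraviolet cutoff''), but the implementation is genuinely different from the paper's. The paper (Appendix B) rewrites \eqref{effective_eqs} as $\frac{d}{dt}u = -iAu + J[u]$ on the plain $L^2$ direct sum $\Hilbert=\bigoplus_{n=1}^{N+1}L^2(\RRR^3)$ and invokes the abstract Reed--Simon machinery (Theorems X.73/X.74 with $n=1$); the price is that, besides the local Lipschitz bound for $J$ on $\Hilbert$, one must also verify that $J$ preserves $\mathcal{D}(A)=\bigoplus_{n=1}^N H^2\oplus L^2_1$ and that $AJ$ is bounded and locally Lipschitz in the graph norm (parts b), d), e) of Lemma B.2), after which $\mathcal{D}(A)$-valuedness and $C^1$-regularity in $\Hilbert$ come for free. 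You instead contract directly in $C([0,T];X)$ with $X=\bigoplus_{n=1}^N H^2\oplus L^2_1$, which lets you skip the graph-norm Lipschitz bookkeeping entirely, at the cost of having to carry out the mild-to-strong passage and the continuation argument by hand --- both of which you do correctly (the integrands are continuous $X$-valued, so the Duhamel integrals are $C^1$ into $L^2$ and the equation holds classically there, which is exactly the regularity the paper's route also delivers and all that is used downstream, e.g.\ for the continuous differentiability of $\vphc(\cdot,t)$ in $L^\infty$). The core estimates are the same in both proofs: the $W^{2,\infty}$ control of $\vphc^{\alpha}$ by $\Lambda^3\norm{\alpha}_2$, the bound $\norm{\tilde{\eta}\,\mathcal{F}[\rho]}_{L^2_1}\leq C_\Lambda\norm{\rho}_{L^1}$, conservation of $\norm{\varphi_j^t}_2$ by symmetry of the generator (which also gives orthonormality), and the resulting linear-in-$t$ bound on $\alpha^t$. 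One point where your argument is slightly sharper out of necessity: since your blow-up criterion is in the $X$-norm, you need (and correctly derive, noncircularly from the conserved $\norm{\rho^t}_{L^1}$) the a priori bound on $\norm{\alpha^t}_{L^2_1}$ rather than only on $\norm{\alpha^t}_{L^2}$ as in the paper's Lemma B.3, before closing the Gr\"onwall estimate for $\norm{\varphi_j^t}_{H^2}$.
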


\begin{proof}
The proposition can be shown by a standard fixed point argument because of the ultraviolet cutoff. A proof is given in Appendix~\ref{section: Appendix B: The fermionic Schr\"odinger-Klein-Gordon equations}.
\end{proof}

For global well-posedness results of the Schr\"odinger-Klein-Gordon system without UV cutoff, i.e., \eqref{eq: Nelson Schroedinger-Klein-Gordon system 2} with $j=1$, $m=1$ and $\Lambda= \infty$, we refer to \cite{colliander_2008,pecher}.

In order to see that the effective equations are indeed non-trivial and to make the connection to the Coulomb potential, it is instructive to write them explicitly with physical constants. For $m=0$, fermion mass $m_F > 0$, and for $\Lambda = \infty$, \eqref{eq: Nelson Schroedinger-Klein-Gordon system 2} is
\begin{align}
\begin{split}
i(N^{-1/3} \hbar) \partial_t \varphi_j^t(x) &= \left[ -\frac{(N^{-1/3}\hbar)^2}{2m_F} \Delta + \Phi(x,t) \right] \varphi_j^t(x), \\
\left[ \frac{1}{c^2} \partial_t^2 - (N^{-1/3}\delta_N)^2\Delta \right] \Phi(x,t) &= - (N^{-1/3}\delta_N) \frac{e^2}{\varepsilon_0} N^{-1} \rho^t(x).
\end{split}
\end{align}
For $\delta_N=N^{1/3}$ and in the limit $c\to\infty$ this becomes the Poisson equation with solution $\Phi(x,t) = - N^{-1} \frac{e^2}{4\pi \varepsilon_0} (|\cdot|^{-1} * \rho^t)(x)$. Finally, note that in \eqref{effective_eqs} one can write the equation for $\alpha^t(k)$ in integral form and plug it into the equations for the electrons. For $m=0$, $m_F > 0$, and $\Lambda = \infty$, this yields
\begin{align}\label{Hartree_SI_units}
&i (N^{-1/3} \hbar) \partial_t\varphi_j^t(x) \nonumber\\
&\quad = \Bigg[-\frac{(N^{-1/3}\hbar)^2}{2m_F} \Delta + \frac{c\hbar e}{\sqrt{\varepsilon_0}} \, \vphc^{\mathrm{free}}(x,t) \nonumber\\
&\quad\qquad - N^{-2/3}\delta_N^{-1} \frac{e^2}{4\pi \varepsilon_0} \int d^3y\, \frac{1}{|x-y|} \rho^{t-c^{-1}\delta_N^{-1}N^{1/3}|x-y|}(y) \id_{c^{-1}\delta_N^{-1}N^{1/3}|x-y| \leq t} \Bigg] \varphi_j^t(x),
\end{align}
where $\vphc^{\mathrm{free}}(x,t) = e^{-i c \delta_N N^{-1/3} |\nabla| t} \vphc(x,0)$. For $\vphc(x,0)=0$, $\delta_N=N^{1/3}$, and in the formal limit $c\to \infty$, this becomes the Hartree equation with attractive mean-field Coulomb potential.

\section{Main Result}
\label{section: Nelson Main result}

As mentioned above, our goal is to show that $\Psi_{N,t} \approx \bigwedge_{j=1}^N \varphi_j^t \otimes W( N^{2/3} \alpha^t) \Omega$ holds during the time evolution. In the following, this will be proven in the trace-norm distance of reduced density matrices. Let us introduce the number operator
\begin{align}
\label{eq: Nelson number operator definition}
\mathcal{N} \coloneqq \int d^3k \, a^*(k) a(k)
\end{align}
with domain
\begin{align}
\label{eq: Nelson number operator domain}
\mathcal{D}(\mathcal{N}) = \Big\{ \Psi_N \in \mathcal{H}^{(N)} : \sum_{n=1}^{\infty}  n^2 \int d^{3N}x \, d^{3n}k \,  \abs{\Psi_N^{(n)}(X_N, K_n)}^2  < \infty   \Big\}.
\end{align}
Moreover, we choose $\norm{\Psi_{N,0}}=1$ and $\Psi_{N,0} \in \mathcal{H}^{(N)} \cap \mathcal{D}(\mathcal{N}) \cap \mathcal{D}(\mathcal{N}H_N)$ (note that for the definition of the reduced density matrix below we only need $\Psi_{N,0} \in \mathcal{H}^{(N)} \cap \mathcal{D}(\mathcal{N}^{1/2})$). By unitarity also $\norm{\Psi_{N,t}}=1$ and the following lemma holds.

\begin{lemma}\label{solution_th_Schr}
Let $\Psi_{N,0} \in \mathcal{H}^{(N)} \cap \mathcal{D}(\mathcal{N}) \cap \mathcal{D}(\mathcal{N}H_N)$ and let $\Psi_{N,t}$ be the solution to \eqref{eq:Schroedinger_equation_microscopic} with initial condition $\Psi_{N,0}$. Then also $\Psi_{N,t} \in \mathcal{H}^{(N)} \cap \mathcal{D}(\mathcal{N}) \cap \mathcal{D}(\mathcal{N}H_N)$ for all $t\in[0,\infty)$.
\end{lemma}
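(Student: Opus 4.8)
The plan is to show that the subspace $\mathcal{D}(\mathcal{N}) \cap \mathcal{D}(\mathcal{N}H_N)$ is invariant under the flow $e^{-iN^{-1/3}H_N t}$. The natural strategy is to work with suitably regularized quantities whose time derivatives can be controlled by a Grönwall argument, and then remove the regularization. Concretely, for $\Lambda_0 > 0$ introduce the cutoff number operator $\mathcal{N}_{\leq\Lambda_0} = \int_{|k|\leq\Lambda_0} d^3k\, a^*(k)a(k)$, which is bounded on each $n$-particle sector, and more importantly the resolvent-regularized operators $\mathcal{N}(1+\epsilon\mathcal{N})^{-1}$, which are bounded and commute with $\mathcal{N}$. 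The key point is that $\mathcal{D}(\mathcal{N}) \cap \mathcal{D}(\mathcal{N}H_N)$ is a core-type domain on which all the formal manipulations below can be justified, and the initial datum lies there by assumption.

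First I would establish propagation of $\mathcal{D}(\mathcal{N})$: for $\Psi_{N,t} = e^{-iN^{-1/3}H_N t}\Psi_{N,0}$ with $\Psi_{N,0}\in\mathcal{D}(\mathcal{N})$, compute
\begin{align}
\frac{d}{dt}\SCP{\Psi_{N,t}}{\mathcal{N}(1+\epsilon\mathcal{N})^{-1}\Psi_{N,t}} = -iN^{-1/3}\SCP{\Psi_{N,t}}{[H_N,\mathcal{N}(1+\epsilon\mathcal{N})^{-1}]\Psi_{N,t}}. \nonumber
\end{align}
Since $\mathcal{N}$ commutes with $-\Delta_j$ and with $H_f$, the commutator only sees the interaction term $\sum_j\vph(x_j)$; the creation/annihilation parts of $\vph$ shift the photon number by one, so $[\vph(x_j),\mathcal{N}]$ is, up to the regularization, again a field operator of the same type, and one obtains a bound of the form $C\,\SCP{\Psi_{N,t}}{(\mathcal{N}+1)\Psi_{N,t}}$ uniformly in $\epsilon$ using $\pm(a(\eta e^{ikx})+\text{h.c.})\leq \delta\mathcal{N}+C_\delta\|\eta/\sqrt{\omega}\|^2$-type estimates and the UV cutoff (so that $\tilde\eta,\tilde\eta/\sqrt\omega\in L^2$). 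Grönwall then gives $\SCP{\Psi_{N,t}}{\mathcal{N}(1+\epsilon\mathcal{N})^{-1}\Psi_{N,t}}\leq e^{Ct}\SCP{\Psi_{N,0}}{(\mathcal{N}+1)\Psi_{N,0}}$, and letting $\epsilon\to0$ by monotone convergence shows $\Psi_{N,t}\in\mathcal{D}(\mathcal{N}^{1/2})$ with a quantitative bound; a second iteration with $\mathcal{N}^2(1+\epsilon\mathcal{N})^{-2}$ upgrades this to $\Psi_{N,t}\in\mathcal{D}(\mathcal{N})$.

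Next I would propagate $\mathcal{D}(\mathcal{N}H_N)$. Since $H_N$ commutes with its own evolution, $H_N\Psi_{N,t} = e^{-iN^{-1/3}H_N t}H_N\Psi_{N,0}$, so it suffices to know that the vector $\Phi_{N,t}:=H_N\Psi_{N,t}$ stays in $\mathcal{D}(\mathcal{N})$; but $\Phi_{N,0}=H_N\Psi_{N,0}\in\mathcal{D}(\mathcal{N})$ by hypothesis, and $\Phi_{N,t}$ solves the same Schrödinger equation \eqref{eq:Schroedinger_equation_microscopic}, so the first part applied to $\Phi_{N,0}$ in place of $\Psi_{N,0}$ gives $\Phi_{N,t}\in\mathcal{D}(\mathcal{N})$, i.e.\ $\Psi_{N,t}\in\mathcal{D}(\mathcal{N}H_N)$. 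One must also check $\Psi_{N,t}\in\mathcal{D}(H_N)$, which is automatic since $\mathcal{D}(H_N)$ is preserved by $e^{-iN^{-1/3}H_N t}$ and $\mathcal{D}(\mathcal{N}H_N)\subset\mathcal{D}(H_N)$.

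The main obstacle is the rigorous justification of differentiating $\SCP{\Psi_{N,t}}{\mathcal{N}(1+\epsilon\mathcal{N})^{-1}\Psi_{N,t}}$ and of the commutator computation: $H_N$ is unbounded and the field operators are unbounded, so one cannot naively move operators around. The clean way is to first carry out the argument on a dense invariant core (e.g.\ finite-photon-number vectors with smooth compactly supported electron wave functions, or vectors in $\mathcal{D}(\mathcal{N}^k)\cap\mathcal{D}(H_N^k)$ for large $k$), where everything is finite, and then extend the resulting a priori inequality to all of $\mathcal{D}(\mathcal{N})\cap\mathcal{D}(\mathcal{N}H_N)$ by a density/approximation argument using the already-established bound that $\|(\mathcal{N}+1)^{1/2}\Psi_{N,t}\|$ depends continuously on $\|(\mathcal{N}+1)^{1/2}\Psi_{N,0}\|$. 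Since this is a standard (if technical) propagation-of-regularity argument for the Nelson model with UV cutoff, I would either reference the analogous estimates in \cite{nelson,spohn} or relegate the bookkeeping to an appendix.
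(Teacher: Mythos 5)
Your outline is essentially correct, but note that the paper does not prove this lemma at all: its ``proof'' is a citation to Falconi's thesis/paper on the classical limit of the cutoff Nelson model and to Appendix~2.11 of Leopold's thesis, where precisely the kind of argument you sketch (Gr\"onwall estimates on regularized moments of $\mathcal{N}$, using that only the interaction term fails to commute with $\mathcal{N}$ and that $\tilde{\eta}\in L^2$ thanks to the ultraviolet cutoff) is carried out. So you are supplying the argument the authors outsource, and your strategy matches the one in the cited references. One point deserves emphasis in your reduction for $\mathcal{D}(\mathcal{N}H_N)$: the vector $\Phi_{N,t}=H_N\Psi_{N,t}=e^{-iN^{-1/3}H_Nt}H_N\Psi_{N,0}$ has initial datum $H_N\Psi_{N,0}$ which is only known to lie in $\mathcal{D}(\mathcal{N})$, not in $\mathcal{D}(H_N)$ (that would require $\Psi_{N,0}\in\mathcal{D}(H_N^2)$, which is not assumed), so you cannot literally rerun the differential inequality for $\Phi_{N,t}$; what you must do, and what your final paragraph implicitly provides, is first prove the quantitative bound $\|(\mathcal{N}+1)e^{-iN^{-1/3}H_Nt}\Phi\|\leq Ce^{Ct}\|(\mathcal{N}+1)\Phi\|$ on a core that is dense in the graph norm of $\mathcal{N}$ and contained in $\mathcal{D}(H_N)$ (e.g.\ finite-photon-number vectors with $H^2$ electron parts), and then extend it to all of $\mathcal{D}(\mathcal{N})$ by density and closedness of $\mathcal{N}$, after which it applies directly to $\Phi=H_N\Psi_{N,0}$. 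With that ordering made explicit (and the cosmetic remark that the relevant smallness estimate for the interaction uses $\|\tilde{\eta}\|_2$ rather than $\|\tilde{\eta}/\sqrt{\omega}\|_2$, both finite here), your proposal is a sound, self-contained substitute for the citation.
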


\begin{proof}
A proof has been given before in \cite{falconi2,falconi} and \cite[Appendix~2.11]{leopold2}.
\end{proof}

For $k\in\NNN$, we define the $k$-particle reduced density matrices of the fermions (as operators on $L^2(\mathbb{R}^{3k})$) by
\begin{align}
\label{eq: Nelson definition reduced one-particle matrix charged particles}
\gamma_{N,t}^{(k,0)} \coloneqq \tr_{k+1,\ldots, N} \tr_{\mathcal{F}_s} \ket{\Psi_{N,t}} \bra{\Psi_{N,t}},
\end{align}
where $\tr_{k+1,\ldots, N}$  denotes the partial trace over the coordinates $x_{k+1},\ldots, x_N$ and $\tr_{\mathcal{F}_s}$ the trace over Fock space. Additionally, we consider on $L^2(\mathbb{R}^3)$ the one-particle reduced density matrix of the bosons with kernel
\begin{align}
\label{eq: Nelson definition reduced one-particle matrix photon}
\gamma_{N,t}^{(0,1)}(k,k') \coloneqq N^{-4/3}  \scp{\Psi_{N,t}}{ a^*(k')  a(k) \Psi_{N,t}}.
\end{align}
The operator $\gamma_{N,t}^{(0,1)}$ is trace class with $\tr\, \gamma_{N,t}^{(0,1)} = N^{-4/3} \scp{\Psi_{N,t}}{\mathcal{N} \Psi_{N,t}}$. It is worth noting that \eqref{eq: Nelson definition reduced one-particle matrix photon} differs from the usual definition $\scp{\Psi_{N,t}}{\mathcal{N} \Psi_{N,t}}^{-1}  \scp{\Psi_{N,t}}{ a^*(k')  a(k) \Psi_{N,t}}$, which has trace one. In our choice we only measure deviations from the classical mode function that are at least of order $N^{4/3}$. This is important if one starts initially with no bosons and examines the one-particle reduced density matrix after short times when only a few bosons have been created. Then,  the state of the bosons might not be coherent and the usual definition of the one-particle reduced density matrix may not converge to the classical mode function. However, such mismatches are not important for the dynamics (and hence neglected in our definition) because the field operator is rescaled by a factor of $N^{-2/3}$, see \eqref{eq:Schroedinger_equation_microscopic2}.

Let us now state the main result of this article. We summarize the conditions on our initial data in the following assumption. We denote the trace norm of an operator $A$ by $\norm{A}_{\tr} := \tr \abs{A}$.

\begin{assumption}\label{main_assumption}
We have $\alpha^0 \in L_1^2(\mathbb{R}^3)$ and $\varphi_1^0, \ldots, \varphi_N^0 \in H^2(\mathbb{R}^3)$ orthonormal and such that
\begin{align}\label{eq: theorem semiclassical structure}
\norm{p^0 e^{ikx} q^0}_{\tr} \leq C (1 + \abs{k}) N^{2/3}~\forall k\in\RRR^3
\quad \text{and} \quad
\norm{p^0 \nabla q^0}_{\tr} \leq C N
\end{align}
for some $C>0$, where $p^t = \sum_{j=1}^N \ketbr{\varphi_j^t}$ and $q^t=1-p^t$ for any $t\in\RRR$ (see also Definition~\ref{def_of_projs}). Moreover, $\Psi_{N,0} \in \mathcal{H}^{(N)} \cap \mathcal{D}(\mathcal{N}) \cap \mathcal{D}(\mathcal{N}H_N)$ with $\norm{\Psi_{N,0}} =1$.
\end{assumption}

Our main theorem is the following.
\begin{theorem}\label{theorem: main theorem}
Let Assumption~\ref{main_assumption} hold and let $\Psi_{N,t}$ be the solution to \eqref{eq:Schroedinger_equation_microscopic} with initial condition $\Psi_{N,0}$, and $\varphi_1^t,\ldots,\varphi_N^t,\alpha^t$ the solution to \eqref{effective_eqs} with initial condition $\varphi_1^0,\ldots,\varphi_N^0,\alpha^0$. We define
\begin{align}
a_N &= \norm{\gamma_{N,0}^{(1,0)} - N^{-1}p^0}_{\tr},
\\
b_N &= N^{1/3} \, \tr \big( \gamma_{N,0}^{(2,0)}   q^0 \otimes q^0 \big),
\\
c_N &= N^{-1} \scp{W^{-1}(N^{2/3} \alpha^0) \Psi_{N,0}}{ \mathcal{N} W^{-1}( N^{2/3} \alpha^0) \Psi_{N,0}}.
\end{align}
Then there exists $C>0$ (independent of $N$, $\delta_N$, $\Lambda$, and $t$) such that for any $t \geq 0$,
\begin{align}\label{eq: Nelson main theorem 1}
\norm{\gamma_{N,t}^{(1,0)} - N^{-1} p^t}_{\tr} &\leq
\sqrt{a_N + b_N + c_N + N^{-1}} \, e^{e^{C \Lambda^4 (1 + \|\alpha^0\|_2)(1 + t^2)}}.
\end{align}
If additionally $c_N \leq \tilde{C}N^{1/3}$ for some $\tilde{C}>0$, then
\begin{align}\label{eq: Nelson main theorem 2}
\norm{\gamma_{N,t}^{(0,1)} - \ket{\alpha^t} \bra{\alpha^t}}_{\tr} &\leq
\sqrt{ N^{-1/3} (a_N + b_N + c_N) + N^{-4/3}} \, e^{e^{C \Lambda^4 ( 1 + \|\alpha^0\|_2) ( 1 + t^2)}}.
\end{align}
In particular, for $\Psi_{N,0} = \bigwedge_{j=1}^N \varphi_j^0 \otimes W( N^{2/3} \alpha^0) \Omega$ we have $a_N=b_N=c_N=0$ and one obtains
\begin{align}\label{eq: Nelson main theorem 3}
\norm{\gamma_{N,t}^{(1,0)} -  N^{-1} p^t}_{\tr} &\leq
N^{-1/2}  e^{e^{C \Lambda^4 (1 + \|\alpha^0\|_2)(1 + t^2)}} , \\
\label{eq: Nelson main theorem 4}
\norm{\gamma_{N,t}^{(0,1)} - \ket{\alpha^t} \bra{\alpha^t}}_{\tr} &\leq
N^{-2/3} e^{e^{C \Lambda^4 (1 + \|\alpha^0\|_2)(1 + t^2)}}.
\end{align}
\end{theorem}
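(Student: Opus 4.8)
The natural strategy is a Grönwall argument on a suitably chosen ``counting'' functional that simultaneously controls the deviation of the fermionic state from the Slater determinant $\bigwedge_j \varphi_j^t$ and the deviation of the field from the coherent state $W(N^{2/3}\alpha^t)\Omega$. Following the template of the fermionic mean-field literature (\cite{benedikter_2013,petrat_2015}) combined with the coherent-state method for bosonic fields (\cite{leopold,leopold3,falconi}), I would define something like
\begin{align}\label{eq:functional_def}
\beta(t) \coloneqq \alpha_f(t) + \alpha_b(t), \quad
\alpha_f(t) \coloneqq \tr\big(\gamma_{N,t}^{(1,0)} q^t\big), \quad
\alpha_b(t) \coloneqq N^{-4/3}\,\big\langle \xi_{N,t}, \mathcal{N}\,\xi_{N,t}\big\rangle,
\end{align}
where $\xi_{N,t} \coloneqq W^{-1}(N^{2/3}\alpha^t)\,\Psi_{N,t}$ is the state in the interaction picture with the classical field subtracted, and $q^t$ acts on a distinguished fermionic coordinate. (One may need a second fermionic weight, e.g.\ involving $\tr(\gamma_{N,t}^{(2,0)} q^t\otimes q^t)$, to close the estimate, which is why $b_N$ appears in the statement.) The point of the Weyl conjugation is that $\Phi_\Lambda(x_j)$ acting on $\Psi_{N,t}$ becomes $\Phi_\Lambda(x_j) + \Phi^{\mathrm{cl}}(x_j,t)$ acting on $\xi_{N,t}$, so the classical field from \eqref{effective_eqs} is exactly what gets generated, and the genuinely quantum remainder is what $\alpha_b$ measures.

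The core of the argument is to differentiate $\beta(t)$ in time and bound $\frac{d}{dt}\beta(t)$ by a ($t$- and $\Lambda$-dependent) constant times $\beta(t)$ plus lower-order terms of size $O(N^{-1})$. For the fermionic part this is the standard computation: the time derivative of $\tr(\gamma_{N,t}^{(1,0)}q^t)$ produces a commutator term $[\,\Phi_\Lambda(x_1)+\Phi^{\mathrm{cl}}(x_1,t)\,,\,q^t\,]$-type expression; the classical piece is handled exactly as in \cite{benedikter_2013} using the semiclassical bounds $\|p^t e^{ikx}q^t\|_{\tr}\lesssim (1+|k|)N^{2/3}$ and $\|p^t\nabla q^t\|_{\tr}\lesssim N$ — these must be propagated in time, which is a separate lemma relying on the $H^2$-regularity from Proposition~\ref{theorem: solution theory} and a Grönwall estimate on the commutator norms. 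The quantum remainder piece couples the fermionic and bosonic functionals: it is controlled by Cauchy–Schwarz in the Fock variable, trading a factor of the smeared density (order $N$, hence the $N^{-4/3}$ normalization in $\gamma^{(0,1)}$) against $\mathcal{N}^{1/2}$, which is precisely $\sqrt{\alpha_b}$. For the bosonic part, $\frac{d}{dt}\alpha_b(t)$ is computed from the equation for $\xi_{N,t}$; since the classical field solves \eqref{effective_eqs} with source $N^{-1}\mathcal F[\rho^t]$, the leading source term cancels, leaving a difference $\frac1N\sum_j e^{ikx_j} - \mathcal F[\rho^t](k)$ which is exactly $\tr(\gamma_{N,t}^{(1,0)}e^{ikx}) - \tr(N^{-1}p^t e^{ikx})$ up to smearing — this is bounded by $\sqrt{\alpha_f}$ after an $\int d^3k\,|\tilde\eta(k)|$ estimate (finite thanks to the UV cutoff, contributing powers of $\Lambda$). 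Collecting, one gets $\frac{d}{dt}\beta \lesssim C(\Lambda,\|\alpha^0\|_2,t)\,(\beta + N^{-1})$, Grönwall gives $\beta(t) \lesssim (\beta(0)+N^{-1})\,e^{\int_0^t C}$, and since the propagated-semiclassical-bound constants themselves grow like $e^{C\Lambda^4 t}$, the iterated exponential $e^{e^{C\Lambda^4(1+\|\alpha^0\|_2)(1+t^2)}}$ emerges. Finally, one translates $\beta(t)$ back to the trace-norm statements: the standard inequality $\|\gamma_{N,t}^{(1,0)} - N^{-1}p^t\|_{\tr}^2 \lesssim \tr(\gamma_{N,t}^{(1,0)}q^t) + N^{-1}$ (valid for projections $p^t$ of rank $N$, cf.\ \cite{petrat_2015}) gives \eqref{eq: Nelson main theorem 1}, and for the bosons $\|\gamma_{N,t}^{(0,1)} - |\alpha^t\rangle\langle\alpha^t|\|_{\tr}^2 \lesssim N^{-1/3}\alpha_b(t) + N^{-4/3}$ follows from expanding $a^*(k')a(k)$ after Weyl conjugation and Cauchy–Schwarz, provided $c_N \lesssim N^{1/3}$ so that $\|\xi_{N,t}\|$-type cross terms stay controlled. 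Matching $\beta(0)$ to $a_N+b_N+c_N$ via the definitions of $\gamma_{N,0}^{(1,0)}, \gamma_{N,0}^{(2,0)}$ and $\xi_{N,0}$ closes the proof; the special case $\Psi_{N,0}=\bigwedge_j\varphi_j^0\otimes W(N^{2/3}\alpha^0)\Omega$ gives $\beta(0)=0$ immediately.

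\textbf{Main obstacle.} The delicate point is the coupling between the fermionic and bosonic sectors in the Grönwall estimate: the interaction term $\sum_j \Phi_\Lambda(x_j)$ is, after Weyl conjugation, linear in creation/annihilation operators with an $x_j$-dependent coefficient, and one needs to bound expressions like $\langle\Psi_{N,t}, \sum_j a(h_{x_j})\, q_1^t\,\Psi_{N,t}\rangle$ by $\sqrt{\alpha_f}\sqrt{\alpha_b}$ (with the $N$-powers working out) \emph{uniformly in the cutoff-dependent constants}. This requires carefully exploiting the antisymmetry of the fermions (so that $\sum_j$ contributes the right combinatorial factor rather than a naive factor $N$) together with the $N^{-2/3}$ scaling of the field operator, much as the $\varepsilon_N$-rescaled form \eqref{eq:Schroedinger_equation_microscopic2} suggests. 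A secondary difficulty — more bookkeeping than conceptual — is propagating the two semiclassical bounds in \eqref{eq: theorem semiclassical structure} forward in time with constants that are \emph{explicitly} controlled in $\Lambda$, $\|\alpha^0\|_2$, and $t$, since the field $\Phi^{\mathrm{cl}}(\cdot,t)$ appearing in the $\varphi_j^t$-equation must itself be bounded in $H^2$ or $W^{1,\infty}$ uniformly, and it is this nested Grönwall that is responsible for the double-exponential rate.
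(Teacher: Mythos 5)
Your overall strategy is the one the paper follows: a Gronwall estimate on a combined fermion/boson counting functional, closed with the help of time-propagated semiclassical bounds, and translated into trace norms at the end; your $\alpha_f(t)$ is exactly the paper's $\beta^{a,1}(t)=\scp{\Psi_{N,t}}{q_1^t\Psi_{N,t}}$, and your Weyl-conjugated $\alpha_b(t)$ coincides, up to normalization, with the paper's $\beta^b(t)$. The normalization, however, is not cosmetic. You set $\alpha_b(t)=N^{-4/3}\scp{\xi_{N,t}}{\mathcal{N}\xi_{N,t}}=\int d^3k\,\norm{(N^{-2/3}a(k)-\alpha^t(k))\Psi_{N,t}}^2$, whereas the Gronwall argument must be run with $\beta^b(t)=N^{-1}\scp{\xi_{N,t}}{\mathcal{N}\xi_{N,t}}=N^{1/3}\alpha_b(t)$, which is also the quantity that matches $c_N$ at $t=0$. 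The interaction term in $\frac{d}{dt}\beta^{a,1}$ is bounded (after Cauchy--Schwarz in $k$, the antisymmetry lemma, and the propagated bound $\norm{q^te^{ikx}p^t}_{\HS}^2\lesssim(1+\abs{k})N^{2/3}e^{C(t)}$ with $C(t)=C\Lambda^4(1+\norm{\alpha^0}_2)(1+t^2)$) by $e^{C(t)}\sqrt{\beta^b(t)}\sqrt{\beta^{a,1}(t)+N^{-1}}$, with the powers of $N$ exactly balanced; with your weight this reads $e^{C(t)}N^{1/6}\sqrt{\alpha_b}\sqrt{\alpha_f+N^{-1}}$, and no application of Young's inequality absorbs the loose $N^{1/6}$ into $C(\alpha_f+\alpha_b+N^{-1})$, so the Gronwall does not close at the rate $N^{-1}$ claimed in the theorem. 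The $N^{-4/3}$-weighted quantity is the right one only at the very end, when passing to $\gamma_{N,t}^{(0,1)}$ (this is where the extra hypothesis $c_N\leq\tilde C N^{1/3}$ is used).

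The second point that would fail as written is your treatment of the leading (mean-field) term in $\frac{d}{dt}\alpha_b$. You claim that, after the cancellation with the classical source, the remainder is ``exactly $\tr(\gamma_{N,t}^{(1,0)}e^{ikx})-\tr(N^{-1}p^te^{ikx})$ up to smearing'' and hence bounded by $\sqrt{\alpha_f}$. This reduction is not available: the other slot of the inner product carries the fluctuation vector $(N^{-2/3}a(k)-\alpha^t(k))\Psi_{N,t}$ rather than $\Psi_{N,t}$, so the term is not the expectation of a one-body observable, and after Cauchy--Schwarz one must bound the vector norm $\big\| \big(\sum_{i=1}^N p_i^te^{-ikx_i}p_i^t-\int d^3y\,e^{-iky}\rho^t(y)\big)\Psi_{N,t}\big\|$. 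The paper handles this with a diagonalization trick: writing $p^t\cos(kx)p^t=\sum_j\lambda_j^{t,k}\ketbr{\chi_j^{t,k}}$ with $\abs{\lambda_j^{t,k}}\leq1$ and $\sum_j\lambda_j^{t,k}=\int d^3y\,\cos(ky)\rho^t(y)$, the difference becomes $-\sum_j\lambda_j^{t,k}\mbq^{\chi_j^{t,k}}$, and since the $\mbq^{\chi_j^{t,k}}$ are commuting projectors one gets the bound $\sqrt{N\beta^{a,1}(t)+N^{5/3}\beta^{a,2}(t)}$ (same for the $\sin$ part and for the $qq$-term). This is precisely the step that forces the second fermionic functional $\beta^{a,2}(t)=N^{1/3}\scp{\Psi_{N,t}}{q_1^tq_2^t\Psi_{N,t}}$ with that specific weight, and hence the appearance of $b_N$; your parenthetical remark anticipates that such a functional may be needed, but not the mechanism, and a naive bound on $\sum_i p_i^te^{-ikx_i}p_i^t$ minus its trace loses a factor of order $N$. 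With these two calibrations corrected, your sketch matches the paper's proof (propagation of \eqref{eq: theorem semiclassical structure}, the derivative estimates for $\beta^{a,1},\beta^{a,2},\beta^b$, Gronwall, and the trace-norm translation lemma), including the origin of the double exponential.
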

The theorem is proved in Section~\ref{sec_main_proof}.

\begin{remark}
In \cite{erdoes:2004} and \cite{benedikter_2013} a similar limit was considered for fermions that interact by means of a pair potential. From these works we learned the importance of the semiclassical structure. The most related works from a technical point of view are \cite{petrat_2015} and \cite{falconi, leopold, leopold3}.
\end{remark}

\begin{remark}\label{remark_sc_struc}
For initial states without semiclassical structure, i.e., without assuming \eqref{eq: theorem semiclassical structure}, the result only holds true for times of order $N^{-1/3}$. More precisely, Equations~\eqref{eq: Nelson main theorem 1}--\eqref{eq: Nelson main theorem 4} hold with the double exponential replaced by $e^{C(\Lambda,\norm{\alpha^0})N^{1/3}t}$. 

The first inequality in \eqref{eq: theorem semiclassical structure} means that the kernel $p^0(x,y)$ is localized around a distance smaller than of order $N^{-1/3}$ around the diagonal $x=y$. The second inequality means that the density varies on scales of order one. In fact, these conditions should imply that the time evolution of $p^0$ (or, say, its Wigner transform) is close to a classical evolution equation, which here is the Vlasov equation. This has indeed been shown in the two-body interaction case, let us refer to \cite{benedikter:2015} and references therein. Note also that for simple cases like plane waves in a box of volume of order one, \eqref{eq: theorem semiclassical structure} indeed holds, see \cite{porta:2016}. For a more thorough discussion of these conditions we refer to \cite{benedikter_2013,porta:2016}.
\end{remark}

\begin{remark}
Let us give a bit more intuition about $c_N$. We first note that the Weyl operator, defined in \eqref{eq: Nelson Weyl operator}, is unitary, and thus $W^{-1}(f) = W^*(f) = W(-f)$. One of its well-known properties (see, e.g., \cite{rodnianskischlein} for a nice exposition) is 
\begin{equation}
W^*(f) a(k) W(f) = a(k) + f(k) , \quad  W^*(f) a^*(k) W(f) = a^*(k) + \overline{f(k)}.
\end{equation}
With that at hand we can write $c_N$ as
\begin{align}\label{beta_b_c}
c_N &= N^{1/3} \int d^3k\, \norm{N^{-2/3} a(k) W^{-1}(N^{2/3} \alpha^0) \Psi_{N,0}}^2 \nonumber\\
&= N^{1/3} \int d^3k\, \norm{\left( N^{-2/3} a(k) - \alpha^0(k) \right) \Psi_{N,0}}^2,
\end{align}
from which it might become more clear that $c_N$ measures the initial deviations around the classical radiation field $\alpha^0$.
\end{remark}

In the case of $\delta_N = N^{1/3 - \epsilon}$ with $\epsilon > 0$, the group velocity of the bosons is of lower order than the average speed of the electrons.
This implies that the electrons effectively experience a stationary scalar field and  evolve according to 
\begin{align}
\label{eq: free evolution}
N^{-1/3} i \partial_t \varphi_j^t(x)
&= \left(  -  N^{-2/3}  \Delta + \vphc(x,0) \right) \varphi_j^t(x) 
\quad \text{for} \; j=1, \ldots, N .
\end{align}
The precise statement is the following.
\begin{theorem}
\label{theorem: free evolution}
Let Assumption \ref{main_assumption} hold, let $(\varphi_1^t, \ldots, \varphi_N^t, \alpha^t)$ be the solution to \eqref{effective_eqs} with initial condition $(\varphi_1^0, \ldots, \varphi_N^0, \alpha^0)$ and let $(\el_1^t, \ldots, \el_N^t)$ be the solution to \eqref{eq: free evolution}  with initial condition $(\varphi_1^0, \ldots, \varphi_N^0)$.
We define $\tilde{p}^t = \sum_{j=1}^N \ketbr{\el_j^t}$ and $p^t$ as in Assumption \ref{main_assumption}.
Then there exists $C >0$ (independent of $N$, $\delta_N$, $\Lambda$, and $t$) such that 
\begin{align}\label{eq:free_dyn_to_Hartree}
N^{-1} \norm{ p^t - \tilde{p}^t}_{\tr} 
&\leq  N^{-1/3} \delta_N e^{C \Lambda^4 (1 + \norm{\alpha^0}_2) ( 1 + t^2)} .
\end{align}
Furthermore, let $\Psi_{N,t}$ be the solution to \eqref{eq:Schroedinger_equation_microscopic} with initial condition $\Psi_{N,0}$, and let $a_N$, $b_N $ and $c_N$ be defined as in Theorem~\ref{theorem: main theorem}. Then there exists $C>0$ (independent of $N$, $\delta_N$, $\Lambda$, and $t$) such that for all $t \geq 0$,
\begin{align}\label{eq:free_dyn_to_Schr}
\norm{\gamma_{N,t}^{(1,0)} - N^{-1} \tilde{p}^t}_{\tr} &\leq
\Big(  N^{-1/3} \delta_N   +
\sqrt{a_N + b_N + c_N + N^{-1}} \Big)  \, e^{e^{C \Lambda^4 (1 + \|\alpha^0\|_2)(1 + t^2)}}.
\end{align}
\end{theorem}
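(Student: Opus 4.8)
The plan is to establish \eqref{eq:free_dyn_to_Hartree} first and then read off \eqref{eq:free_dyn_to_Schr}. The second statement is immediate from the first together with Theorem~\ref{theorem: main theorem}: by the triangle inequality
\[
\norm{\gamma_{N,t}^{(1,0)} - N^{-1}\tilde p^t}_{\tr} \le \norm{\gamma_{N,t}^{(1,0)} - N^{-1}p^t}_{\tr} + N^{-1}\norm{p^t - \tilde p^t}_{\tr},
\]
the first term is controlled by \eqref{eq: Nelson main theorem 1} and the second by \eqref{eq:free_dyn_to_Hartree}, and one absorbs the single exponential of \eqref{eq:free_dyn_to_Hartree} into the double exponential of \eqref{eq: Nelson main theorem 1} using the elementary inequality $e^{x}\le e^{e^{x}}$. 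So the whole point is \eqref{eq:free_dyn_to_Hartree}, i.e.\ the comparison of the Schr\"odinger--Klein--Gordon flow \eqref{effective_eqs} with the frozen-field flow \eqref{eq: free evolution}.

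For \eqref{eq:free_dyn_to_Hartree} I would argue by a Duhamel expansion at the level of the one-particle dynamics. Since both flows start from $\varphi_1^0,\dots,\varphi_N^0$ we have $p^0=\tilde p^0$; writing $h^t := -N^{-2/3}\Delta+\vphc(\cdot,t)$ and $\tilde h := -N^{-2/3}\Delta+\vphc(\cdot,0)$, equations \eqref{effective_eqs} and \eqref{eq: free evolution} give $N^{-1/3}i\,\partial_t p^t=[h^t,p^t]$ and $N^{-1/3}i\,\partial_t\tilde p^t=[\tilde h,\tilde p^t]$, and since the Laplacians coincide,
\[
i\,\partial_t\big(p^t-\tilde p^t\big) = N^{1/3}\big[\tilde h,\,p^t-\tilde p^t\big] + N^{1/3}\big[\vphc(\cdot,t)-\vphc(\cdot,0),\,p^t\big].
\]
Conjugating by the unitary propagator $e^{-iN^{1/3}\tilde h s}$ of the frozen Hamiltonian removes the first commutator, so Duhamel's formula and the unitary invariance of the trace norm yield
\[
\norm{p^t-\tilde p^t}_{\tr} \le N^{1/3}\int_0^t \norm{\big[\vphc(\cdot,s)-\vphc(\cdot,0),\,p^s\big]}_{\tr}\,ds .
\]
Note that no further Gr\"onwall step is needed here -- this is why \eqref{eq:free_dyn_to_Hartree} carries only a single exponential, in contrast to \eqref{eq: Nelson main theorem 1} -- so it remains to bound the integrand.

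To estimate the integrand I would decompose $\vphc(\cdot,s)-\vphc(\cdot,0)=\int d^3k\,v_s(k)\,e^{ikx}$ into plane-wave multiplication operators, with $v_s$ supported in $\abs{k}\le\Lambda$ by the ultraviolet cutoff in $\tilde\eta$, so that
\[
\norm{\big[\vphc(\cdot,s)-\vphc(\cdot,0),\,p^s\big]}_{\tr} \le \int d^3k\,\abs{v_s(k)}\,\norm{[e^{ikx},p^s]}_{\tr}.
\]
The factor $\norm{[e^{ikx},p^s]}_{\tr}\le C(1+\abs{k})\,N^{2/3}\,e^{C\Lambda^4(1+\norm{\alpha^0}_2)(1+s^2)}$ is exactly the propagation along \eqref{effective_eqs} of the semiclassical structure assumed in Assumption~\ref{main_assumption}; this is carried out inside the proof of Theorem~\ref{theorem: main theorem} (it is what produces its inner exponential), so I may quote it. The new ingredient is the \textbf{almost-frozen-field estimate}
\[
\int d^3k\,\abs{v_s(k)}\,(1+\abs{k}) \le C\,N^{-1/3}\delta_N\,\Lambda^4\,(1+\norm{\alpha^0}_2)(1+s^2),
\]
which is the precise form of the statement that for $\delta_N\ll N^{1/3}$ the bosons are too slow to change the potential. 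The point is that $\alpha^s-\alpha^0$ must not be estimated term by term -- its source contribution from \eqref{effective_eqs} is only of order $s$. Instead one writes $\alpha^s$ as the free evolution $e^{-iN^{-1/3}\delta_N\omega s}\alpha^0$ plus the Duhamel source term: the free part contributes to $v_s(k)$ a term bounded by $N^{-1/3}\delta_N\,\omega(k)\,s\,\abs{\tilde\eta(k)}\big(\abs{\alpha^0(k)}+\abs{\alpha^0(-k)}\big)$ via $\abs{e^{i\theta}-1}\le\abs{\theta}$, whereas in the source part the two summands of $\vphc(x,s)=\int d^3k\,\tilde\eta(k)\big(e^{ikx}\alpha^s(k)+e^{-ikx}\overline{\alpha^s(k)}\big)$ combine, after taking their relative phase into account, into a factor $\sin\!\big(N^{-1/3}\delta_N\omega(k)(s-s')\big)$ under the time integral (equivalently, $\partial_t\vphc(\cdot,0)$ is itself of order $N^{-1/3}\delta_N$ because the order-one source contributions to $\partial_t\alpha^0$ cancel between these two summands, and $\partial_t^2\vphc$ is governed by the Klein--Gordon equation \eqref{eq: Nelson Schroedinger-Klein-Gordon system 2}). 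Using $\abs{\sin\theta}\le\abs{\theta}$, the conservation $\int d^3x\,\rho^s(x)=N$ (orthonormality is preserved by Proposition~\ref{theorem: solution theory}), and the elementary weighted bounds on $\tilde\eta$ (all finite and polynomial in $\Lambda$ because of the cutoff) then gives the displayed inequality. Feeding the two estimates into the Duhamel bound, integrating in $s$ and absorbing the remaining polynomial in $t$ and $\Lambda\ge1$ into the exponential, one arrives at $\norm{p^t-\tilde p^t}_{\tr}\le C\,N^{2/3}\delta_N\,e^{C\Lambda^4(1+\norm{\alpha^0}_2)(1+t^2)}$, which is \eqref{eq:free_dyn_to_Hartree}.

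The hard part is the almost-frozen-field estimate: one has to place the smallness $N^{-1/3}\delta_N$ in the right spot by exploiting the oscillatory structure of \eqref{effective_eqs}, since a naive bound on $\alpha^s-\alpha^0$ fails. Everything else -- the propagation of the semiclassical structure, the a priori control of $\norm{\alpha^s}_2$ and $\norm{\vphc(\cdot,s)}_\infty$, and the kernel estimates for $\tilde\eta$ -- is already part of the proof of Theorem~\ref{theorem: main theorem} and can be reused here.
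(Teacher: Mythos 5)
Your proposal is correct and follows essentially the same route as the paper's proof in Appendix~\ref{section:convergence_free evolution}: Duhamel with respect to the frozen-field propagator, then the Duhamel representation of $\alpha^s$ which turns the source contribution to $\vphc(\cdot,s)-\vphc(\cdot,0)$ into a $\sin\big(N^{-1/3}\delta_N\omega(k)(s-u)\big)$ factor, the bounds $\abs{e^{i\theta}-1}\leq 2\abs{\theta}$ and $\abs{\sin\theta}\leq\abs{\theta}$ combined with the propagated semiclassical trace-norm estimate on $[e^{ikx},p^s]$ and $\norm{\rho^u}_1=N$, and finally the triangle inequality together with \eqref{eq: Nelson main theorem 1} for \eqref{eq:free_dyn_to_Schr}. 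No gaps.
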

The theorem is proved in Appendix~\ref{section:convergence_free evolution}.

\section{Structure of the Proof}

In order to prove Theorem \ref{theorem: main theorem}, it is important to define and control the right macroscopic variables. For that, we adapt techniques that are based on the method from \cite{pickl1} and that were further developed in \cite{petrat_2015,leopold,leopold3}. In addition, it is crucial to find the right measure for the correlations between the electrons and to consider only initial states with semiclassical structure.
The key idea of the proof is to define a suitable functional $\beta(\Psi_N,\varphi_1, \ldots, \varphi_N, \alpha)$ which measures if the fermions are close to an antisymmetrized product state $\bigwedge_{j=1}^N \varphi_j$ with $\varphi_1, \ldots, \varphi_N$ orthonormal and if the state of the radiation field is approximately coherent.
To this end, we introduce the following operators.
\begin{definition}\label{def_of_projs}
For $N \in \mathbb{N}$, $m, j \in \{1,2, \ldots, N \}$ and $\varphi_1, \ldots, \varphi_N \in L^2(\mathbb{R}^3)$ orthonormal we define the projectors
$p_m^{\varphi_j}: L^2(\mathbb{R}^{3N}) \rightarrow  L^2(\mathbb{R}^{3N})$ by
\begin{align}
p_m^{\varphi_j} f(x_1, \ldots, x_N)
&\coloneqq  \varphi_j(x_m) \int d^3x_m \, \overline{\varphi_j(x_m)} f(x_1, \ldots, x_N) 
\quad  \forall \, f  \in L^2(\mathbb{R}^{3N}).
\end{align}
Moreover, we define the projectors
$p_m^{\varphi_1, \ldots, \varphi_N}: L^2(\mathbb{R}^{3N}) \rightarrow  L^2(\mathbb{R}^{3N})$ and
$q_m^{\varphi_1, \ldots, \varphi_N}: L^2(\mathbb{R}^{3N}) \rightarrow  L^2(\mathbb{R}^{3N})$  by
\begin{align} 
p_m^{\varphi_1, \ldots, \varphi_N} &\coloneqq \sum_{j=1}^N p_m^{\varphi_j}
\quad \text{and} \quad
q_m^{\varphi_1, \ldots, \varphi_N} \coloneqq 1 - p_m^{\varphi_1, \ldots, \varphi_N}.
\end{align}
\end{definition}

\noindent
The correlations between the electrons are controlled by means of two functionals.

\begin{definition}
Let $N \in \mathbb{N}$, $\varphi_1, \ldots, \varphi_N \in L^2(\mathbb{R}^3)$ orthonormal and $\Psi_{N} \in  \mathcal{H}^{(N)}$. Then, $\beta^{a,1}: \mathcal{H}^{(N)} \times  L^2(\mathbb{R}^3)^{N} \rightarrow [0, \infty)$ and 
$\beta^{a,2}: \mathcal{H}^{(N)} \times  L^2(\mathbb{R}^3)^N  \rightarrow [0, \infty)$ are given by 
\begin{align}
\beta^{a,1}(\Psi_{N}, \varphi_1, \ldots, \varphi_N) &\coloneqq
\scp{\Psi_{N}}{q_1^{\varphi_1, \ldots, \varphi_N} \otimes \id_{\mathcal{F}_s}  \Psi_{N,t}} 
\quad \text{and}
\\
\beta^{a,2}(\Psi_{N}, \varphi_1, \ldots, \varphi_N) &\coloneqq
 N^{1/3} \scp{\Psi_{N}}{q_1^{\varphi_1, \ldots, \varphi_N} q_2^{\varphi_1, \ldots, \varphi_N} \otimes \id_{\mathcal{F}_s}  \Psi_{N}}.
\end{align}
\end{definition}
We note that $\beta^{a,1}(\Psi_{N}, \varphi_1, \ldots, \varphi_N)$ corresponds to the expectation value of the relative number of fermions outside the antisymmetric product $\bigwedge_{j=1}^N\varphi_j$ (i.e., the number of excitations around the state $\bigwedge_{j=1}^N\varphi_j$ divided by $N$). The functional $N^{-1/3}\beta^{a,2}(\Psi_{N}, \varphi_1, \ldots, \varphi_N)$ corresponds (up to a small error) to the expectation value of the square of this number. More details about the technical relevance of $\beta^{a,2}$ are given at the beginning of Section~\ref{sec: Estimates on the time derivative}. 

In order to determine whether the state of the radiation field is coherent, we define $\beta^b$, which measures the fluctuations of the field modes around the complex function $\alpha$.
\begin{definition}
Let $\alpha \in L^2(\mathbb{R}^3)$ and $\Psi_{N} \in \mathcal{H}^{(N)} \cap \mathcal{D}\left(\mathcal{N} \right)$. Then $\beta^b: \mathcal{H}^{(N)} \cap \mathcal{D}\left(\mathcal{N} \right) \times L^2(\mathbb{R}^3) \rightarrow [0,\infty)$ is given by
\begin{align}
\beta^b\left( \Psi_{N}, \alpha \right) 
\coloneqq N^{1/3} \int d^3k \, \SCP{\left(N^{-2/3} a(k) - \alpha(k) \right) \Psi_{N}}{ \left( N^{-2/3} a(k) - \alpha(k) \right) \Psi_{N}}.
\end{align} 
\end{definition}
Note that $\beta^b(\Psi_{N,0},\alpha^0) = c_N$ as we showed in \eqref{beta_b_c}. Let us also remark that when $\Psi_{N,t}$ is a solution to \eqref{eq:Schroedinger_equation_microscopic} and $\varphi_1^t,\ldots,\varphi_N^t,\alpha^t$ a solution to \eqref{effective_eqs}, then the functional $\beta^b\left( \Psi_{N,t}, \alpha^t \right)$ coincides (up to scaling) with the one used in the coherent states approach, see, e.g., \cite[Chapter~3]{schleinbook}. Finally, the functional $\beta$ is defined by
\begin{definition}
Let $N \in \mathbb{N}$, $\varphi_1, \ldots, \varphi_N \in L^2(\mathbb{R}^3)$ orthonormal, $\alpha \in L^2(\mathbb{R}^3)$ and $\Psi_{N} \in \mathcal{H}^{(N)} \cap \mathcal{D}\left(\mathcal{N} \right)$. Then $\beta: \mathcal{H}^{(N)} \cap \mathcal{D}\left(\mathcal{N} \right) \times L^2(\mathbb{R}^3)^N \times L^2(\mathbb{R}^3) \rightarrow [0, \infty)$ is 
defined by
\begin{align}
\beta \left( \Psi_{N}, \varphi_1, \ldots, \varphi_N, \alpha \right) 
&\coloneqq \beta^{a,1}(\Psi_{N}, \varphi_1, \ldots, \varphi_N)
+ \beta^{a,2}(\Psi_{N}, \varphi_1, \ldots, \varphi_N)
+  \beta^b\left( \Psi_{N}, \alpha \right).
\end{align}
\end{definition}

\noindent
In the following, we are interested in the value of $\beta \left( \Psi_{N,t}, \varphi_1^t, \ldots, \varphi_N^t, \alpha^t \right) $, where $(\varphi_1^t, \ldots, \varphi_N^t, \alpha^t)$ is a solution to the Schr\"odinger-Klein-Gordon equations \eqref{effective_eqs} and $\Psi_{N,t}$ evolves according to the Schr\"odinger equation \eqref{eq:Schroedinger_equation_microscopic}.  In this case, we apply the shorthand notations $\beta(t)$, $\beta^{a,1}(t)$, $\beta^{a,2}(t)$ and $\beta^b(t)$. Moreover, we use the abbreviations $p_m^{\varphi_1^t, \ldots, \varphi_N^t} = p_m^t$ , $q_m^{\varphi_1^t, \ldots, \varphi_N^t} = q_m^t $ and write $p_m^{\varphi_j^t}$ occasionally as $\ketbr{\varphi_j^t}_m$.

For the proof of Theorem~\ref{theorem: main theorem} we pursue the following strategy.

\begin{itemize}

\item[A)] We choose initial data $(\varphi_1^0, \ldots, \varphi_N^0, \alpha^0)$ of the Schr\"odinger-Klein-Gordon system \eqref{effective_eqs} and a many-body wave function $\Psi_{N,0}$ that satisfy our Assumption~\ref{main_assumption}. Proposition~\ref{theorem: solution theory} and Lemma~\ref{solution_th_Schr} make sure that the solutions at any time $t\geq 0$ are regular enough, and in Section~\ref{section_semiclassical} we show that the solutions still have the semiclassical structure.

\item[B)] After that, we control the change of $\beta(t)$ in time. For this, we use the semiclassical structure to estimate $\abs{\frac{d}{dt} \beta(t)} \leq e^{Ct} \left( \beta(t) + N^{-1} \right)$ for some $C>0$ at each time $t \geq 0$. Gronwall's lemma then yields $\beta(t) \leq e^{e^{Ct}} \left( \beta(0) + N^{-1} \right)$.

\item[C)] Finally, we relate the initial states of Theorem \ref{theorem: main theorem} and the trace norm convergence of the reduced density matrices to $\beta(t)$.

\end{itemize}

\begin{notation}
In the rest of this article the letter $C$ denotes a generic positive constant and its value might change from line to line for notational convenience.
\end{notation}

\section{Semiclassical Structure}\label{section_semiclassical}

We first prove that the semiclassical structure from Equation~\eqref{eq: theorem semiclassical structure} can be propagated in time. The Hilbert-Schmidt norm of an operator $A$ is denoted by $\norm{A}_{\HS} := \sqrt{\tr \, A^*A}$.

\begin{lemma}
\label{lemma: semiclassical structure}
Let $(\varphi_1^0, \ldots, \varphi_N^0, \alpha^0) \in H^2(\mathbb{R}^3)^N \times L_1^2(\mathbb{R}^3)$ with orthonormal $\varphi_1^0,\ldots,\varphi_N^0$ and let $(\varphi_1^t, \ldots, \varphi_N^t, \alpha^t)$ be solutions of the equations \eqref{effective_eqs}. We assume that
\begin{align}
\norm{p^0e^{ikx}q^0}_{\tr} \leq \tilde{C} (1+\abs{k}) N^{2/3}
\end{align}
for all $k\in\RRR^3$ and
\begin{align}
\norm{p^0\nabla q^0}_{\tr} \leq \tilde{C} N
\end{align}
for some $\tilde{C}>0$. Then there exists some $C>0$ (independent of $N$, $\Lambda$ and $t$) such that
\begin{align}
\label{eq: semiclassical structure}
\norm{p^te^{ikx}q^t}_{\HS}^2 \leq \norm{p^te^{ikx}q^t}_{\tr} \leq 2\tilde{C} (1+\abs{k}) N^{2/3} e^{C \Lambda^4  ( 1 + \norm{\alpha^0}_{2}) (1 + t^2)}
\end{align}
for all $k\in\RRR^3$ and
\begin{align}
\norm{p^t\nabla q^t}_{\tr} \leq 2\tilde{C} N e^{C \Lambda^4 ( 1 + \norm{\alpha^0}_{2}) (1 + t^2)}
\end{align}
for all $t\in \RRR$.
\end{lemma}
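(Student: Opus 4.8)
The plan is to reduce the statement to the uniform-in-time propagation of the two commutators $[e^{ikx},p^t]$ and $[\nabla,p^t]$ and to control their trace norms by a Duhamel--Gronwall argument. Differentiating $p^t=\sum_j\ketbr{\varphi_j^t}$ along \eqref{effective_eqs} shows that $p^t$ solves the Liouville-type equation $iN^{-1/3}\partial_t p^t=[h^t,p^t]$ with $h^t=-N^{-2/3}\Delta+\vphc(\cdot,t)$; since $\vphc(\cdot,t)$ is bounded (ultraviolet cutoff), $h^t$ generates a strongly continuous family of unitaries $U^t$ with $iN^{-1/3}\partial_t U^t=h^tU^t$, $U^0=\id$. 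For $A$ equal to $e^{ikx}$ or a component of $\nabla$, the Jacobi identity gives $iN^{-1/3}\partial_t[A,p^t]=[h^t,[A,p^t]]+[[A,h^t],p^t]$, so conjugating by $U^t$ and integrating yields
\[
\norm{[A,p^t]}_{\tr}\leq\norm{[A,p^0]}_{\tr}+N^{1/3}\int_0^t\norm{[[A,h^s],p^s]}_{\tr}\,ds .
\]
Because $\norm{p^te^{ikx}q^t}_{\HS}^2\leq\norm{p^te^{ikx}q^t}_{\tr}\norm{p^te^{ikx}q^t}_{\mathrm{op}}\leq\norm{p^te^{ikx}q^t}_{\tr}$, and by the identities $[e^{ikx},p^t]=q^te^{ikx}p^t-p^te^{ikx}q^t$, $[\nabla,p^t]=q^t\nabla p^t-p^t\nabla q^t$ together with $\norm{q^t\,\cdot\,p^t}_{\tr}=\norm{(p^t\,\cdot\,q^t)^{*}}_{\tr}$, the quantities appearing in \eqref{eq: semiclassical structure} are comparable to $\norm{[e^{ikx},p^t]}_{\tr}$ and $\norm{[\nabla,p^t]}_{\tr}$, so it suffices to run the Gronwall argument for these.

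Next I split $[[A,h^s],p^s]$ according to $h^s=-N^{-2/3}\Delta+\vphc(\cdot,s)$. For $A=e^{ikx}$ the potential part drops out, since $e^{ikx}$ and $\vphc(\cdot,s)$ are multiplication operators and $[e^{ikx},\vphc(\cdot,s)]=0$. For $A=\nabla$ the kinetic part drops out, $[[\nabla,-N^{-2/3}\Delta],p^s]=0$, and one is left with $[\nabla\vphc(\cdot,s),p^s]$, where $\nabla\vphc(x,s)=\int d^3k\,ik\,\tilde\eta(k)\big(e^{ikx}\alpha^s(k)-e^{-ikx}\overline{\alpha^s(k)}\big)$ acts by multiplication. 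Expanding the commutator under the $k$-integral, using $\norm{[e^{\pm ikx},p^s]}_{\tr}\leq 2(1+\abs{k})N^{2/3}\sup_{k'}\frac{\norm{[e^{ik'x},p^s]}_{\tr}}{(1+\abs{k'})N^{2/3}}$ and Cauchy--Schwarz, one gets a bound by $C N^{2/3}\norm{\alpha^s}_2\big(\int_{\abs{k}\leq\Lambda}\abs{k}^2(1+\abs{k})^2\abs{\tilde\eta(k)}^2\,d^3k\big)^{1/2}$ times that supremum, and the $\tilde\eta$-integral is of order $\Lambda^6$ by the cutoff, contributing a factor $\Lambda^3$. In parallel I would propagate $\norm{\alpha^t}_2$: differentiating $\norm{\alpha^t}_2^2$ along \eqref{effective_eqs} the dispersive term drops, leaving $\tfrac{d}{dt}\norm{\alpha^t}_2\leq N^{-1}(2\pi)^{3/2}\norm{\tilde\eta\,\mathcal F[\rho^t]}_2\leq C N^{-1}\norm{\mathcal F[\rho^t]}_\infty\norm{\tilde\eta}_2\leq C\Lambda$, using $\norm{\mathcal F[\rho^t]}_\infty\leq(2\pi)^{-3/2}\norm{\rho^t}_1=(2\pi)^{-3/2}N$ and $\norm{\tilde\eta}_2\leq C\Lambda$; hence $\norm{\alpha^t}_2\leq\norm{\alpha^0}_2+C\Lambda t$.

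The technical heart, and the step I expect to be the main obstacle, is the kinetic commutator $[[e^{ikx},-N^{-2/3}\Delta],p^s]$, where the unbounded Laplacian must be tamed. A short computation writes $[e^{ikx},-N^{-2/3}\Delta]$ as the first-order operator $2iN^{-2/3}e^{ikx}(k\cdot\nabla)-N^{-2/3}\abs{k}^2e^{ikx}$ (equivalently the anticommutator $iN^{-1/3}\{e^{ikx},N^{-1/3}k\cdot\nabla\}$); applying the derivation property of $[\,\cdot\,,p^s]$ produces, besides a harmless term of size $\lesssim N^{-1/3}\abs{k}\norm{[\nabla,p^s]}_{\tr}$, an anticommutator of $[e^{ikx},p^s]$ with $N^{-1/3}k\cdot\nabla$. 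I would then decompose $\nabla=p^s\nabla p^s+q^s\nabla q^s+(\text{off-diagonal})$ and use the algebraic relations $(p^s)^2=p^s$, $(q^s)^2=q^s$, $p^sq^s=0$, $p^s\nabla q^s=-[\nabla,p^s]q^s$, $q^s\nabla p^s=q^s[\nabla,p^s]$ to show that the genuinely dangerous blocks $q^s\nabla q^s$ (unbounded) and $p^s\nabla p^s$ (as large as the kinetic energy, hence of order $N$) cancel pairwise between terms related by taking adjoints and $k\mapsto-k$, so that every surviving term is off-diagonal with respect to $p^s$ and is controlled by $N^{-1/3}\abs{k}\big(\norm{p^s\nabla q^s}_{\tr}+(1+\abs{k})\norm{p^se^{ikx}q^s}_{\tr}\big)$. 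For $\abs{k}\geq N^{1/3}$ nothing more is needed: the trivial bound $\norm{p^te^{ikx}q^t}_{\tr}\leq\norm{p^t}_{\tr}=N\leq(1+\abs{k})N^{2/3}$ already gives the claim; so one may restrict to $\abs{k}\leq N^{1/3}$, where $N^{-1/3}\abs{k}^2\leq\abs{k}$ and the estimate closes. This is precisely the step that uses the semiclassical structure of the hypothesis in an essential way — without it one only recovers the weaker bound of Remark~\ref{remark_sc_struc} — and it mirrors the propagation of the semiclassical structure in \cite{benedikter_2013} and the commutator estimates of \cite{petrat_2015,leopold,pickl1}.

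Finally I would assemble the pieces. Writing $M_s=\sup_{0<\abs{k}\leq N^{1/3}}\frac{\norm{p^se^{ikx}q^s}_{\tr}}{(1+\abs{k})N^{2/3}}$ and $T_s=\frac{\norm{p^s\nabla q^s}_{\tr}}{N}$, the two Duhamel inequalities become $M_t\leq M_0+C\int_0^t(M_s+T_s)\,ds$ and $T_t\leq T_0+C\Lambda^3\int_0^t\norm{\alpha^s}_2\,M_s\,ds$. Inserting $\norm{\alpha^s}_2\leq\norm{\alpha^0}_2+C\Lambda s$ and adding, $f_t:=M_t+T_t$ satisfies $f_t\leq f_0+\int_0^t g(s)f_s\,ds$ with $g(s)\leq C\Lambda^4(1+\norm{\alpha^0}_2)(1+s)$, so Gronwall gives $f_t\leq f_0\,e^{\int_0^t g}\leq 2\tilde C\,e^{C\Lambda^4(1+\norm{\alpha^0}_2)(1+t^2)}$, using $M_0,T_0\leq\tilde C$ by hypothesis and $\int_0^t(1+s)\,ds\leq 1+t^2$. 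Unwinding the definitions of $M_t,T_t$ yields the asserted trace-norm bounds, and the inequality $\norm{p^te^{ikx}q^t}_{\HS}^2\leq\norm{p^te^{ikx}q^t}_{\tr}$ noted above gives the first inequality in \eqref{eq: semiclassical structure}.
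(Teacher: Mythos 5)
Your overall architecture (Duhamel along the effective flow, then a Gronwall argument coupling $\sup_k (1+|k|)^{-1}N^{-2/3}\norm{p^se^{ikx}q^s}_{\tr}$ with $N^{-1}\norm{p^s\nabla q^s}_{\tr}$), your treatment of $[\nabla,p^t]$ via the Fourier expansion of $\nabla\vphc$, and your bound $\norm{\alpha^t}_2\le\norm{\alpha^0}_2+C\Lambda t$ all agree with the paper. But there is a genuine gap exactly at the step you yourself flag as the main obstacle: the kinetic commutator. After writing $[[e^{ikx},-N^{-2/3}\Delta],p^s]=iN^{-2/3}k\cdot\big(\{e^{ikx},[\nabla,p^s]\}+\{\nabla,[e^{ikx},p^s]\}\big)$, the first anticommutator is harmless, but in the second the diagonal blocks do \emph{not} cancel. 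Writing $[e^{ikx},p]=qe^{ikx}p-pe^{ikx}q$ and inserting $1=p+q$ next to each $\nabla$, the dangerous terms are
\begin{align}
q\nabla q\,e^{ikx}p \;-\; p\nabla p\,e^{ikx}q \;+\; qe^{ikx}p\,\nabla p \;-\; pe^{ikx}q\,\nabla q .
\end{align}
The relations you invoke ($T\mapsto T^*$ combined with $k\mapsto -k$) map the first term to the fourth and the second to the third, but they are isometries of the trace norm, not cancellations: the four operators act between different ranges ($p\to q$ versus $q\to p$) and their sum is, up to bounded multiplications, precisely $q\{\nabla,e^{ikx}\}p-p\{\nabla,e^{ikx}\}q$ restored — i.e.\ the very object you set out to bound. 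Each of these terms has trace norm of order $N\cdot N^{1/3}$ (kinetic energy per particle $\sim N^{2/3}$, so $\norm{\nabla\varphi_j}_2\sim N^{1/3}$), which overshoots the target $(1+|k|)N^{2/3}$, and restricting to $|k|\le N^{1/3}$ does not help since the obstruction is independent of $k$.

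The paper's resolution is structurally different at this point and is the one you need: it does not conjugate $q^te^{ikx}p^t$ by a single propagator. Instead it keeps the full first-order terms $\nabla\,(q^te^{ikx}p^t)$ and $(q^te^{ikx}p^t)\,\nabla$ (diagonal blocks included) and absorbs them into \emph{two different} generators $A_{\pm k}(t)=h^t\pm ik\cdot\nabla$, so that the dangerous part takes the form $A_{+k}(t)\,q^te^{ikx}p^t-q^te^{ikx}p^t\,A_{-k}(t)$ and is annihilated by conjugating with $U_{+k}^*(t;0)(\,\cdot\,)U_{-k}(t;0)$. What survives are only the genuinely off-diagonal remainders $(q^s\nabla p^s-p^s\nabla q^s)e^{ikx}p^s$ and $q^se^{ikx}(q^s\nabla p^s-p^s\nabla q^s)$, each controlled by $\norm{q^s\nabla p^s}_{\tr}$; this is where the semiclassical hypothesis enters, and it closes the Gronwall loop exactly as in your final paragraph. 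With that replacement your proof goes through; without it, the $e^{ikx}$ half of the Gronwall system is not established.
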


\begin{remark}
\label{rem: comm relation semiclassical structure}
We could formulate Lemma \ref{lemma: semiclassical structure} likewise in terms of $\norm{ \left[ p^t, e^{ikx} \right]}_{\tr} $ and $\norm{\left[ p^t, \nabla  \right]}_{\tr}$ as was done in \cite{benedikter_2013}, because
\begin{align}
\begin{split}
\norm{p^t e^{ikx} q^t}_{\tr} = \norm{\left[p^t, e^{ikx}\right] q^t}_{\tr} &\leq \norm{ \left[ p^t, e^{ikx} \right]}_{\tr} 
\leq \norm{p^t e^{ikx} q^t}_{\tr} + \norm{p^t e^{-ikx} q^t}_{\tr}, \\
\norm{p^t \nabla q^t}_{\tr} = \norm{\left[p^t, \nabla\right] q^t}_{\tr} &\leq \norm{\left[ p^t, \nabla \right]}_{\tr} 
\leq 2 \norm{p^t \nabla q^t}_{\tr}.
\end{split}
\end{align}
These inequalities hold since $p^t q^t = 0$, $\norm{AB}_{\tr} \leq \norm{A}\norm{B}_{\tr}$ and $\norm{BA}_{\tr} \leq \norm{A}\norm{B}_{\tr}$ for $A$ bounded and $B$ trace class, $\norm{q^t}=1$, and $\norm{B}_{\tr} = \norm{B^*}_{\tr}$ for $B$ trace class.
\end{remark}

\begin{proof}[Proof of Lemma \ref{lemma: semiclassical structure}]
The propagation of the semiclassical structure is shown in a similar way as in \cite[Section~5]{benedikter_2013}. Recall that due to Proposition~\ref{theorem: solution theory} the solution $(\varphi_1^t, \ldots, \varphi_N^t, \alpha^t)$ is in $\bigoplus_{n=1}^N H^2(\mathbb{R}^3) \oplus L_1^2(\mathbb{R}^3)$ and strongly continuous. If we define $h^t = - \Delta + N^{2/3} \vphc(\cdot,t)$, the time derivative of the projector $iN^{1/3}\partial_t p^t = [h^t,p^t]$. Then\footnote{Note that with an operator like $p^t \nabla$ we mean the trace class operator $\sum_{j=1}^N \ketbra{\varphi_j^t}{-\nabla\varphi_j^t}$, which is well-defined due to Proposition~\ref{theorem: solution theory}.}
\begin{align}
i N^{1/3} \partial_t \big( q^te^{ikx}p^t \big) = [h^t,q^t]e^{ikx}p^t + q^te^{ikx}[h^t,p^t] = [h^t,q^te^{ikx}p^t] - q^t[h^t,e^{ikx}]p^t.
\end{align}
From
\begin{align}
[h^t,e^{ikx}] = [- \Delta, e^{ikx}] &= - ik \Big(\nabla e^{ikx} + e^{ikx}\nabla \Big)
\end{align}
and using $p^t+ q^t = 1$, we conclude
\begin{align}\label{sc_time_der}
i N^{1/3} \partial_t \big(q^te^{ikx}p^t\big) &= [h^t,q^te^{ikx}p^t] +ik q^t\Big(\nabla e^{ikx} + e^{ikx}\nabla \Big)p^t \nonumber\\
&= [h^t,q^te^{ikx}p^t] + ik \nabla q^te^{ikx}p^t + ik q^te^{ikx}p^t \nabla \nonumber\\
&\quad + ik \Big( q^t\nabla p^t e^{ikx}p^t - p^t\nabla q^te^{ikx}p^t - q^te^{ikx} p^t\nabla q^t + q^te^{ikx}q^t\nabla p^t \Big) \nonumber\\
&= \Big(h^t + ik\nabla\Big) q^te^{ikx}p^t - q^te^{ikx}p^t \Big(h^t - ik\nabla\Big) \nonumber\\
&\quad + ik \Big( \big(q^t\nabla p^t - p^t\nabla q^t\big) e^{ikx}p^t + q^te^{ikx} \big(q^t\nabla p^t - p^t\nabla q^t\big) \Big).
\end{align}
Next, we define the time dependent self-adjoint operators
\begin{align}
A_{+k}(t) = h^t + ik \nabla 
\quad \text{and}   \quad
A_{-k}(t) = h^t - ik \nabla
\end{align}
and their respective unitary propagators $U_{+k}(t;s)$ and $U_{-k}(t;s)$. These are indeed well-defined, which follows from \cite[Theorem~X.71]{reedsimon} adapted to $H_0 = -\Delta \pm i\nabla k$, or, more conveniently, from \cite[Theorem~2.5]{mcgriesemer} and the fact that $\vphc(\cdot,t)$ is continuously differentiable in $L^{\infty}(\RRR^3)$, a direct consequence of Proposition~\ref{theorem: solution theory}. The unitary propagators (with rescaled time) satisfy
\begin{align}
i N^{1/3} \partial_t U_{+k}(t;s)\varphi = A_{+k}(t) U_{+k}(t;s) \varphi
\quad \text{and} \quad
i N^{1/3} \partial_t U_{-k}(t;s)\varphi = A_{-k}(t) U_{-k}(t;s) \varphi
\end{align}
for all $\varphi \in H^2(\RRR^3)$, with initial conditions $U_{+k}(s;s)= U_{-k}(s;s) = 1$.
This gives
\begin{align}
& i N^{1/3} \partial_t \big(U_{+k}^*(t;0) q^te^{ikx}p^t U_{-k}(t;0)\big) \nonumber\\
&\qquad = U_{+k}^*(t;0) \Big( - A_{+k}(t)  q^te^{ikx}p^t + q^te^{ikx}p^t A_{-k} + i N^{1/3} \partial_t \big( q^te^{ikx}p^t \big)\Big) U_{-k}(t;0) \nonumber\\
&\qquad = ik U_{+k}^*(t;0) \Big( \big(q^t\nabla p^t - p^t\nabla q^t\big)e^{ikx}p^t + q^te^{ikx}\big( q^t\nabla p^t \- p^t\nabla q^t \big) \Big)  U_{-k}(t;0),
\end{align}
which leads to 
\begin{align}
&U_{+k}^*(t;0) q^te^{ikx}p^t U_{-k}(t;0) = q^0e^{ikx}p^0 \nonumber\\
&\qquad + N^{-1/3} k \int_0^t ds \, U_{+k}^*(s;0) 
\bigg( \big(q^s\nabla p^s - p^s\nabla q^s\big) e^{ikx}p^s - q^se^{ikx}\big(p^s\nabla q^s - q^s\nabla p^s\big)\bigg) U_{-k}(s;0)
\end{align}
and thus
\begin{align}
&q^te^{ikx}p^t = U_{+k}(t;0) q^0e^{ikx}p^0 U_{-k}^*(t;0) \nonumber\\
&\qquad+ N^{-1/3} k \int_0^t ds \, U_{+k}(t;s) \bigg( \big(q^s\nabla p^s - p^s\nabla q^s\big) e^{ikx}p^s - q^se^{ikx}\big(p^s\nabla q^s - q^s\nabla p^s\big)\bigg) U_{-k}(s;t) .
\end{align}
For the trace norm, we then obtain the estimate
\begin{align}\label{a_tr_est}
\norm{q^te^{ikx}p^t}_{\tr}
&\leq \norm{q^0e^{ikx}p^0}_{\tr}
+ 4 N^{-1/3} (1+ \abs{k}) \int_0^t ds \, \norm{q^s\nabla p^s}_{\tr},
\end{align}
where we used that $\norm{AB}_{\tr} \leq \norm{A}\norm{B}_{\tr}$ and $\norm{BA}_{\tr} \leq \norm{A}\norm{B}_{\tr}$ for $A$ bounded and $B$ trace class.
Thus,
\begin{align}
\label{eq: semicl. comm. 1}
\sup_{k \in \mathbb{R}^3} \left((1 + \abs{k})^{-1} \norm{q^te^{ikx}p^t}_{\tr} \right)
&\leq \sup_{k \in \mathbb{R}^3} \left((1 + \abs{k})^{-1} \norm{q^0e^{ikx}p^0}_{\tr} \right) \nonumber\\
&\quad + 4 \int_0^t ds \, N^{-1/3} \norm{q^s\nabla p^s}_{\tr}.
\end{align}
In order to control the latter term, we calculate the time derivative of $q^t\nabla p^t$. We find 
\begin{align}
i N^{1/3} \partial_t \big(q^t\nabla p^t\big)
&= [h^t,q^t]\nabla p^t + q^t \nabla [h^t,p^t] \nonumber\\
&= [h^t,q^t\nabla p^t] - q^t [h^t,\nabla]p^t \nonumber\\
&= [h^t,q^t\nabla p^t] + N^{2/3} q^t (\nabla \vphc)(t) p^t.
\end{align}
In analogy to the previous calculation, we define the two-parameter group $U_h(t;s)$ satisfying
\begin{align}
i N^{1/3} \partial_t U_h(t;s)\varphi = h^t U_h(t;s) \varphi
\end{align}
for all $\varphi\in H^2(\RRR^3)$ and $U_h(s;s)=1$. Then, we calculate
\begin{align}
&i N^{1/3} \partial_t \big( U_h^*(t;0)q^t\nabla p^t U_h(t;0)\big) \nonumber\\
&\quad= U_h^*(t;0) \bigg( -h^t q^t\nabla p^t + q^t\nabla p^t h^t + i N^{1/3}\partial_t \big(q^t \nabla p_t\big) \bigg) U_h(t;0) \nonumber\\
&\quad= N^{2/3} U_h^*(t;0) q^t (\nabla \vphc)(t) p^t  U_h(t;0),
\end{align}
which implies
\begin{align}
q^t \nabla p^t &= U_h(t;0) q^0\nabla p^0 U^*_h(t;0) - i N^{1/3} \int_0^t ds \, U_h(t;s) \Big(q^s (\nabla \vphc)(s)p^s\Big) U_h(s,t).
\end{align}
Using the same inequalities as for \eqref{a_tr_est}, this leads to
\begin{align}
\norm{q^t\nabla p^t}_{\tr} &\leq \norm{q^0\nabla p^0}_{\tr} + N^{1/3}  \int_0^t ds \, \norm{q^s(\nabla\vphc)(s)p^s}_{\tr}.
\end{align}
By Lemma~\ref{lemma: growth of alpha-t}, which says that $\norm{\alpha^t}_2 \leq \norm{\alpha^0}_2 +   \norm{\tilde{\eta}}_2 \abs{t}$, we can estimate
\begin{align}
\norm{q^s(\nabla \vphc)(s)p^s}_{\tr} 
&= \norm{\int d^3k \, \tilde{\eta}(k) k \big( \alpha^s(k) q^se^{ikx}p^s - \overline{\alpha^s(k)} q^se^{-ikx}p^s \big)}_{\tr} 
\nonumber\\
&\leq \int d^3k \, \tilde{\eta}(k) \abs{k} \left( \abs{\alpha^s(k)} 
\norm{q^se^{ikx}p^s}_{\tr} + \abs{\alpha^s(k)} \norm{q^se^{-ikx}p^s}_{\tr} \right)
\nonumber\\
&\leq  2 \norm{(1 + \abs{\cdot})^2 \tilde{\eta}}_2 \norm{\alpha^s}_2   \sup_{k \in \mathbb{R}^3} \Big((1 + \abs{k})^{-1} \norm{q^se^{ikx}p^s}_{\tr} \Big)
\nonumber\\
&\leq 2 \norm{(1 +\abs{\cdot})^2 \tilde{\eta}}_2 \Big( \norm{\alpha^0}_2 + \norm{\tilde{\eta}}_2 \abs{s} \Big) \sup_{k \in \mathbb{R}^3} \Big((1 + \abs{k})^{-1} \norm{q^se^{ikx}p^s}_{\tr} \Big)
\end{align}
and obtain 
\begin{align}
N^{-1/3} &\norm{q^t\nabla p^t}_{\tr} 
\leq N^{-1/3} \norm{q^0\nabla p^0}_{\tr} \nonumber\\
&+ 2 \norm{(1 + \abs{\cdot})^2  \tilde{\eta}}_2 \int_0^t ds \,  \Big( \norm{\alpha^0}_2 + \norm{\tilde{\eta}}_2  \abs{s} \Big)
\sup_{k \in \mathbb{R}^3} \Big((1 + \abs{k})^{-1} \norm{q^se^{ikx}p^s}_{\tr} \Big).
\end{align}
Together with the estimate~\eqref{eq: semicl. comm. 1}, this gives
\begin{align}
&\sup_{k \in \mathbb{R}^3} \Big((1 + \abs{k})^{-1} \norm{q^te^{ikx}p^t}_{\tr} \Big) + N^{-1/3} \norm{q^t\nabla p^t}_{\tr} \nonumber \\
&\quad\leq \sup_{k \in \mathbb{R}^3} \Big((1 + \abs{k})^{-1} \norm{q^0e^{ikx}p^0}_{\tr} \Big) + N^{-1/3} \norm{q^0\nabla p^0}_{\tr}
\nonumber\\
&\qquad+ \int_0^t ds \, C(\Lambda,s,\norm{\alpha^0}_2) \left( \sup_{k \in \mathbb{R}^3} \Big((1 + \abs{k})^{-1} \norm{q^se^{ikx}p^s}_{\tr} \Big) + N^{-1/3} \norm{q^s\nabla p^s}_{\tr} \right),
\end{align}
where $C(\Lambda,s,\norm{\alpha^0}_2) \coloneqq 4 + 2 \norm{( 1 + \abs{\cdot})^2 \tilde{\eta}}_2 \big( \norm{\alpha^0}_2 + \norm{\tilde{\eta}}_2 \abs{s} \big)$. By means of Gronwall's lemma and the chosen initial conditions, we obtain
\begin{align}
&\sup_{k \in \mathbb{R}^3} \Big((1 + \abs{k})^{-1} \norm{q^te^{ikx}p^t}_{\tr} \Big) + N^{-1/3} \norm{q^t\nabla p^t}_{\tr}  
\nonumber\\
&\quad\leq 2\tilde{C} N^{2/3} \exp \Big[4 \abs{t} \big( 1 + \norm{( 1 + \abs{\cdot}^2) \tilde{\eta}}_2 \big( \norm{\alpha^0}_2 + \norm{\tilde{\eta}}_2  \abs{t} \big) \Big]
\nonumber \\
&\quad\leq 2\tilde{C} N^{2/3} \exp \Big[ C \Lambda^4 \big( 1 + \norm{\alpha^0}_2 \big) \big( 1 + t^2 \big)  \Big].
\end{align}
Finally, note that
\begin{equation}
\norm{p^te^{ikx}q^t}_{\HS}^2 = \norm{q^te^{-ikx}p^te^{ikx}q^t}_{\tr} \leq \norm{p^te^{ikx}q^t}_{\tr}.
\end{equation}
\end{proof}

\section{Estimates on the Time Derivative}
\label{sec: Estimates on the time derivative}

In this section we control the change of $\beta(t)$ in time by separately estimating the time derivatives of $\beta^{a,1}(t)$, $\beta^{a,2}(t)$ and $\beta^b(t)$. Note that the time derivative of $\beta^{a,1}(t)$ can be controlled in terms of $\beta^{a,1}(t)$ itself, $\beta^{b}(t)$, and an error of order $N^{-1}$. The time derivative of $\beta^{b}(t)$, however, is controlled in terms of $\beta^{a,1}(t)$, $\beta^{a,2}(t)$, $\beta^{b}(t)$ itself, and an error of order $N^{-1}$. This is why we also introduced $\beta^{a,2}(t)$. It allows us to close the Gronwall argument, since its time derivative can be bounded in terms of $\beta^{a,1}(t)$, $\beta^{a,2}(t)$ itself, $\beta^{b}(t)$, and an error of order $N^{-5/3}$. We first compute the corresponding time derivatives. Then, in the following subsections, we bound these expressions as explained above.
\begin{lemma}\label{lemma: time derivative of beta preparation}
Let $\alpha^0 \in L_1^2(\mathbb{R}^3)$, $\varphi_1^0, \ldots, \varphi_N^0 \in H^2(\mathbb{R}^3)$ orthonormal and $\Psi_{N,0} \in \mathcal{H}^{(N)} \cap \mathcal{D}(\mathcal{N}) \cap \mathcal{D}(\mathcal{N}H_N)$ with $\norm{\Psi_{N,0}} =1$. Let $\Psi_{N,t}$ be the solution to \eqref{eq:Schroedinger_equation_microscopic} with initial condition $\Psi_{N,0}$, and $\varphi_1^t,\ldots,\varphi_N^t,\alpha^t$ the solution to \eqref{effective_eqs} with initial condition $\varphi_1^0,\ldots,\varphi_N^0,\alpha^0$. Then
\begin{align}
\label{eq: time derivative of beta-a-1 preparation}
\frac{d}{dt} \beta^{a,1}(t) &= -2 N^{1/3}  \Im \, \scp{\Psi_{N,t}}{p_1^t\Big( N^{-2/3} \vph(x_1) - \vphc(x_1,t) \Big) q^t_1 \Psi_{N,t}},
\\
\label{eq: time derivative of beta-a-2 preparation}
\frac{d}{dt} \beta^{a,2}(t) &= -4 \Im N^{2/3}  \, \scp{\Psi_{N,t}}{p_1^t\Big( N^{-2/3} \vph(x_1) -  \vphc(x_1,t) \Big) q^t_1 q^t_2 \Psi_{N,t}},
\\
\begin{split}
\frac{d}{dt} \beta^b(t) &= 2 N^{-2/3} \int d^3k \, \tilde{\eta}(k) \Im \scp{\Big(N^{-2/3} a(k) - \alpha^t(k) \Big) \Psi_{N,t}}{N e^{-ikx_1} \Psi_{N,t}} \\
&\quad - 2 N^{-2/3} \int d^3k \, \tilde{\eta}(k) \Im \scp{\Big(N^{-2/3} a(k) - \alpha^t(k) \Big) \Psi_{N,t}}{ (2 \pi)^{3/2} \mathcal{F}[\rho^t](k) \Psi_{N,t}}.
\label{eq: time derivative of beta-b preparation}
\end{split}
\end{align}
\end{lemma}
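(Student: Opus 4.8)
The three identities all follow from one scheme, so the plan is to set it up once and then specialize. Since $\Psi_{N,0}\in\mathcal{H}^{(N)}\cap\mathcal{D}(\mathcal{N})\cap\mathcal{D}(\mathcal{N}H_N)$, Lemma~\ref{solution_th_Schr} guarantees that $\Psi_{N,t}$ stays in this space, so $\partial_t\Psi_{N,t}=-iN^{-1/3}H_N\Psi_{N,t}$ and all the unbounded-operator products below --- including those under the $d^3k$-integral in the $\beta^b$-case --- are well defined. First I would record the general formula
\[
\frac{d}{dt}\scp{\Psi_{N,t}}{A(t)\Psi_{N,t}}=\scp{\Psi_{N,t}}{\Big(iN^{-1/3}[H_N,A(t)]+\partial_tA(t)\Big)\Psi_{N,t}},
\]
valid for sufficiently regular $A(t)$, and then apply it with $A(t)=q_1^t$, with $A(t)=N^{1/3}q_1^tq_2^t$, and with $A(t)=N^{1/3}\int d^3k\,\big(N^{-2/3}a(k)-\alpha^t(k)\big)^*\big(N^{-2/3}a(k)-\alpha^t(k)\big)$, respectively. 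A second ingredient is that, differentiating the $\varphi_j^t$ via \eqref{effective_eqs}, the one-body projector $p^t=\sum_j\ketbr{\varphi_j^t}$ satisfies $i\partial_tp^t=[\tilde h^t,p^t]$ with $\tilde h^t=-N^{-1/3}\Delta+N^{1/3}\vphc(\cdot,t)$, so that $\partial_tq_m^t=-i[\tilde h_m^t,q_m^t]$, where $\tilde h_m^t$ denotes $\tilde h^t$ acting on the $m$-th electron coordinate.

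For $\beta^{a,1}(t)$ one has $[H_N,q_1^t]=[-\Delta_1+\vph(x_1),q_1^t]$, since $q_1^t$ commutes with $H_f$ and with everything attached to $x_2,\dots,x_N$. Adding $\partial_tq_1^t=-i[\tilde h_1^t,q_1^t]$ the kinetic terms cancel, leaving $iN^{1/3}\big[N^{-2/3}\vph(x_1)-\vphc(x_1,t),q_1^t\big]$; here I would use the elementary identity that for a self-adjoint $D$ and a projection $q=1-p$ one has $[D,q]=pDq-qDp$, hence $\scp{\Psi}{[D,q]\Psi}=2i\,\Im\scp{\Psi}{pDq\Psi}$, which produces \eqref{eq: time derivative of beta-a-1 preparation}. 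The computation for $\beta^{a,2}(t)$ is identical with the projection $q_1^tq_2^t$ in place of $q_1^t$: only the coordinate-$1$ and coordinate-$2$ parts of $H_N$ contribute, the kinetic terms cancel again, the two resulting commutator terms are equal by the antisymmetry of $\Psi_{N,t}$ under $x_1\leftrightarrow x_2$, and applying the identity above on the vector $q_2^t\Psi_{N,t}$ (pulling $q_2^t$ through the coordinate-$1$ operators) gives \eqref{eq: time derivative of beta-a-2 preparation}, with the prefactor $N^{1/3}$ carried along.

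For $\beta^b(t)$, abbreviate $b(k)=N^{-2/3}a(k)-\alpha^t(k)$. Since the Laplacians commute with $a,a^*$, $[H_N,b^*(k)b(k)]=\big[\sum_j\vph(x_j),b^*(k)b(k)\big]+\delta_N[H_f,b^*(k)b(k)]$, and using \eqref{eq: Nelson canonical commutation relation} together with $[H_f,a(k)]=-\omega(k)a(k)$ one gets $[H_f,b^*b]=N^{-2/3}\omega(k)\big(a^*(k)b(k)-b^*(k)a(k)\big)$ and $\big[\sum_j\vph(x_j),b^*b\big]=N^{-2/3}\tilde\eta(k)\sum_j\big(e^{ikx_j}b(k)-b^*(k)e^{-ikx_j}\big)$; moreover \eqref{effective_eqs} gives $\partial_t(b^*b)=-\overline{\partial_t\alpha^t(k)}\,b(k)-\partial_t\alpha^t(k)\,b^*(k)$ with $\partial_t\alpha^t(k)=-i\big(N^{-1/3}\delta_N\omega(k)\alpha^t(k)+N^{-1}(2\pi)^{3/2}\tilde\eta(k)\mathcal{F}[\rho^t](k)\big)$. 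After integrating $d^3k$, multiplying by $N^{1/3}$, and collecting: the $\delta_N\omega$-contribution of $\delta_N[H_f,b^*b]$ equals $-2\delta_N\omega(k)\,\Im\big(\overline{\alpha^t(k)}\scp{\Psi_{N,t}}{b(k)\Psi_{N,t}}\big)$ and is exactly cancelled by the $\delta_N\omega\,\alpha^t$-part of $\partial_t(b^*b)$ --- this cancellation is what makes the final estimate uniform in $\delta_N$; the $\sum_j\vph(x_j)$-commutator, after using antisymmetry to replace $\sum_je^{ikx_j}$ by $Ne^{ikx_1}$ and forming imaginary parts, gives the first line of \eqref{eq: time derivative of beta-b preparation}; and the $\mathcal{F}[\rho^t]$-part of $\partial_t(b^*b)$ gives the second line.

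I expect the main obstacle to be not the algebra but its justification: that the differentiation under the $d^3k$-integral is legitimate and that products such as $a(k)\Psi_{N,t}$ and $H_f\,b^*(k)b(k)\Psi_{N,t}$ make sense and are integrable in $k$ --- which is exactly why $\Psi_{N,0}\in\mathcal{D}(\mathcal{N})\cap\mathcal{D}(\mathcal{N}H_N)$ is assumed and propagated by Lemma~\ref{solution_th_Schr} --- together with the careful bookkeeping of the powers of $N$, the factors $i$, and the complex conjugates, on which the exact cancellation of the $\delta_N$-terms in the $\beta^b$-computation depends.
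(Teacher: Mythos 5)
Your proposal is correct and follows essentially the same route as the paper: differentiate the expectation values via the Heisenberg-type formula, use the effective equations for $\partial_t q_m^t$ and $\partial_t\alpha^t$ so that the kinetic/dispersion contributions cancel, and extract the imaginary parts (the paper's insertion of $1=p_1^t+q_1^t$ with the $q\cdots q$ term being real is exactly your identity $[D,q]=pDq-qDp$). The only cosmetic difference is that for $\beta^b$ you commute $H_N$ through $b^*(k)b(k)$ as a single observable, whereas the paper applies the product rule to $\norm{b(k)\Psi_{N,t}}^2$ and computes $[H_N,b(k)]$ alone; the resulting cancellation of the $\delta_N\omega$ terms is the same.
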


\begin{proof}
The functional $\beta^{a,1}(t)$ is time-dependent, because $\Psi_{N,t}$ and $(\varphi_1^t, \ldots, \varphi_N^t, \alpha^t)$ evolve according to \eqref{eq:Schroedinger_equation_microscopic} and \eqref{effective_eqs} respectively. The time derivative of the projector $q_m^t:=q_m^{\varphi_1^t, \ldots, \varphi_N^t}$ with $m \in \{1, \ldots,N \}$ is given by
\begin{align}
\frac{d}{dt} q_m^t = - i N^{-1/3} \Big[ h_m^t, q_m^t \Big]
\end{align}
where $h_m^t = - \Delta_m + N^{2/3} \vphc(x_m,t)$
is the effective Hamiltonian acting on the $m$-th variable.
This leads to 
\begin{align}\label{der_beta_a_1}
\frac{d}{dt} \beta^{a,1}(t) &= \frac{d}{dt} \scp{\Psi_{N,t}}{q^t_1 \Psi_{N,t}} \nonumber\\
&= i N^{-1/3} \scp{\Psi_{N,t}}{\Big[ H_N - h_1^t, q^t_1 \Big] \Psi_{N,t}} \nonumber\\
&= i N^{-1/3} \scp{\Psi_{N,t}}{\Big[ -\Delta_1 +  \vph(x_1) - h_1^t, q^t_1 \Big] \Psi_{N,t}} 
\nonumber\\
&= i N^{-1/3}  \scp{\Psi_{N,t}}{\Big[ \vph(x_1) - N^{2/3} \vphc(x_1,t), q^t_1 \Big] \Psi_{N,t}} 
\nonumber\\
&= -2 N^{1/3}  \Im \, \scp{\Psi_{N,t}}{\Big( N^{-2/3} \vph(x_1) - \vphc(x_1,t) \Big) q^t_1 \Psi_{N,t}},
\end{align}
where we used the self-adjointness of $\vph$, $\vphc$ and $q^t_1$ in the last step. Inserting $1=p^t_1+q^t_1$, we find
\begin{align}
\eqref{der_beta_a_1} &= -2 N^{1/3}  \Im \, \scp{\Psi_{N,t}}{(p_1^t+q^t_1)\Big( N^{-2/3} \vph(x_1) - \vphc(x_1,t) \Big) q^t_1 \Psi_{N,t}} \nonumber\\
&= -2 N^{1/3}  \Im \, \scp{\Psi_{N,t}}{p_1^t\Big( N^{-2/3} \vph(x_1) - \vphc(x_1,t) \Big) q^t_1 \Psi_{N,t}},
\end{align}
since the scalar product with the two $q^t_1$ projectors is real. Analogously, one derives
\begin{align}
\frac{d}{dt} \beta^{a,2}(t) &= N^{1/3} \frac{d}{dt} \scp{\Psi_{N,t}}{q^t_1 q^t_2 \Psi_{N,t}} \nonumber\\
&= -4 \Im \, \scp{\Psi_{N,t}}{p_1^t\Big( \vph(x_1) - N^{2/3} \vphc(x_1,t) \Big) q^t_1 q^t_2 \Psi_{N,t}}.
\end{align}
The time derivative of $\beta^b(t)$ is obtained by the following calculations. Note that the expressions in the calculations are all indeed well-defined, since the domain $\mathcal{D} \left( \mathcal{N} \right) \cap \mathcal{D} \left(  \mathcal{N}  H_N \right)$ is invariant under the time evolution, see Lemma~\ref{solution_th_Schr}. Then
\begin{align}
\frac{d}{dt} \beta^{b}(t) &= N^{1/3} \int d^3k \, \frac{d}{dt} \scp{ \Big( N^{-2/3} a(k) - \alpha^t(k) \Big) \Psi_{N,t}}{ \Big( N^{-2/3} a(k) - \alpha^t(k) \Big) \Psi_{N,t}}
\nonumber\\
&= -2 \int d^3k \, \Im  \scp{\Big( N^{-2/3} a(k) - \alpha^t(k) \Big) \Psi_{N,t}}{  \left[ H_N,   \Big( N^{-2/3} a(k) - \alpha^t(k) \Big)  \right] \Psi_{N,t}}
\nonumber\\
&\quad - 2 \int d^3k \, \Im  \scp{ \Big( N^{-2/3} a(k) - \alpha^t(k) \Big) \Psi_{N,t}}{ iN^{1/3}(\partial_t\alpha^t(k)) \Psi_{N,t}}.
\end{align}
For the commutator we find
\begin{align}
\Big[ H_N, \big( N^{-2/3} a(k) - \alpha^t(k) \big) \Big] &=  N^{-2/3} \delta_N \Big[ H_f,  a(k)\Big] + N^{-2/3}  \Big[\sum_{j=1}^N \vph(x_j) , a(k)\Big] \nonumber\\
&= - N^{-2/3} \Big( \delta_N \omega(k) a(k)+  \tilde{\eta}(k) \sum_{j=1}^N e^{-ikx_j} \Big) .
\end{align}
Using \eqref{effective_eqs}, it follows that
\begin{align}\label{beta_b}
\frac{d}{dt}\beta^{b}(t) &= 2 \int d^3k \, \Im \Bigg[ \scp{\Big( N^{-2/3} a(k) - \alpha^t(k) \Big)\Psi_{N,t}}{\delta_N \omega(k) \Big(  N^{-2/3} a(k) - \alpha^t(k) \Big) \Psi_{N,t}} 
\nonumber\\
&\quad + \scp{\Big( N^{-2/3} a(k) - \alpha^t(k) \Big)\Psi_{N,t}}{ \tilde{\eta}(k)N^{-2/3} \bigg( \sum_{j=1}^N e^{-ikx_j} - (2 \pi)^{3/2} \mathcal{F}\left[ \rho^t \right](k) \bigg) \Psi_{N,t}} \Bigg] \nonumber\\
&= 2 N^{-2/3}  \int d^3k \, \tilde{\eta}(k) \Im \scp{\Big( N^{-2/3} a(k) - \alpha^t(k) \Big)\Psi_{N,t}}{ N e^{-ikx_1} \Psi_{N,t}}
\nonumber \\
&\quad - 2 N^{-2/3}  \int d^3k \, \tilde{\eta}(k) \Im \scp{\Big( N^{-2/3} a(k) - \alpha^t(k) \Big)\Psi_{N,t}}{ (2 \pi)^{3/2} \mathcal{F}\left[ \rho^t \right](k) \Psi_{N,t}},
\end{align}
since the scalar product in the first line is real.
\end{proof}

Before we prove appropriate estimates for the time derivative of $\beta(t)$, let us state a technical lemma which was already proven, e.g., in \cite{petrat_2015,bach:2015}; we give a proof here for convenience. Note that this is an important point where the antisymmetry of the wave function is used.
\begin{lemma}\label{lemma:  technical estimates}
Let $A_1 = A \otimes \id_{L^2(\mathbb{R}^{3(N-1)})} \otimes \id_{\mathcal{F}_s}$ with $A: L^2({\mathbb{R}^3}) \rightarrow L^2(\mathbb{R}^3)$ trace class and
$\Psi_{N} , \Psi_{N}' \in L^2(\mathbb{R}^{3N}) \otimes \mathcal{F}_s$ antisymmetric in $x_1$ and all other electron variables except $x_{l_1}, \ldots, x_{l_j}$. Then
\begin{align}
\label{eq:  technical estimates 1}
\abs{\SCP{\Psi_{N}}{A_1 \Psi_{N}'}} &\leq (N-j)^{-1} \norm{A}_{\tr} \big\|\Psi_{N}\big\|  \big\|\Psi_{N}'\big\|.
\end{align}
\end{lemma}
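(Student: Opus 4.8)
The plan is to exploit the antisymmetry of $\Psi_N$ and $\Psi_N'$ in the variables $x_1$ and all but $j$ of the remaining electron coordinates to rewrite $\langle \Psi_N, A_1 \Psi_N'\rangle$ as an average of the expectation of any one of the $N-j$ symmetric-slot operators, and then bound each such expectation by the trace norm of $A$. Concretely, let $\{l_1,\dots,l_j\}$ be the exceptional indices and let $S=\{1,\dots,N\}\setminus\{l_1,\dots,l_j\}$, so $|S|=N-j$ and $1\in S$. Since both vectors are antisymmetric under any transposition of two indices in $S$, for every $m\in S$ we have $\langle \Psi_N, A_1 \Psi_N'\rangle = \langle \Psi_N, A_m \Psi_N'\rangle$, where $A_m$ denotes the copy of $A$ acting on the $m$-th electron variable (the transposition that swaps $1$ and $m$ conjugates $A_1$ into $A_m$ and leaves the inner product invariant up to the two sign changes, which cancel). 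Averaging over $m\in S$ gives
\begin{align}
\abs{\SCP{\Psi_N}{A_1\Psi_N'}} = (N-j)^{-1}\abs{\sum_{m\in S}\SCP{\Psi_N}{A_m\Psi_N'}}.
\end{align}

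The next step is to estimate $\abs{\sum_{m\in S}\langle\Psi_N,A_m\Psi_N'\rangle}$ by $\norm{A}_{\tr}\,\norm{\Psi_N}\,\norm{\Psi_N'}$. I would do this via the singular value (Schmidt) decomposition of the trace-class operator $A$: write $A=\sum_i \lambda_i \ketbra{u_i}{v_i}$ with $\lambda_i\ge 0$, $\sum_i\lambda_i=\norm{A}_{\tr}$, and $\{u_i\}$, $\{v_i\}$ orthonormal systems in $L^2(\RRR^3)$. Then $\sum_{m\in S}A_m = \sum_i \lambda_i \sum_{m\in S}\ketbra{u_i}{v_i}_m$, and for fixed $i$ the operator $\sum_{m\in S}\ketbra{u_i}{v_i}_m$ has norm bounded by $\sum_{m\in S}\norm{\ketbra{u_i}{v_i}_m}\le N-j$ trivially, which is too lossy; instead I use that $\sum_{m\in S}\ketbr{v_i}_m \le \id$ on the antisymmetric subspace (at most $N-j$ fermions, but more to the point $\sum_m \ketbr{v_i}_m$ is the second quantization of a rank-one projection and on the fermionic sector its norm is at most $1$ — no orbital can be occupied more than once). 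Hence by Cauchy--Schwarz,
\begin{align}
\abs{\SCP{\Psi_N}{\sum_{m\in S}\ketbra{u_i}{v_i}_m\Psi_N'}}
\le \norm{\Big(\sum_{m\in S}\ketbr{u_i}_m\Big)^{1/2}\Psi_N}\,\norm{\Big(\sum_{m\in S}\ketbr{v_i}_m\Big)^{1/2}\Psi_N'}
\le \norm{\Psi_N}\,\norm{\Psi_N'}.
\end{align}
Summing against $\lambda_i$ and using $\sum_i\lambda_i=\norm{A}_{\tr}$ then yields $\abs{\sum_{m\in S}\langle\Psi_N,A_m\Psi_N'\rangle}\le \norm{A}_{\tr}\norm{\Psi_N}\norm{\Psi_N'}$, and combining with the averaging identity gives \eqref{eq:  technical estimates 1}.

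The main obstacle is the operator bound $\norm{\sum_{m\in S}\ketbr{v_i}_m\restriction_{\text{antisym}}}\le 1$: this is precisely the step where Fermi statistics enters, and it must be stated carefully because $S$ is a subset of the electron indices while the antisymmetry of $\Psi_N$ is only partial (in $x_1$ and all indices outside $\{l_1,\dots,l_j\}$). One has to check that antisymmetry within $S$ alone suffices, which it does: restricting attention to the $x_m$, $m\in S$, and freezing the other variables, $\Psi_N$ is a genuinely antisymmetric function of $N-j$ arguments, and the standard fermionic bound $\langle\Psi,\mathrm{d}\Gamma(\ketbr{v})\Psi\rangle=\langle\Psi,\sum_m\ketbr{v}_m\Psi\rangle\le\norm{\Psi}^2$ applies to it (and likewise to the non-antisymmetric frozen variables merely as a sum of projections bounded by the number of terms — but here we only need the $m\in S$ part). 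Everything else is routine: the sign bookkeeping in the transposition step and the Cauchy--Schwarz/singular-value manipulation. A slightly slicker alternative avoiding the Schmidt decomposition is to note $\tr_{\mathcal H^{(N)}}\!\big(\Gamma\, A_m\big)$ can be written via the reduced one-particle density matrix of the pair $(\Psi_N,\Psi_N')$ and bounded by $\norm{A}\cdot(\text{trace of that reduced operator})$, but the Schmidt-decomposition route is the most self-contained and is the one I would write out.
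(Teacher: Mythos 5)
Your proposal is correct and follows essentially the same route as the paper's proof: both rest on the singular value decomposition of $A$, a Cauchy--Schwarz step, the observation that the relevant expectation is the same in each of the $N-j$ antisymmetric slots (which produces the factor $(N-j)^{-1}$), and the fact that $\sum_{m\in S}\ketbr{v}_m$ is a projector on functions antisymmetric in the $S$-variables. The only difference is cosmetic: the paper applies Cauchy--Schwarz in the single slot $m=1$ first and then symmetrizes inside each resulting norm, whereas you symmetrize the operator first and then apply Cauchy--Schwarz.
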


\begin{proof}
In order to prove the inequality, it is convenient to use the singular value decomposition
$A = \sum_{i \in \mathbb{N}} \mu_i \ket{\chi'_i} \bra{\chi_i}$ with $(\chi'_i)_{i \in \mathbb{N}}, (\chi_i)_{i \in \mathbb{N}}$ orthonormal bases in $L^2(\mathbb{R}^3)$ and $\mu_i \geq 0 \, \forall i \in \mathbb{N}$. Using Cauchy-Schwarz, this allows us to estimate
\begin{align}
\abs{\SCP{\Psi_N}{A_1 \Psi_N'}}
&= \abs{\sum_{i \in \mathbb{N}} \mu_i \SCP{\Psi_N}{ \ket{\chi'_i} \bra{\chi_i}_1 \Psi_N'}}
\nonumber\\
&\leq \sum_{i \in \mathbb{N}} \mu_i \SCP{\Psi_N}{ \ket{\chi'_i}\bra{\chi'_i}_1 \Psi_N}^{1/2} \,
\SCP{\Psi'_N}{ \ket{\chi_i}\bra{\chi_i}_1 \Psi'_N}^{1/2}
\nonumber \\
&= (N-j)^{-1} \sum_{i \in \mathbb{N}} \mu_i 
\SCP{\Psi_N}{ \sum_{\substack{k=1 \\ k\neq l_1,\ldots,l_j}}^N \ket{\chi'_i}\bra{\chi'_i}_k \Psi_N}^{1/2} \,
\SCP{\Psi'_N}{ \sum_{\substack{l=1 \\ l\neq l_1,\ldots,l_j}}^N \ket{\chi_i}\bra{\chi_i}_l \Psi'_N}^{1/2}.
\end{align}
Note that $\sum_{k\in K} \ket{\chi_i}\bra{\chi_i}_k$ is for all $i\in \NNN$ and $K \subset \{1,\ldots,N\}$ a projector on functions antisymmetric in all $K$-variables, since
\begin{align}
\Big(\sum_{k\in K} \ket{\chi_i}\bra{\chi_i}_k \Big)^2 \Psi_N
&= \sum_{k\in K} \sum_{l \in K} \ket{\chi_i}\bra{\chi_i}_k \,  \ket{\chi_i}\bra{\chi_i}_l \Psi_N
= \sum_{k\in K} \ket{\chi_i}\bra{\chi_i}_k \Psi_N,
\end{align}
where the last step is true because the non-diagonal terms vanish due to the antisymmetry. It follows that
\begin{align}
\abs{\SCP{\Psi_N}{A_1 \Psi_N'}} 
&\leq (N-j)^{-1} \sum_{i \in \mathbb{N}} \mu_i \norm{\Psi_N} \norm{\Psi_N'}
= (N-j)^{-1} \norm{A}_{\tr}  \norm{\Psi_N} \norm{\Psi_N'}.
\end{align}
\end{proof}

\subsection{Estimate on the time derivative of $\beta^{a,1}(t)$}

\begin{lemma}\label{lemma: time derivative of beta-a-1}
Let Assumption~\ref{main_assumption} hold and let $\Psi_{N,t}$ be the solution to \eqref{eq:Schroedinger_equation_microscopic} with initial condition $\Psi_{N,0}$, and $\varphi_1^t,\ldots,\varphi_N^t,\alpha^t$ the solution to \eqref{effective_eqs} with initial condition $\varphi_1^0,\ldots,\varphi_N^0,\alpha^0$. Then there is a $C>0$ (independent of $N$, $\delta_N$, $\Lambda$, and $t$) such that for all $t>0$,
\begin{align}
\abs{\frac{d}{dt} \beta^{a,1}(t)} \leq  e^{C \Lambda^4 (1 + \|\alpha^0\|_2)(1 + t^2)} \Big( \beta(t) + N^{-1}  \Big).
\end{align}
\end{lemma}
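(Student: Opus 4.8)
The plan is to start from the identity for $\frac{d}{dt}\beta^{a,1}(t)$ in Lemma~\ref{lemma: time derivative of beta preparation} and to estimate the right-hand side using the propagated semiclassical structure (Lemma~\ref{lemma: semiclassical structure}), the antisymmetry of $\Psi_{N,t}$ (Lemma~\ref{lemma: technical estimates}), and the definition of $\beta^b(t)$. The first step is to rewrite the difference of the quantized and the classical field in terms of the ``fluctuation field'' $G_t(k):=N^{-2/3}a(k)-\alpha^t(k)$. Since the $\alpha^t$-contributions of $N^{-2/3}\vph$ and of $\vphc$ cancel, one has
\[
N^{-2/3}\vph(x)-\vphc(x,t)=\int d^3k\,\tilde\eta(k)\big(e^{ikx}G_t(k)+e^{-ikx}G_t(k)^*\big),
\]
so that Lemma~\ref{lemma: time derivative of beta preparation} gives
\[
\frac{d}{dt}\beta^{a,1}(t)=-2N^{1/3}\,\Im\int d^3k\,\tilde\eta(k)\Big(\scp{\Psi_{N,t}}{p_1^t e^{ikx_1}G_t(k)q_1^t\Psi_{N,t}}+\scp{\Psi_{N,t}}{p_1^t e^{-ikx_1}G_t(k)^*q_1^t\Psi_{N,t}}\Big).
\]

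Next I would estimate the annihilation term. At fixed $k$ the operator $G_t(k)$ acts only on $\mathcal F_s$, hence commutes with $q_1^t$, and one may write $p_1^t e^{ikx_1}G_t(k)q_1^t=(p^te^{ikx}q^t)_1\,G_t(k)$, where $p^te^{ikx}q^t$ is trace class on $L^2(\RRR^3)$. Applying Lemma~\ref{lemma: technical estimates} with this operator and with $\Psi'=G_t(k)\Psi_{N,t}$ (which is still antisymmetric in the electron variables) produces the crucial factor $N^{-1}$, and then Lemma~\ref{lemma: semiclassical structure} controls $\norm{p^te^{ikx}q^t}_{\tr}\le 2\tilde C(1+|k|)N^{2/3}e^{C\Lambda^4(1+\norm{\alpha^0}_2)(1+t^2)}$, turning $N^{-1}\norm{p^te^{ikx}q^t}_{\tr}$ into $C(1+|k|)N^{-1/3}e^{\cdots}$. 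The creation term is treated the same way after first moving $G_t(k)^*$ to the other side of the scalar product: since $G_t(k)^*$ commutes with $q_1^t$, one finds $\scp{\Psi_{N,t}}{p_1^t e^{-ikx_1}G_t(k)^*q_1^t\Psi_{N,t}}=\overline{\scp{\Psi_{N,t}}{(q^te^{ikx}p^t)_1G_t(k)\Psi_{N,t}}}$, which replaces the pointwise-in-$k$ ill-behaved operator $a^*(k)$ by $a(k)$ and again allows Lemma~\ref{lemma: technical estimates} together with $\norm{q^te^{ikx}p^t}_{\tr}=\norm{p^te^{-ikx}q^t}_{\tr}$ and Lemma~\ref{lemma: semiclassical structure}.

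Collecting the two contributions, carrying out the $k$-integration by Cauchy--Schwarz, and using that $\norm{(1+|\cdot|)\tilde\eta}_2$ is a finite (polynomial-in-$\Lambda$) constant together with the identity $\int d^3k\,\norm{G_t(k)\Psi_{N,t}}^2=N^{-1/3}\beta^b(t)$ coming straight from the definition of $\beta^b$, one arrives at a bound of the form $|\frac{d}{dt}\beta^{a,1}(t)|\le C e^{C\Lambda^4(1+\norm{\alpha^0}_2)(1+t^2)}\,(N^{-1/3}\beta^b(t))^{1/2}$. To obtain the stated estimate, which is \emph{linear} in $\beta(t)$ with an $N^{-1}$ error, one still keeps track of the factor $\norm{q_1^t\Psi_{N,t}}=\beta^{a,1}(t)^{1/2}$ (i.e.\ one does not absorb the trailing $q_1^t$ entirely into the trace-class operator) and pairs it, via Young's inequality, with the field-fluctuation factor $\beta^b(t)^{1/2}$, so that $\beta^{a,1}(t)^{1/2}\beta^b(t)^{1/2}\le\tfrac12(\beta^{a,1}(t)+\beta^b(t))\le\tfrac12\beta(t)$ and the remaining genuinely small terms are of order $N^{-1}$; this is exactly the reason $\beta^{a,2}$ does not occur here while it does occur for $\frac{d}{dt}\beta^b$. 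The delicate part of the argument — and the only place where something beyond routine manipulation is needed — is precisely this bookkeeping of powers of $N$: one must balance the three sources of smallness (the $1/N$ from antisymmetry, the $N^{2/3}$ rather than $N$ from the semiclassical structure, and the smallness of the field fluctuations measured by $\beta^b$), while the creation part of the field operator forces one to work either with the smeared operator $a^*(\tilde\eta_{x_1})$ or with its adjoint. All constants depending only on $\Lambda$, $\norm{\alpha^0}_2$ and $t$ are absorbed into the double-exponent-free factor $e^{C\Lambda^4(1+\norm{\alpha^0}_2)(1+t^2)}$.
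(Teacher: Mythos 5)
Your reduction of $\tfrac{d}{dt}\beta^{a,1}(t)$ to the two contributions built from the fluctuation field $G_t(k)=N^{-2/3}a(k)-\alpha^t(k)$ matches the paper's starting point, and your treatment of the creation term is essentially the paper's treatment of its term \eqref{dta1_2}: splitting $q_1^t=q_1^tq_1^t$, putting $\norm{(q^te^{ikx}p^t)_1G_t(k)\Psi_{N,t}}\leq N^{-1/2}\norm{q^te^{ikx}p^t}_{\HS}\norm{G_t(k)\Psi_{N,t}}$ against the leftover $\norm{q_1^t\Psi_{N,t}}=\sqrt{\beta^{a,1}(t)}$, and using $\int d^3k\,\norm{G_t(k)\Psi_{N,t}}^2=N^{-1/3}\beta^b(t)$. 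The gap is in the annihilation term, the paper's \eqref{dta1_1}. Your primary route --- Lemma~\ref{lemma:  technical estimates} with $A=p^te^{ikx}q^t$ and $\Psi'=G_t(k)\Psi_{N,t}$ --- yields after the $k$-integration only $C e^{C\Lambda^4(1+\norm{\alpha^0}_2)(1+t^2)}\sqrt{N^{-1/3}\beta^b(t)}$, and no weighted Young inequality converts $\sqrt{N^{-1/3}\beta^b}$ into $C(\beta+N^{-1})$: weighting produces either an $N^{-1/3}$ additive error (too large) or a prefactor $N^{2/3}$ in front of $\beta^b$. Your proposed repair, keeping $\norm{q_1^t\Psi_{N,t}}$, is not available for this term. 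Once $G_t(k)$ sits next to $\Psi_{N,t}$ --- which is forced, because placing $G_t(k)^*$ on the other side of the inner product is divergent pointwise in $k$ and, after smearing against $\tilde\eta$, is no longer a small quantity --- the term reads $\scp{G_t(k)\Psi_{N,t}}{q_1^te^{-ikx_1}p_1^t\Psi_{N,t}}$, so the projector $q_1^t$ is adjacent to $G_t(k)\Psi_{N,t}$ rather than to $\Psi_{N,t}$, and $\norm{q_1^tG_t(k)\Psi_{N,t}}$ is not controlled by $\beta^{a,1}(t)$.

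The missing ingredient is the symmetrization the paper uses there: by antisymmetry, $q_1^te^{-ikx_1}p_1^t\Psi_{N,t}$ may be replaced by $N^{-1}\sum_{m=1}^Nq_m^te^{-ikx_m}p_m^t\Psi_{N,t}$, and after Cauchy--Schwarz in $k$ one expands $N^{-2}\norm{\sum_{m}q_m^te^{-ikx_m}p_m^t\Psi_{N,t}}^2$ into off-diagonal terms, bounded via $\norm{q_1^te^{-ikx_1}p_1^tq_2^t\Psi_{N,t}}^2\leq(N-1)^{-1}\norm{q^te^{-ikx}p^t}_{\HS}^2\norm{q_2^t\Psi_{N,t}}^2$ and hence producing $\beta^{a,1}(t)$, and diagonal terms producing the $N^{-1}$. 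This is exactly what generates the factor $\sqrt{\beta^{a,1}(t)+N^{-1}}$ that must accompany $\sqrt{\beta^b(t)}$ so that the Gronwall argument closes. Everything else in your plan (the use of Lemma~\ref{lemma: semiclassical structure}, the $k$-integration, $\norm{(1+\abs{\cdot})\tilde\eta}_2\leq C\Lambda^2$, and the identification of the three sources of smallness) is consistent with the paper, but without the symmetrization step the stated bound is not reached.
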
 

\begin{proof}
Using the Fourier expansion of the radiation field we write
\begin{subequations}
\begin{align}
\frac{d}{dt} \beta^{a,1}(t) &= 2 N^{1/3} \int d^3k \, \tilde{\eta}(k) \Im \scp{\Big( N^{-2/3} a(k) - \alpha^t(k) \Big)\Psi_{N,t}}{ q^t_1 e^{-ikx_1}p^t_1 \Psi_{N,t}} \label{dta1_1} \\
&\quad -2 N^{1/3} \int d^3k \, \tilde{\eta}(k) \Im \scp{\Big( N^{-2/3} a(k) - \alpha^t(k) \Big)\Psi_{N,t}}{ p^t_1 e^{-ikx_1} q^t_1 \Psi_{N,t}}.\label{dta1_2}
\end{align}
\end{subequations}
Since $\Psi_{N,t}$ is antisymmetric in the $x$ variables, we find for the first summand
\begin{align}
\abs{\eqref{dta1_1}} &\leq 2 N^{1/3}  \int d^3k \, \abs{\tilde{\eta}(k)} \, \abs{ \scp{\Big( N^{-2/3} a(k) - \alpha^t(k) \Big)\Psi_{N,t}}{N^{-1} \sum_{m=1}^N q_m e^{-ikx_m} p_m \Psi_{N,t}} }
\nonumber\\
&\leq 2 \Bigg[ N^{1/3} \int d^3k \, \norm{\Big( N^{-2/3} a(k) - \alpha^t(k) \Big)\Psi_{N,t}}^2 \Bigg]^{1/2} \nonumber\\
&\quad\times \Bigg[ N^{1/3} \int d^3k \, \abs{\tilde{\eta}(k)}^2 N^{-2} \norm{\sum_{m=1}^N q_m e^{-ikx_m} p_m \Psi_{N,t}}^2 \Bigg]^{1/2}.
\end{align}

\noindent
We now use that by Lemma~\ref{lemma:  technical estimates}, $\norm{A_1B_2\Psi_{N}}^2 \leq (N-1)^{-1} \norm{A}_{\HS}^2 \norm{B_2\Psi_N}^2$ and $\norm{A_1\Psi_{N}}^2 \leq N^{-1} \norm{A}_{\HS}^2 \norm{\Psi_N}^2$ for all antisymmetric $\Psi_N$, Hilbert-Schmidt operators $A$ and bounded operators $B$. In the end we use the semiclassical structure, i.e., Lemma~\ref{lemma: semiclassical structure}, and find
\begin{align}
&N^{-2} \norm{\sum_{m=1}^N q_m^t e^{-ikx_m} p_m^t \Psi_{N,t}}^2 \nonumber\\
&\quad= N^{-2} \bigg( N(N-1) \scp{\Psi_{N,t}}{p_1^t e^{ikx_1}q_1^t q_2^t e^{-ikx_2} p_2^t \Psi_{N,t}} + N \scp{\Psi_{N,t}}{p_1^t e^{ikx_1}q_1^t e^{-ikx_1} p_1^t \Psi_{N,t}} \bigg) 
\nonumber\\
&\quad= N^{-1}(N-1) \scp{q_1^t e^{-ikx_1} p_1^t q_2^t \Psi_{N,t}}{q_2^t e^{-ikx_2} p_2^t q_1^t \Psi_{N,t}} + N^{-1} \norm{q_1^t e^{-ikx_1} p_1^t \Psi_{N,t}}^2 
\nonumber\\
&\quad\leq N^{-1} (N-1) \norm{q_1^t e^{-ikx_1} p_1^t q_2^t \Psi_{N,t}}^2
+ N^{-1}  \norm{q_1^t e^{-ikx_1} p_1^t  \Psi_{N,t}}^2  
\nonumber \\
&\quad\leq N^{-1} \norm{q^t e^{-ikx} p^t}_{\HS}^2 \norm{q_2^t \Psi_{N,t}}^2
+ N^{-2} \norm{q^t e^{-ikx} p^t}_{\HS}^2 \norm{\Psi_{N,t}}^2
\nonumber \\
&\quad\leq N^{-1/3} C (1 + \abs{k} ) e^{C \Lambda^4 (1 + \norm{\alpha^0}_2)(1 + t^2)} \left( \beta^{a,1}(t) + N^{-1} \right).
\end{align}
Thus, 
\begin{align}
\abs{\eqref{dta1_1}}
&\leq 2 \sqrt{\beta^b(t)} 
\left(  C e^{C \Lambda^4 (1 + \norm{\alpha^0}_2)(1 + t^2)} \left( \beta^{a,1}(t) + N^{-1}   \right) \norm{\tilde{\eta} (1 + \abs{\cdot})^{1/2}}_2^2    \right)^{1/2}
\nonumber \\
&= C e^{C \Lambda^4 (1 + \norm{\alpha^0}_2)(1 + t^2)} \norm{(1 + \abs{\cdot})^{1/2} \tilde{\eta}}_2
\sqrt{\beta^b(t)} \sqrt{\beta^{a,1}(t) + N^{-1}}.
\end{align}
For \eqref{dta1_2} we can directly use Cauchy-Schwarz. We use again $\norm{A_1\Psi_{N}}^2 \leq N^{-1} \norm{A}_{\HS}^2 \norm{\Psi_N}^2$ and Lemma~\ref{lemma: semiclassical structure} in the end and find
\begin{align}
\abs{\eqref{dta1_2}}
&\leq 2 N^{1/3} \int d^3k \, \abs{\tilde{\eta}(k)} \, \abs{ \scp{ q_1^t e^{ikx_1} p_1^t \Big( N^{-2/3} a(k) - \alpha^t(k) \Big)\Psi_{N,t}}{ q^t_1 \Psi_{N,t}}}
\nonumber \\
&\leq 2 N^{-1/6} \int d^3k \, \abs{\tilde{\eta}(k)} \, \norm{q^t e^{ikx} p^t}_{\HS} \norm{\Big( N^{-2/3} a(k) - \alpha^t(k) \Big)\Psi_{N,t}}
\norm{q^t_1 \Psi_{N,t}}
\nonumber \\
&\leq C e^{C \Lambda^4 (1 + \norm{\alpha^0}_2)(1 + t^2)} \norm{(1 + \abs{\cdot})^{1/2}\tilde{\eta}}_2 \sqrt{\beta^{b}(t)} \sqrt{\beta^{a,1}(t)}.
\end{align}
To summarize, we have
\begin{align}
\abs{\frac{d}{dt} \beta^{a,1}(t)}
&\leq  C e^{C \Lambda^4 (1 + \norm{\alpha^0}_2)(1 + t^2)}  \norm{(1 + \abs{\cdot})^{1/2} \tilde{\eta}}_2 \left( \beta^{a,1}(t) + \beta^b(t) + N^{-1} \right).
\end{align}
Since $\norm{(1 + \abs{\cdot})^{1/2} \tilde{\eta}}_2 \leq C \Lambda^{3/2}$ and using for ease of notation $\abs{x} \leq \exp(\abs{x})$, this gives
\begin{align}
\abs{\frac{d}{dt} \beta^{a,1}(t)}
&\leq e^{C \Lambda^4 (1 + \norm{\alpha^0}_2)(1 + t^2)}  \left( \beta^{a,1}(t) + \beta^b(t) + N^{-1} \right).
\end{align}
\end{proof}

\subsection{Estimate on the time derivative of $\beta^{a,2}(t)$}

\begin{lemma}\label{lemma: time derivative of beta-a-2}
Let Assumption~\ref{main_assumption} hold and let $\Psi_{N,t}$ be the solution to \eqref{eq:Schroedinger_equation_microscopic} with initial condition $\Psi_{N,0}$, and $\varphi_1^t,\ldots,\varphi_N^t,\alpha^t$ the solution to \eqref{effective_eqs} with initial condition $\varphi_1^0,\ldots,\varphi_N^0,\alpha^0$. Then there is a $C>0$ (independent of $N$, $\delta_N$, $\Lambda$, and $t$) such that for all $t>0$,
\begin{align}
\abs{\frac{d}{dt} \beta^{a,2}(t)} \leq  e^{C \Lambda^4 (1 + \|\alpha^0\|_2)(1 + t^2)} \left( \beta(t) + N^{-1}  \right).
\end{align}
\end{lemma}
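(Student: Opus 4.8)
The plan is to mirror the proof of Lemma~\ref{lemma: time derivative of beta-a-1}, with the extra projector $q_2^t$ absorbed by a combinatorial argument that again rests on the antisymmetry of $\Psi_{N,t}$. Starting from the identity \eqref{eq: time derivative of beta-a-2 preparation} in Lemma~\ref{lemma: time derivative of beta preparation}, I would insert the Fourier expansion of $\vph(x_1)$ and split off the $a^*$-part by taking adjoints, exactly as in \eqref{dta1_1}--\eqref{dta1_2}. This rewrites $\frac{d}{dt}\beta^{a,2}(t)$ as the sum of
\begin{align}
&4 N^{2/3} \int d^3k\, \tilde{\eta}(k)\, \Im\, \SCP{\big( N^{-2/3} a(k) - \alpha^t(k)\big)\Psi_{N,t}}{q_1^t q_2^t e^{-ikx_1} p_1^t\, \Psi_{N,t}}, \\
&-4 N^{2/3} \int d^3k\, \tilde{\eta}(k)\, \Im\, \SCP{\big( N^{-2/3} a(k) - \alpha^t(k)\big)\Psi_{N,t}}{p_1^t e^{-ikx_1} q_1^t q_2^t\, \Psi_{N,t}},
\end{align}
which are the natural $\beta^{a,2}$-analogues of \eqref{dta1_1} and \eqref{dta1_2}.

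For both terms the strategy is to use Cauchy--Schwarz (in the $k$-integral and in $\mathcal{H}^{(N)}$) to factor out $\sqrt{\beta^b(t)}$ via $\int d^3k\, \|(N^{-2/3}a(k)-\alpha^t(k))\Psi_{N,t}\|^2 = N^{-1/3}\beta^b(t)$, and then to estimate the remaining electronic factor. The key step is to exploit the antisymmetry of $\Psi_{N,t}$ to replace the single-coordinate operators by symmetric sums over all coordinates; the crucial algebraic fact is $q_m^t p_m^t = 0$, which gives
\begin{equation}
\sum_{\substack{l,m=1\\ l\neq m}}^N q_l^t q_m^t e^{-ikx_m} p_m^t \;=\; \Big(\sum_{m=1}^N q_m^t e^{-ikx_m} p_m^t\Big)\Big(\sum_{j=1}^N q_j^t\Big),
\end{equation}
so that the oper
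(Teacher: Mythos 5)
Your setup is fine and coincides with the paper's: the decomposition of $\frac{d}{dt}\beta^{a,2}(t)$ into the two terms is correct, and the algebraic identity
\begin{equation*}
\sum_{l\neq m} q_l^t q_m^t e^{-ikx_m} p_m^t = \Big(\sum_{m=1}^N q_m^t e^{-ikx_m} p_m^t\Big)\Big(\sum_{j=1}^N q_j^t\Big)
\end{equation*}
holds (the diagonal terms vanish because $p_m^tq_m^t=0$). But the proposal breaks off precisely at the decisive estimate, and the strategy you announce --- ``mirror the proof of Lemma~\ref{lemma: time derivative of beta-a-1}'' --- does not close as stated. The point is a loss of powers of $N$. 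If, after symmetrizing, you apply Cauchy--Schwarz so as to put $\norm{\sum_j q_j^t\Psi_{N,t}}\leq N^{5/6}\sqrt{\beta^{a,1}+\beta^{a,2}}$ on one side and then control the other factor with the Hilbert--Schmidt bound $\norm{A_1\Phi}^2\leq N^{-1}\norm{A}_{\HS}^2\norm{\Phi}^2$ and $\norm{p^te^{ikx}q^t}_{\HS}^2\lesssim (1+\abs{k})N^{2/3}$ --- which is exactly the mechanism of the $\beta^{a,1}$ proof --- you end up with an uncancelled positive power of $N$ (roughly $N^{1/6}$ in front of $\sqrt{\beta^b\beta^{a,2}}$). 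The deeper reason is that in the $\beta^{a,1}$ argument the off-diagonal terms are saved by an extra $\norm{q_2^t\Psi_{N,t}}^2=\beta^{a,1}$, whereas here the field-fluctuation vector $(N^{-2/3}a(k)-\alpha^t(k))\Psi_{N,t}$ carries no such smallness when hit with a $q_j^t$.

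What is needed instead (and what the paper does) is to de-symmetrize one of the two sums again, writing the pairing as $N\scp{\Phi_k}{q_1^te^{-ikx_1}p_1^t\,\Xi}$ with $\Xi=\sum_jq_j^t\Psi_{N,t}$, and then to invoke the \emph{trace-norm} duality of Lemma~\ref{lemma:  technical estimates}, $\abs{\scp{\Phi}{A_1\Xi}}\leq N^{-1}\norm{A}_{\tr}\norm{\Phi}\norm{\Xi}$, together with the trace-norm semiclassical bound $\norm{[p^t,e^{ikx}]}_{\tr}\leq C(1+\abs{k})N^{2/3}e^{C\Lambda^4(1+\norm{\alpha^0}_2)(1+t^2)}$ from Lemma~\ref{lemma: semiclassical structure}. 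The gain of $N^{-1}\norm{\cdot}_{\tr}\sim N^{-1/3}$ over $N^{-1/2}\norm{\cdot}_{\HS}\sim N^{-1/6}$ is exactly the missing factor $N^{-1/6}$, and with it all powers of $N$ cancel and one lands on $C\Lambda^2 e^{C(t)}(\beta^{a,1}+\beta^{a,2}+\beta^b+N^{-5/3})$. (One also has to keep the $j=m$ correction term from the symmetrization, i.e.\ the $(N-1)^{-1}\scp{\Psi_{N,t}}{q_1^t[p_1^t,e^{ikx_1}]\Phi_k}$ piece, which contributes the harmless $N^{-5/3}$.) So the idea is on the right track, but the step you omitted is the one genuinely new ingredient relative to Lemma~\ref{lemma: time derivative of beta-a-1}; without switching from the Hilbert--Schmidt to the trace-norm estimate the argument fails.
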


\begin{proof}
We write the time derivative of $\beta^{a,2}(t)$ as
\begin{align}
&\frac{d}{dt} \beta^{a,2}(t) \nonumber\\
&\quad= -4 N^{2/3} \int d^3k \, \tilde{\eta}(k) \Im \scp{\Psi_{N,t}}{\Big( N^{-2/3} a(k) - \alpha^t(k) \Big) \Big( p^t_1 e^{ikx_1}q^t_1 - q^t_1 e^{ikx_1}p^t_1 \Big) q^t_2 \Psi_{N,t}} \nonumber\\
&\quad= -4 N^{2/3} \int d^3k \, \tilde{\eta}(k) \Im \scp{q^t_2  \Psi_{N,t}}{ \Big[ p^t_1, e^{ikx_1} \Big] \Big( N^{-2/3} a(k) - \alpha^t(k) \Big) 
 \Psi_{N,t}}
\nonumber \\
&\quad= -4 N^{2/3} (N-1)^{-1} \int d^3k \, \tilde{\eta}(k) \Im 
\scp{\sum_{m=1}^N q_m^t \Psi_{N,t}}{\Big[ p^t_1, e^{ikx_1} \Big] \Big( N^{-2/3} a(k) - \alpha^t(k) \Big) 
 \Psi_{N,t}} 
 \nonumber \\
&\qquad + 4 N^{2/3} (N-1)^{-1} \int d^3k \, \tilde{\eta}(k) \Im 
\scp{\Psi_{N,t}}{q_1^t \Big[ p^t_1, e^{ikx_1} \Big] \Big( N^{-2/3} a(k) - \alpha^t(k) \Big) \Psi_{N,t}}.
\end{align}
Here, we have symmetrized the $q_2^t$ so that we can bound the time derivative appropriately in terms of $\beta^{a,2}(t)$. Note that
\begin{align}
\norm{\sum_{m=1}^N q_m^t \Psi_{N,t}}^2 &\leq N \scp{\Psi_{N,t}}{q_1^t \Psi_{N,t}} + N^2 \scp{\Psi_{N,t}}{q_1^t q_2^t \Psi_{N,t}}
\leq N \beta^{a,1}(t) + N^{5/3} \beta^{a,2}(t).
\end{align}
We can then use Lemma~\ref{lemma:  technical estimates}, as well as
\begin{equation}
\norm{q^t\left[ p^t, e^{ikx} \right]}_{\tr} \leq \norm{\left[ p^t, e^{ikx} \right]}_{\tr} \leq \norm{p^t e^{ikx} q^t}_{\tr} + \norm{p^t e^{-ikx} q^t}_{\tr}
\end{equation}
together with Lemma~\ref{lemma: semiclassical structure} and find 
\small
\begin{align}
&\abs{\frac{d}{dt} \beta^{a,2}(t)} \nonumber\\
&\quad\leq C N^{-4/3}  \int d^3k \,  \abs{\tilde{\eta}(k)} \norm{\Big[ p^t, e^{ikx} \Big]}_{\tr}
\norm{\Big( N^{-2/3} a(k) - \alpha^t(k) \Big)\Psi_{N,t}} \bigg(  \bigg\| \sum_{j=1}^N q_j^t \Psi_{N,t} \bigg\| + 1 \bigg) 
\nonumber\\
&\quad\leq C e^{C(t)} N^{-2/3} 
\int d^3k \, (1 + \abs{k}) \abs{\tilde{\eta}(k)} \norm{\Big( N^{-2/3} a(k) - \alpha^t(k) \Big)\Psi_{N,t}}
 \bigg(  \bigg\| \sum_{j=1}^N q_j^t \Psi_{N,t} \bigg\| + 1 \bigg) 
\nonumber \\
&\quad\leq C e^{C(t)} N^{-5/6} \norm{(1 + \abs{\cdot})\tilde{\eta}}_2 \sqrt{\beta^b(t)} \bigg( \sqrt{N} \sqrt{\beta^{a,1}(t)} + N^{5/6}\sqrt{\beta^{a,2}(t)} + 1 \bigg)
\nonumber \\
&\quad\leq  C e^{C(t)}  \norm{(1 + \abs{\cdot})\tilde{\eta}}_2  \bigg( \beta^b(t) + \beta^{a,1}(t) + \beta^{a,2}(t) + N^{-5/3} \bigg),
\end{align}
\normalsize
where we abbreviated $C(t) := C \Lambda^4 (1 + \norm{\alpha^0}_2)(1 + t^2)$. Since $\norm{(1 + \abs{\cdot}) \tilde{\eta}}_2 \leq C \Lambda^2$, we arrive at
\begin{align}
\abs{\frac{d}{dt}\beta^{a,2}(t)} \leq  e^{C \Lambda^4 (1 + \|\alpha^0\|_2)(1 + t^2)} \Big( \beta(t) + N^{-5/3} \Big).
\end{align}

\end{proof}

\subsection{Estimate on the time derivative of $\beta^b(t)$}
The crucial terms in the time derivative of $\beta^b(t)$ can be estimated with a diagonalization trick similar to the one used in \cite{petrat_2015}. For the following estimates we introduce the operators
\begin{align}
\label{eq: definition  of the projector over all particles}
\mbp^{\varphi} = \sum_{m=1}^N \ketbr{\varphi}_m 
= \sum_{m=1}^N p_m^{\varphi}
\quad \text{and} \quad
\mbq^{\varphi} = 1 - \mbp^{\varphi},
\end{align}
where $\varphi \in L^2(\mathbb{R}^3)$.
They have the following properties.
\begin{lemma}
\label{lemma: properties  of the projector over all particles}
The operators $\mbp^{\varphi}$ and $\mbq^{\varphi}$ as defined in \eqref{eq: definition  of the projector over all particles} are projectors on $\mathcal{H}^{(N)}$ for all $\varphi \in L^2(\RRR^3)$. Moreover, let $\chi_1, \ldots, \chi_N \in L^2(\mathbb{R}^3)$ and $\varphi_1,\ldots,\varphi_N\in L^2(\RRR^3)$ each be orthonormal, and such that $\SPAN\{ \chi_1, \ldots, \chi_N \} = \SPAN\{ \varphi_1, \ldots, \varphi_N \}$. Then
\begin{align}\label{eq: properties  of the projector over all particles1}
\Big[ \mbq^{\chi_j} , \mbq^{\chi_k} \Big] = 0 ~\forall j,k=1,\ldots,N \quad \text{and} \quad \sum_{j=1}^N \mbq^{\chi_j} &= \sum_{m=1}^N q_m^{\varphi_1,\ldots,\varphi_N}.
\end{align}
\end{lemma}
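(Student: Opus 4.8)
The plan is to verify the three assertions separately; only the first uses the Fermi statistics in an essential way, and the rest is bookkeeping.

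\emph{Step 1: $\mbp^{\varphi}$ and $\mbq^{\varphi}$ are orthogonal projections on $\mathcal{H}^{(N)}$.} Self-adjointness of $\mbp^{\varphi} = \sum_{m=1}^N p_m^{\varphi}$ is immediate, since each $p_m^{\varphi}$ is a self-adjoint rank-one projection in the $m$-th electron variable (recall $\norm{\varphi}_2 = 1$) and the Fock factor is inert. For idempotency I would expand $(\mbp^{\varphi})^2 = \sum_{m,l=1}^N p_m^{\varphi} p_l^{\varphi}$ and split into diagonal and off-diagonal terms. The diagonal terms give $(p_m^{\varphi})^2 = p_m^{\varphi}$. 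For the off-diagonal terms one shows $p_m^{\varphi} p_l^{\varphi} = 0$ on $\mathcal{H}^{(N)}$ for $m \neq l$: writing out, say, $p_1^{\varphi} p_2^{\varphi}\Psi_N$ yields $\varphi(x_1)\varphi(x_2)\int d^3x_1'\, d^3x_2'\, \overline{\varphi(x_1')}\,\overline{\varphi(x_2')}\,\Psi_N(x_1',x_2',x_3,\ldots,x_N)$, and the integral vanishes because the test function $\overline{\varphi}\otimes\overline{\varphi}$ is symmetric under $x_1'\leftrightarrow x_2'$ whereas $\Psi_N$ is antisymmetric; the same computation gives $p_m^{\varphi}p_l^{\varphi}=0$ for every $m\neq l$. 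Hence $(\mbp^{\varphi})^2 = \sum_{m=1}^N p_m^{\varphi} = \mbp^{\varphi}$, so $\mbp^{\varphi}$ and $\mbq^{\varphi} = \id - \mbp^{\varphi}$ are orthogonal projections. This is the only genuinely non-formal point, and I expect it to be the main (indeed the sole) obstacle: everything hinges on the antisymmetry of $L^2_{\as}(\RRR^{3N})$, which makes a sum of commuting but non-orthogonal one-particle projections collapse to an orthogonal projection.

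\emph{Step 2: $[\mbq^{\chi_j},\mbq^{\chi_k}] = 0$ for all $j,k$.} This is equivalent to $[\mbp^{\chi_j},\mbp^{\chi_k}] = 0$. I would expand $[\mbp^{\chi_j},\mbp^{\chi_k}] = \sum_{m,l=1}^N [p_m^{\chi_j}, p_l^{\chi_k}]$. The summands with $m \neq l$ vanish because $p_m^{\chi_j}$ and $p_l^{\chi_k}$ act on different tensor factors, and for $m=l$ one computes $[p_m^{\chi_j}, p_m^{\chi_k}] = \SCP{\chi_j}{\chi_k}\ketbra{\chi_j}{\chi_k}_m - \SCP{\chi_k}{\chi_j}\ketbra{\chi_k}{\chi_j}_m$, which is $0$ since $\chi_1,\ldots,\chi_N$ are orthonormal (for $j\neq k$ both inner products vanish; for $j=k$ the bracket is trivially zero). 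Note that this step uses only the orthonormality of the $\chi_i$, not the span hypothesis.

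\emph{Step 3: $\sum_{j=1}^N \mbq^{\chi_j} = \sum_{m=1}^N q_m^{\varphi_1,\ldots,\varphi_N}$.} I would write $\sum_{j=1}^N \mbq^{\chi_j} = N\,\id - \sum_{j=1}^N\sum_{m=1}^N p_m^{\chi_j} = N\,\id - \sum_{m=1}^N\big(\sum_{j=1}^N p_m^{\chi_j}\big)$. The inner sum $\sum_{j=1}^N p_m^{\chi_j} = p_m^{\chi_1,\ldots,\chi_N}$ is the orthogonal projection in the $m$-th electron variable onto $\SPAN\{\chi_1,\ldots,\chi_N\}$; since this subspace equals $\SPAN\{\varphi_1,\ldots,\varphi_N\}$ by hypothesis and the orthogonal projection onto a fixed closed subspace is independent of the chosen orthonormal basis, it equals $\sum_{j=1}^N p_m^{\varphi_j} = p_m^{\varphi_1,\ldots,\varphi_N}$. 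Substituting back, $\sum_{j=1}^N \mbq^{\chi_j} = \sum_{m=1}^N\big(\id - p_m^{\varphi_1,\ldots,\varphi_N}\big) = \sum_{m=1}^N q_m^{\varphi_1,\ldots,\varphi_N}$, which is the claimed identity. I do not anticipate any difficulty in Steps 2 and 3 beyond keeping track of indices.
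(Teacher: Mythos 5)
Your proof is correct and follows essentially the route the paper intends -- the paper's own proof is just the one-line remark that the lemma ``follows from a direct computation using the antisymmetry in the fermion variables,'' and your observation that $p_m^{\varphi}p_l^{\varphi}=0$ for $m\neq l$ on antisymmetric functions (so the cross terms in $(\mbp^{\varphi})^2$ collapse) is precisely that computation; the same argument appears explicitly in the paper's proof of Lemma~\ref{lemma:  technical estimates}. Your parenthetical that $\norm{\varphi}_2=1$ is needed for the first assertion is a fair (and correct) reading of a slight imprecision in the lemma's statement, and Steps 2 and 3 are the obvious bookkeeping, exactly as you say.
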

\begin{proof}
The lemma follows from a direct computation using the antisymmetry in the fermion variables.
\end{proof}

\begin{lemma}\label{lemma: time derivative of beta-b}
Let Assumption~\ref{main_assumption} hold and let $\Psi_{N,t}$ be the solution to \eqref{eq:Schroedinger_equation_microscopic} with initial condition $\Psi_{N,0}$, and $\varphi_1^t,\ldots,\varphi_N^t,\alpha^t$ the solution to \eqref{effective_eqs} with initial condition $\varphi_1^0,\ldots,\varphi_N^0,\alpha^0$. Then there is a $C>0$ (independent of $N$, $\delta_N$, $\Lambda$, and $t$) such that for all $t>0$,
\begin{align}
\abs{\frac{d}{dt} \beta^b(t)} \leq  e^{C \Lambda^4 (1 + \|\alpha^0\|_2)(1 + t^2)} \left( \beta(t) + N^{-1}  \right) .
\end{align}
\end{lemma}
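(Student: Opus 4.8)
The plan is to start from the explicit formula for $\frac{d}{dt}\beta^b(t)$ derived in Lemma~\ref{lemma: time derivative of beta preparation}, namely the two integrals in \eqref{eq: time derivative of beta-b preparation}. The key observation is that the source term $N e^{-ikx_1}$ appearing in the first integral should be compared with $(2\pi)^{3/2}\mathcal{F}[\rho^t](k) = \sum_{m=1}^N \langle\varphi_m^t|e^{-ikx}|\varphi_m^t\rangle$, which is exactly the expectation of $\sum_{m=1}^N e^{-ikx_m}$ in the product state. After inserting $\id = p_1^t + q_1^t$ on both sides of $e^{-ikx_1}$, the diagonal $p_1^t e^{-ikx_1} p_1^t$ combined with the second integral produces a term that can be handled by the $\mbp^\varphi,\mbq^\varphi$ diagonalization trick of Lemma~\ref{lemma: properties  of the projector over all particles}: writing $N e^{-ikx_1}$ as $\sum_{m=1}^N e^{-ikx_m}$ (using antisymmetry, exactly as in the $\beta^{a,1}$ estimate), the leftover after subtracting the classical source is $\sum_m \mbq^{\chi_j}$-type, which gives a $\sqrt{\beta^{a,1}(t)}$ or $\sqrt{\beta^{a,2}(t)}$ factor.

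First I would symmetrize the first integral using antisymmetry of $\Psi_{N,t}$, replacing $N e^{-ikx_1}$ by $\sum_{m=1}^N e^{-ikx_m}$, and rewrite the second integral's $(2\pi)^{3/2}\mathcal{F}[\rho^t](k)$ as $\mathrm{Tr}(p^t e^{-ikx}) = \sum_{m=1}^N \langle\varphi_m^t|e^{-ikx}\varphi_m^t\rangle$. Then insert $p_1^t+q_1^t$ around each $e^{-ikx_m}$. The $p\cdots p$ piece, after choosing an eigenbasis $\chi_j$ of $p^t$ (so that $p^t e^{-ikx}p^t = \sum_{j}\ldots$), can be matched against the classical term: the operator $\sum_m p_m^t e^{-ikx_m}p_m^t - \sum_m\langle\varphi_m^t|e^{-ikx}\varphi_m^t\rangle = \sum_j \mbq^{\chi_j}(\text{something diagonal})$, whose norm on $\Psi_{N,t}$ is controlled by $\|\sum_m q_m^t \Psi_{N,t}\| \le (N\beta^{a,1}(t) + N^{5/3}\beta^{a,2}(t))^{1/2}$. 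The off-diagonal pieces $p e^{-ikx} q$ and $q e^{-ikx} p$ involve the commutator $[p^t,e^{-ikx}]$, whose trace norm is bounded by the semiclassical estimate of Lemma~\ref{lemma: semiclassical structure} and Remark~\ref{rem: comm relation semiclassical structure}, together with Lemma~\ref{lemma:  technical estimates} to extract the $\sqrt{\beta^b(t)}$ factor from the $(N^{-2/3}a(k)-\alpha^t(k))\Psi_{N,t}$ vector. The $q e^{-ikx} q$ piece is bounded directly by $\beta^{a,1}(t)$ since two $q$'s appear.

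In every term, Cauchy--Schwarz in the $k$-integral separates a factor $\big[N^{1/3}\int d^3k\,\|(N^{-2/3}a(k)-\alpha^t(k))\Psi_{N,t}\|^2\big]^{1/2} = \sqrt{\beta^b(t)}$ from the rest, which after using $\|A_1 B_2 \Psi_N\|^2 \le (N-1)^{-1}\|A\|_{\HS}^2\|B_2\Psi_N\|^2$ and $\|A_1\Psi_N\|^2 \le N^{-1}\|A\|_{\HS}^2$ from Lemma~\ref{lemma:  technical estimates} and the Hilbert--Schmidt bound $\|p^t e^{ikx}q^t\|_{\HS}^2 \le 2\tilde C(1+|k|)N^{2/3}e^{C\Lambda^4(1+\|\alpha^0\|_2)(1+t^2)}$ produces at most $\sqrt{\beta^{a,1}(t)+\beta^{a,2}(t)+N^{-1}}$. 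The $k$-integral of $|\tilde\eta(k)|^2(1+|k|)$ and similar expressions is $\le C\Lambda^2$ or $C\Lambda^4$, absorbed into the exponential. Collecting, one obtains $|\frac{d}{dt}\beta^b(t)| \le e^{C\Lambda^4(1+\|\alpha^0\|_2)(1+t^2)}\sqrt{\beta(t)+N^{-1}}\sqrt{\beta(t)+N^{-1}} = e^{C\Lambda^4(1+\|\alpha^0\|_2)(1+t^2)}(\beta(t)+N^{-1})$ after using $2ab \le a^2 + b^2$.

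The main obstacle I anticipate is the bookkeeping of the diagonal $p\cdots p$ term: one must carefully exploit the cancellation between $\sum_m p_m^t e^{-ikx_m}p_m^t$ applied to $\Psi_{N,t}$ and the c-number $\sum_m\langle\varphi_m^t|e^{-ikx}\varphi_m^t\rangle$, showing that the difference, as an operator, is of the form $\sum_j \mbq^{\chi_j}(\text{bounded})$ so that its action on $\Psi_{N,t}$ is genuinely small (order $\sqrt{\beta}$, not order $1$). This is where the diagonalization lemma (Lemma~\ref{lemma: properties  of the projector over all particles}) and the antisymmetry are essential, and getting the powers of $N$ to come out as $N^{-1}$ rather than a positive power requires the factor $N^{1/3}$ hidden in the definition of $\beta^b$ and $\beta^{a,2}$ to be tracked precisely. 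The off-diagonal and $qq$ terms are comparatively routine given the semiclassical structure already established.
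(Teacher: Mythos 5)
Your proposal follows essentially the same route as the paper: the same $pp/pq/qp/qq$ decomposition, the same diagonalization of the diagonal piece combined with the classical density to produce $\sum_j\lambda_j\mbq^{\chi_j}$ controlled via Lemma~\ref{lemma: properties  of the projector over all particles}, the same semiclassical/Hilbert--Schmidt treatment of the off-diagonal pieces (which indeed coincide with the terms from the $\beta^{a,1}$ estimate), and the same direct bound for the $qq$ piece. One small technical point: $p^t e^{-ikx}p^t$ is not Hermitian, so you must first split $e^{-ikx}=\cos(kx)-i\sin(kx)$ and diagonalize each Hermitian part separately (with eigenvalues $|\lambda_j^{t,k}|\leq 1$), as the paper does.
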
 

\begin{proof}

If we insert the identity $p_1^t + q_1^t = 1$ twice, \eqref{eq: time derivative of beta-b preparation} can be written as
\begin{align}
\frac{d}{dt} \beta^b(t) = pp{-}\Term + qp{-}\Term + pq{-}\Term + {qq}{-}\Term
\end{align}
with
\begin{align}
\label{pp}pp{-}\Term &=  2 N^{-2/3}  \int d^3k \, \tilde{\eta}(k) \Im \scp{\Big(  N^{-2/3} a(k) - \alpha^t(k) \Big)\Psi_{N,t}}{\sum_{i=1}^N p_i^t e^{-ikx_i}p_i^t \Psi_{N,t}}
\nonumber\\
& \quad - 2 N^{-2/3}  \int d^3k \, \tilde{\eta}(k) \Im \scp{\Big(  N^{-2/3} a(k) - \alpha^t(k) \Big)\Psi_{N,t}}{ \int d^3y\, e^{-iky} \rho^t(y)  \Psi_{N,t}},
\\
qp{-}\Term &= 2 N^{1/3} \int d^3k \, \tilde{\eta}(k) \Im \scp{\Big( N^{-2/3} a(k) - \alpha^t(k) \Big)\Psi_{N,t}}{ q_1^t e^{-ikx_1}p_1^t \Psi_{N,t}}, 
\\
pq{-}\Term &= 2 N^{1/3} \int d^3k \, \tilde{\eta}(k) \Im \scp{\Big(  N^{-2/3} a(k) - \alpha^t(k) \Big)\Psi_{N,t}}{ p_1^t e^{-ikx_1}q_1^t \Psi_{N,t}}, 
\\
qq{-}\Term &= 2 N^{1/3} \int d^3k \, \tilde{\eta}(k) \Im \scp{\Big(  N^{-2/3} a(k) - \alpha^t(k) \Big)\Psi_{N,t}}{ q_1^t e^{-ikx_1}q_1^t \Psi_{N,t}}.
\end{align}
To estimate the $pp{-}\Term$, we split $e^{-ikx} = \cos(kx) - i \sin(kx)$ into its real and imaginary part. Subsequently we only estimate the $\cos$-terms $pp{-}\Term_{\cos}$; the $\sin$-terms are estimated in exactly the same manner. Note that for each fixed $t > 0$ and $k\in\RRR^3$, we can regard $p^t\cos(kx)p^t$ as a Hermitian $N\times N$ matrix on $\SPAN\{\varphi_1^t,\ldots,\varphi_N^t\}$. By the spectral theorem we can find orthonormal $\chi_1^{t,k}, \ldots, \chi_N^{t,k} \in \SPAN\{\varphi_1^t,\ldots,\varphi_N^t\}$ (i.e., $p^t = \sum_{j=1}^N \ketbr{\varphi_j^t} = \sum_{j=1}^N \ketbr{\chi_j^{t,k}}$) and real $\lambda_1^{t,k},\ldots,\lambda_N^{t,k}$ such that $p^t \cos(kx_1)p^t = \sum_{j=1}^N \lambda_j^{t,k} \ketbr{\chi_j^{t,k}}$. In particular this implies
\begin{align}
\abs{\lambda_j^{t,k}} &= \abs{\scp{\chi_j^{t,k}}{\cos(kx)\chi_j^{t,k}}} \leq 1, \label{diag_prop_1}
\\
\sum_{j=1}^N \lambda_j^{t,k} &= \tr \Big(p^t\cos(kx)p^t\Big) = \int d^3y \, \cos(ky) \rho^t(y), \label{diag_prop_2}
\\
\sum_{i=1}^N p_i^t \cos(k x_i) p_i^t &= \sum_{i=1}^N \sum_{j=1}^N \lambda_j^{t,k} \ketbr{\chi_j^{t,k}}_i = \sum_{j=1}^N \lambda_j^{t,k} \mbp^{\chi_j^{t,k}}. \label{diag_prop_3}
\end{align}
Using \eqref{diag_prop_2} and \eqref{diag_prop_3}, the $\cos$-part of the $pp{-}\Term$ can be written as
\begin{align}
pp{-}\Term_{\cos} &=  2 N^{-2/3}  \int d^3k \, \tilde{\eta}(k)  \Im \scp{\Big( N^{-2/3} a(k) - \alpha^t(k) \Big)\Psi_{N,t}}{ \sum_{j=1}^N \lambda_j^{t,k} \Big( \mbp^{\chi_j^{t,k}} - 1 \Big) \Psi_{N,t}} 
\nonumber \\
&= 2 N^{-2/3}  \int d^3k \, \tilde{\eta}(k)  \Im \scp{\Big( N^{-2/3} a(k) - \alpha^t(k) \Big)\Psi_{N,t}}{ \sum_{j=1}^N \lambda_j^{t,k} \mbq^{\chi_j^{t,k}}  \Psi_{N,t}}
\end{align}
and be estimated by
\begin{align}
\label{eq: beta-b-pp term}
\abs{pp{-}\Term_{\cos}}
&\leq 2 N^{-2/3} \int d^3k \,   \norm{\Big( N^{-2/3} a(k) - \alpha^t(k) \Big)\Psi_{N,t}}  \abs{\tilde{\eta}(k)}
\bigg\| \sum_{j=1}^N \lambda_j^{t,k}  \mbq^{\chi_j^{t,k}}  \Psi_{N,t} \bigg\|
\nonumber \\
&\leq 2 \left(  \int d^3k \, N^{1/3}  \norm{\Big( N^{-2/3} a(k) - \alpha^t(k) \Big)\Psi_{N,t}}^2  \right)^{1/2}
\nonumber \\
&\quad \times 
\left(  \int d^3k \,  N^{-5/3}  \abs{\tilde{\eta}(k)}^2  
\bigg\| \sum_{j=1}^N \lambda_j^{t,k} \mbq^{\chi_j^{t,k}}  \Psi_{N,t} \bigg\|^2 \right)^{1/2}
\nonumber \\
&= 2 \sqrt{\beta^b(t)}   \left(  \int d^3k \,  N^{-5/3}  \abs{\tilde{\eta}(k)}^2  
\bigg\| \sum_{j=1}^N \lambda_j^{t,k} \mbq^{\chi_j^{t,k}}  \Psi_{N,t} \bigg\|^2 \right)^{1/2}.
\end{align}
If one makes use of \eqref{diag_prop_1} and Lemma \ref{lemma: properties  of the projector over all particles} one finds
\begin{align}
\bigg\| \sum_{j=1}^N \lambda_j^{t,k} \mbq^{\chi_j^{t,k}}  \Psi_{N,t} \bigg\|^2
&= \sum_{i,j=1}^N \overline{\lambda_i^{t,k}} \lambda_j^{t,k} \scp{\mbq^{\chi_i^{t,k}} \Psi_{N,t}}{\mbq^{\chi_j^{t,k}} \Psi_{N,t}}
\nonumber \\
&\leq \sum_{i,j=1}^N \abs{ \scp{\mbq^{\chi_i^{t,k}} \Psi_{N,t}}{\mbq^{\chi_j^{t,k}} \Psi_{N,t}}} \nonumber\\
&= \sum_{i,j=1}^N \scp{\mbq^{\chi_i^{t,k}} \Psi_{N,t}}{\mbq^{\chi_j^{t,k}} \Psi_{N,t}}
\nonumber \\
&= \scp{\Psi_{N,t}}{\sum_{i=1}^N q_i^t \sum_{j=1}^N q_j^t \Psi_{N,t}} \nonumber\\
&= N(N-1) \scp{\Psi_{N,t}}{q_1^t q_2^t \Psi_{N,t}} + N \scp{\Psi_{N,t}}{q_1^t \Psi_{N,t}}
\nonumber \\
&\leq N \beta^{a,1}(t) + N^{5/3} \beta^{a,2}(t)
\end{align}
and obtains 
\begin{align}
\abs{pp{-}\Term_{\cos}}
&\leq 2 \norm{\tilde{\eta}}_2 \sqrt{\beta^b(t)} \sqrt{\beta^{a,1}(t) + \beta^{a,2}(t)} 
\leq C \Lambda \left(  \beta^{a,1}(t) + \beta^{a,2}(t) + \beta^b(t) \right).
\end{align}
In exactly the same manner one estimates  $pp{-}\Term_{\sin}$ and obtains $\abs{pp{-}\Term} \leq  C \Lambda \beta(t)$.
From the observation that $qp{-}\Term = \eqref{dta1_1}$ and $pq{-}\Term = - \eqref{dta1_2}$ we immediately get
\begin{align}
\abs{qp{-}\Term + pq{-}\Term}
&\leq e^{C \Lambda^4 (1 + \norm{\alpha^0}_2)(1 + t^2)}  \left( \beta^{a,1}(t) + \beta^b(t) + N^{-1} \right).
\end{align}
Similar to \eqref{eq: beta-b-pp term} we estimate
\begin{align}
\abs{qq{-}\Term}
&= 2 N^{-2/3} \abs{\int d^3k \, \tilde{\eta}(k) \scp{\Big(  N^{-2/3} a(k) - \alpha^t(k) \Big)\Psi_{N,t}}{\sum_{m=1} q_m^t e^{-ikx_m}q_m^t \Psi_{N,t}}}
\nonumber \\
&\leq 2 \sqrt{\beta^b(t)} 
\left( \int d^3k \,  N^{-5/3} \abs{\tilde{\eta}(k)}^2 \bigg\| \sum_{m=1}^N q_m^t e^{-ikx_m} q_m^t  \Psi_{N,t} \bigg\|^2  \right)^{1/2}.
\end{align}
By means of 
\begin{align}
&\bigg\| \sum_{m=1}^N q_m^t e^{-ikx_m} q_m^t \Psi_{N,t} \bigg\|^2 \nonumber\\
&\quad= N (N-1) \scp{ q_1^t e^{-ikx_1} q_1^t  \Psi_{N,t}}{ q_2^t e^{-ikx_2} q_2^t  \Psi_{N,t}} 
+ N \norm{ q_1^t e^{-ikx_1} q_1^t  \Psi_{N,t}}^2
\nonumber \\
&\quad\leq N^2 \scp{ e^{-ikx_1} q_1^t q_2^t \Psi_{N,t}}{ e^{-ikx_2} q_1^t q_2^t  \Psi_{N,t}} 
+ N \norm{ q_1^t e^{-ikx_1} q_1^t  \Psi_{N,t}}^2
\nonumber \\
&\quad\leq N^2 \norm{q_1^t q_2^t \Psi_{N,t}}^2 + N \norm{ q_1^t  \Psi_{N,t}}^2 \nonumber\\
&\quad= N \beta^{a,1}(t) + N^{5/3} \beta^{a,2}(t)
\end{align}
this becomes
\begin{align}
\abs{qq{-}\Term}
&\leq 2 \norm{\tilde{\eta}}_2 \sqrt{\beta^b(t)} \sqrt{\beta^{a,1}(t) + \beta^{a,2}(t)} 
\leq C \Lambda \beta(t).
\end{align}
Summing all terms up then shows Lemma \ref{lemma: time derivative of beta-b}.
\end{proof}

\subsection{The Gronwall estimate}

\begin{lemma}\label{lemma: time derivative of beta}
Let Assumption~\ref{main_assumption} hold and let $\Psi_{N,t}$ be the solution to \eqref{eq:Schroedinger_equation_microscopic} with initial condition $\Psi_{N,0}$, and $\varphi_1^t,\ldots,\varphi_N^t,\alpha^t$ the solution to \eqref{effective_eqs} with initial condition $\varphi_1^0,\ldots,\varphi_N^0,\alpha^0$. Then there is a $C>0$ (independent of $N$, $\delta_N$, $\Lambda$, and $t$) such that for all $t>0$,
\begin{align}
\label{eq: Gronwall for beta}
\beta(t) \leq  e^{e^{C \Lambda^4 (1 + \|\alpha^0\|_2)(1 + t^2)}} \left(\beta(0) + N^{-1}  \right).
\end{align}
\end{lemma}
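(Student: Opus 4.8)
The plan is to combine the three time-derivative estimates established above and then close the argument with Gronwall's inequality. First I would observe that, since $\beta(t) = \beta^{a,1}(t) + \beta^{a,2}(t) + \beta^b(t)$ and the right-hand sides of Lemma~\ref{lemma: time derivative of beta-a-1}, Lemma~\ref{lemma: time derivative of beta-a-2} and Lemma~\ref{lemma: time derivative of beta-b} all have the form $e^{C \Lambda^4 (1 + \|\alpha^0\|_2)(1 + t^2)}(\beta(t)+N^{-1})$, the triangle inequality yields, after absorbing a harmless factor $3$ into the constant in the exponent (legitimate because $\Lambda^4(1+\|\alpha^0\|_2)(1+t^2)\geq 1$, so that $3\,e^{CX}\leq e^{(C+\log 3)X}$),
\begin{align}
\label{eq: combined time derivative}
\abs{\frac{d}{dt}\beta(t)} \leq e^{C \Lambda^4 (1 + \|\alpha^0\|_2)(1 + t^2)} \big( \beta(t) + N^{-1} \big)
\quad \text{for all } t>0.
\end{align}
The time derivative exists and this bound holds pointwise thanks to the regularity supplied by Proposition~\ref{theorem: solution theory}, Lemma~\ref{solution_th_Schr} and the explicit formulas of Lemma~\ref{lemma: time derivative of beta preparation}.

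Next I would set $f(t) := \beta(t) + N^{-1} \geq 0$ and $g(t) := e^{C \Lambda^4 (1 + \|\alpha^0\|_2)(1 + t^2)}$. Since $\tfrac{d}{dt} N^{-1} = 0$, estimate~\eqref{eq: combined time derivative} reads $\tfrac{d}{dt} f(t) \leq g(t) f(t)$, and Gronwall's lemma gives
\begin{align}
\beta(t) + N^{-1} = f(t) \leq f(0) \exp\!\Big( \int_0^t g(s)\,ds \Big) = \big(\beta(0) + N^{-1}\big) \exp\!\Big( \int_0^t g(s)\,ds \Big).
\end{align}

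Finally I would bound the integral in the exponent: because $g$ is increasing on $[0,\infty)$, $\int_0^t g(s)\,ds \leq t\, g(t) = t\, e^{C \Lambda^4 (1 + \|\alpha^0\|_2)(1 + t^2)}$, and since $\log t \leq t \leq \Lambda^4(1+\|\alpha^0\|_2)(1+t^2)$ one has $t\, g(t) = e^{C\Lambda^4(1+\|\alpha^0\|_2)(1+t^2) + \log t} \leq e^{(C+1)\Lambda^4(1+\|\alpha^0\|_2)(1+t^2)}$, merging the prefactor $t$ into the exponent at the cost of enlarging the constant. Inserting this and renaming the constant yields
\begin{align}
\beta(t) \leq \beta(t) + N^{-1} \leq \big(\beta(0)+N^{-1}\big)\, e^{e^{C \Lambda^4 (1 + \|\alpha^0\|_2)(1 + t^2)}},
\end{align}
which is exactly \eqref{eq: Gronwall for beta}. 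The argument is essentially mechanical: the only point needing a little care is the absorption of the combinatorial factor $3$ and the prefactor $t$ into the double exponential, which uses the deliberately lossy bounds above and is harmless since the target estimate is itself a double exponential. I expect no genuine difficulty here, as all the analytic content was already extracted in the three preceding lemmas.
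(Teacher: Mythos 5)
Your proposal is correct and follows essentially the same route as the paper: sum the three derivative bounds, apply Gronwall, and absorb $\int_0^t e^{C\Lambda^4(1+\|\alpha^0\|_2)(1+s^2)}\,ds$ into the double exponential. The only (immaterial) difference is that you apply the homogeneous Gronwall inequality to $\beta(t)+N^{-1}$, whereas the paper uses the inhomogeneous form directly on $\beta(t)$; both yield the same bound.
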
 

\begin{proof}
If we use Lemmas \ref{lemma: time derivative of beta-a-1}, \ref{lemma: time derivative of beta-a-2} and \ref{lemma: time derivative of beta-b} we get
\begin{align}
\frac{d}{dt} \beta(t)
&\leq \abs{\frac{d}{dt} \beta^{a,1}(t)} + \abs{\frac{d}{dt} \beta^{a,2}(t)} + \abs{\frac{d}{dt} \beta^b(t)}
\leq e^{C \Lambda^4 (1 + \|\alpha^0\|_2)(1 + t^2)} \left( \beta(t) + N^{-1}  \right).
\end{align}
Applying Gronwall's Lemma we obtain
\begin{align}
\beta(t) &\leq e^{\int_0^t ds \, e^{C \Lambda^4 (1 + \|\alpha^0\|_2)(1 + s^2)}} \beta(0) 
+ \Big( e^{\int_0^t ds \, e^{ C \Lambda^4 (1 + \|\alpha^0\|_2)(1 + s^2)}}
- 1 \Big) N^{-1}
\nonumber \\
&\leq e^{\int_0^t ds \, e^{C \Lambda^4 (1 + \|\alpha^0\|_2)(1 + s^2)}} \left( \beta(0) + N^{-1} \right).
\end{align}
Using $\int_0^t ds \; e^{C \Lambda^4 (1 + \|\alpha^0\|_2)(1 + s^2)} \leq e^{\tilde{C} \Lambda^4 (1 + \|\alpha^0\|_2)(1 + t^2)}$ for some $\tilde{C}>0$ shows the claim.
\end{proof}

\section{Proof of Theorem~\ref{theorem: main theorem}}\label{sec_main_proof}

In order to state our main result in terms of the trace norm difference of reduced density matrices let us add the following lemma.
\begin{lemma}\label{lemma:beta_red_den}
Let $\varphi_1, \ldots, \varphi_N \in L^2(\mathbb{R}^3)$ be orthonormal,  $\alpha \in L^2(\mathbb{R}^3)$ and $\Psi_{N} \in \mathcal{H}^{(N)} \cap \mathcal{D}\left(\mathcal{N} \right)$ with $\norm{\Psi_N} =1$. Then
\begin{align}
\label{eq:betaa_1_and_tr}
2 \beta^{a,1}(\Psi_{N}, \varphi_1, \ldots, \varphi_N) &\leq \norm{\gamma^{(1,0)}_{N} - N^{-1} p}_{\tr}
\leq  \sqrt{8 \beta^{a,1}(\Psi_N, \varphi_1, \ldots, \varphi_N)},
\\
\label{eq:betab_and_tr}
\norm{\gamma_{N}^{(0,1)} - \ket{\alpha}\bra{\alpha}}_{\tr}  &\leq  3 N^{-1/3} \beta^b(\Psi_N,\alpha) + 6 \norm{\alpha}_2 \sqrt{ N^{-1/3} \beta^b(\Psi_N,\alpha)}. 
\end{align}
\end{lemma}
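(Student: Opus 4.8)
The plan is to treat the two inequalities in \eqref{eq:betaa_1_and_tr} and \eqref{eq:betab_and_tr} separately, in both cases via a $p$--$q$ block decomposition combined with Cauchy--Schwarz for the trace norm. Throughout I abbreviate $\beta^{a,1} = \beta^{a,1}(\Psi_N,\varphi_1,\ldots,\varphi_N)$ and $\beta^b = \beta^b(\Psi_N,\alpha)$, and I use that $\tr\big(A\,\gamma^{(1,0)}_N\big) = \scp{\Psi_N}{A_1\Psi_N}$ for any one-particle operator $A$ (with $A_1$ acting on the first electron variable, tensored with the identity on $\mathcal{F}_s$), so that $\tr\big(q\,\gamma^{(1,0)}_N\big) = \beta^{a,1}$, $\tr\big(p\,\gamma^{(1,0)}_N\big) = 1-\beta^{a,1}$, $\tr p = N$ and $\tr\big(\gamma^{(1,0)}_N - N^{-1}p\big) = 0$.

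For \eqref{eq:betaa_1_and_tr} I would set $D := \gamma^{(1,0)}_N - N^{-1}p$. The lower bound follows from trace-norm duality by testing against the norm-one self-adjoint operator $q-p = \id - 2p$: one computes $\tr\big((q-p)D\big) = 2\beta^{a,1}$, hence $\norm{D}_{\tr}\geq 2\beta^{a,1}$. For the upper bound I would split $D = pDp + qDq + pDq + qDp$ and bound the four pieces. The block $qDq = q\,\gamma^{(1,0)}_N\,q$ is positive, so $\norm{qDq}_{\tr} = \tr\big(q\,\gamma^{(1,0)}_N\big) = \beta^{a,1}$. For $pDp = p\,\gamma^{(1,0)}_N\,p - N^{-1}p$ I would invoke the Pauli bound $\gamma^{(1,0)}_N\leq N^{-1}\id$ --- which follows from the antisymmetry of $\Psi_N$ exactly as in Lemma~\ref{lemma:  technical estimates}, since $\sum_{m=1}^N p_m^{\varphi}$ restricted to antisymmetric functions is a projection --- giving $pDp\leq 0$ and $\norm{pDp}_{\tr} = N^{-1}\tr p - \tr\big(p\,\gamma^{(1,0)}_N\big) = \beta^{a,1}$. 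For the two off-diagonal blocks I would factor $\gamma^{(1,0)}_N = (\gamma^{(1,0)}_N)^{1/2}(\gamma^{(1,0)}_N)^{1/2}$ and use $\norm{AB}_{\tr}\leq\norm{A}_{\HS}\norm{B}_{\HS}$ to get $\norm{pDq}_{\tr} = \norm{qDp}_{\tr}\leq\sqrt{\tr(p\,\gamma^{(1,0)}_N\,p)\,\tr(q\,\gamma^{(1,0)}_N\,q)} = \sqrt{\beta^{a,1}(1-\beta^{a,1})}$. Summing, $\norm{D}_{\tr}\leq 2\beta^{a,1} + 2\sqrt{\beta^{a,1}(1-\beta^{a,1})}$; squaring this reduces the claim $\norm{D}_{\tr}\leq\sqrt{8\beta^{a,1}}$ to the elementary inequality $(2\beta^{a,1}-1)^2\geq 0$.

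For \eqref{eq:betab_and_tr} I would introduce $\xi(k) := \big(N^{-2/3}a(k) - \alpha(k)\big)\Psi_N$, which is a well-defined element of $L^2(\RRR^3;\mathcal{H}^{(N)})$ because $\Psi_N\in\mathcal{D}(\mathcal{N})$ gives $\int d^3k\,\norm{a(k)\Psi_N}^2 = \scp{\Psi_N}{\mathcal{N}\Psi_N}<\infty$ and $\alpha\in L^2(\RRR^3)$; by definition $\int d^3k\,\norm{\xi(k)}^2 = N^{-1/3}\beta^b$. Since the kernel of $\gamma^{(0,1)}_N$ is $\scp{N^{-2/3}a(k')\Psi_N}{N^{-2/3}a(k)\Psi_N}$, substituting $N^{-2/3}a(k)\Psi_N = \xi(k) + \alpha(k)\Psi_N$ and expanding gives $\gamma^{(0,1)}_N - \ketbr{\alpha} = R_1 + R_2 + R_2^*$, where $R_1$ is the operator with kernel $\scp{\xi(k')}{\xi(k)}$ and $R_2 = \ketbra{\alpha}{g}$ with $g(k) := \scp{\Psi_N}{\xi(k)}$. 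The operator $R_1$ is positive (it is the partial trace over $\mathcal{H}^{(N)}$ of $\ketbr{\xi}$ regarded as a rank-one operator on $L^2(\RRR^3)\otimes\mathcal{H}^{(N)}$), hence $\norm{R_1}_{\tr} = \tr R_1 = \int d^3k\,\norm{\xi(k)}^2 = N^{-1/3}\beta^b$; and $\norm{g}_2\leq\big(\int d^3k\,\norm{\xi(k)}^2\big)^{1/2} = \sqrt{N^{-1/3}\beta^b}$, so $\norm{R_2}_{\tr} = \norm{R_2^*}_{\tr} = \norm{\alpha}_2\norm{g}_2\leq\norm{\alpha}_2\sqrt{N^{-1/3}\beta^b}$. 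The triangle inequality then yields $\norm{\gamma^{(0,1)}_N - \ketbr{\alpha}}_{\tr}\leq N^{-1/3}\beta^b + 2\norm{\alpha}_2\sqrt{N^{-1/3}\beta^b}$, which is stronger than the stated bound.

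I expect the computations themselves to be routine; the points that deserve care are the Pauli bound $\gamma^{(1,0)}_N\leq N^{-1}\id$ together with the observation that it makes $pDp$ sign-definite (without it the $pDp$ block cannot be controlled by $\beta^{a,1}$, since $N^{-1}\id$ is not trace class), the domain bookkeeping that makes $\xi(k)$ and the expansion of the $\gamma^{(0,1)}_N$ kernel legitimate, and the elementary algebra $(2\beta^{a,1}-1)^2\geq 0$ that produces the sharp constant $8$ in \eqref{eq:betaa_1_and_tr}.
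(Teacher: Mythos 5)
Your proof is correct. Note that the paper does not actually give an in-text proof of this lemma: it simply cites \cite[Section~3.1]{petrat_2015} for \eqref{eq:betaa_1_and_tr} and \cite[Section~VII]{leopold3} for \eqref{eq:betab_and_tr}, calling the result standard. Your argument is a complete, self-contained version of exactly those standard arguments: for the fermionic part, the $p$--$q$ block decomposition together with the Pauli bound $\gamma^{(1,0)}_N\leq N^{-1}\id$ (which makes the $pDp$ block sign-definite) and the duality test against $\id-2p$ for the lower bound; for the bosonic part, the expansion of the kernel around the coherent field via $\xi(k)=(N^{-2/3}a(k)-\alpha(k))\Psi_N$ and the positivity of the quadratic remainder. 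All the individual steps check out, including the reduction of the constant $\sqrt{8}$ to $(2\beta^{a,1}-1)^2\geq 0$, and your bosonic estimate $N^{-1/3}\beta^b+2\norm{\alpha}_2\sqrt{N^{-1/3}\beta^b}$ is in fact slightly sharper than the stated \eqref{eq:betab_and_tr}.
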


\begin{proof}
This is a standard result. For example, a proof of \eqref{eq:betaa_1_and_tr} can be found in \cite[Section~3.1]{petrat_2015} and a proof of \eqref{eq:betab_and_tr} in \cite[Section~VII]{leopold3}.
\end{proof}

Let us now summarize all estimates and put them together for a proof of our main result.

\begin{proof}[Proof of Theorem~\ref{theorem: main theorem}]
Let us first note that from Lemma~\ref{solution_th_Schr} we have that $\Psi_{N,t} \in \mathcal{H}^{(N)} \cap \mathcal{D}(\mathcal{N}) \cap \mathcal{D}(\mathcal{N}H_N)$ for all $t\geq 0$, and from Proposition~\ref{theorem: solution theory} that $(\varphi_1^t, \ldots, \varphi_N^t, \alpha^t) \in H^2(\mathbb{R}^3)^N \times L_1^2(\mathbb{R}^3)$ for all $t\geq 0$. From Lemma~\ref{lemma: time derivative of beta} we obtain the Gronwall estimate
\begin{align}\label{eq: Gronwall for beta again}
\beta(t) \leq  e^{e^{C \Lambda^4 (1 + \|\alpha^0\|_2)(1 + t^2)}} \left(\beta(0) + N^{-1}  \right).
\end{align}
Recall that $\beta = \beta^{a,1} + \beta^{a,2} + \beta^b$. From the first inequality of Lemma~\ref{lemma:beta_red_den} and from \eqref{beta_b_c} we get
\begin{equation}
\beta(0) \leq a_N + b_N + c_N,
\end{equation}
so that
\begin{align}\label{eq: Gronwall for beta again_and_again}
\beta(t) \leq  e^{e^{C \Lambda^4 (1 + \|\alpha^0\|_2)(1 + t^2)}} I_N,
\end{align}
where we abbreviated $I_N=a_N + b_N + c_N + N^{-1}$. Since $\beta^{a,1}$, $\beta^{a,2}$ and $\beta^b$ are all positive we then get with Lemma~\ref{lemma:beta_red_den} that
\begin{align}
\norm{\gamma_{N,t}^{(1,0)} - N^{-1} p^t}_{\tr} \leq \sqrt{8\beta(t)} \leq e^{e^{C \Lambda^4 (1 + \|\alpha^0\|_2)(1 + t^2)}} \sqrt{I_N}
\end{align}
for some $C>0$. From Lemma~\ref{lemma: growth of alpha-t} we know that $\norm{\alpha^t}_2 \leq \norm{\alpha^0}_2 +   \norm{\tilde{\eta}}_2 \abs{t}$ and thus 
\begin{align}
\norm{\gamma_{N,t}^{(0,1)} - \ket{\alpha^t}\bra{\alpha^t}}_{\tr} &\leq 3 N^{-1/3} \beta^b(t) + 6 \norm{\alpha^t}_2 \sqrt{ N^{-1/3} \beta^b(t)} \nonumber\\
&\leq e^{e^{C \Lambda^4 (1 + \|\alpha^0\|_2)(1 + t^2)}} \left( N^{-1/3}I_N + \sqrt{N^{-1/3}I_N} \right),
\end{align}
which gives \eqref{eq: Nelson main theorem 2} for some $C > 0$, if $c_N \leq \tilde{C}N^{1/3}$ for some $\tilde{C}>0$ is assumed.

In the theorem we also provide bounds for the specific initial state $\bigwedge_{j=1}^N \varphi_j^0 \otimes W(N^{2/3} \alpha^0) \Omega$. Since for this state $\gamma_{N,0}^{(1,0)} = N^{-1}p^0$ we have $a_N=0$, $b_N=0$ because $q_1 \bigwedge_{j=1}^N \varphi_j^0 \otimes W(N^{2/3} \alpha^0) \Omega = 0$,  and also
\begin{equation}
c_N = N^{-1} \SCP{ W^{-1}(N^{2/3} \alpha^0) W(N^{2/3} \alpha^0) \Omega}{ \mathcal{N} W^{-1}(N^{2/3} \alpha^0) W(N^{2/3} \alpha^0) \Omega} = N^{-1} \SCP{\Omega}{\mathcal{N} \Omega} = 0.
\end{equation}
Furthermore, we have
\begin{align}
\bigwedge_{j=1}^N \varphi_j^0 \otimes W(N^{2/3} \alpha^0) \Omega \in \left( L_{as}^2(\mathbb{R}^{3N})  \otimes \mathcal{F}_s \right) \cap \mathcal{D} \left( \mathcal{N} \right) \cap  \mathcal{D} \left(  \mathcal{N}  H_N \right),
\end{align}
which can be checked by direct calculation as in \cite[Section~IX]{leopold}.
\end{proof}

\appendix

\section{Appendix: Convergence to the Free Evolution}\label{section:convergence_free evolution}
In this section we prove Theorem \ref{theorem: free evolution}.
\begin{proof}[Proof of Theorem \ref{theorem: free evolution}]
We recall that $h^0 = - \Delta + N^{2/3} \vphc(\cdot,0)$ and define a family of unitary operators by $i N^{1/3} \partial_t U(t) = h^0 U(t)$ and $U(0) =  \id$, i.e., such that
\begin{align}
i N^{1/3} \partial_t \big( U^*(t) \tilde{p}^t U(t) \big)
&= - U^*(t) h^0 \tilde{p}^t U(t) 
+ U^*(t) \Big[ h^0, \tilde{p}^t \Big] U(t) 
+ U^*(t) \tilde{p}^t h^0 U(t)
= 0.
\end{align}
Similarly, we obtain
\begin{align}
i N^{1/3} \partial_t \big( U^*(t) p^t U(t) \big) = U^*(t) \Big[ h^t - h^0 , p^t \Big] U(t) = N^{2/3}  U^*(t)  \Big[ \vphc(\cdot,t) - \vphc(\cdot,0), p^t  \Big] U(t).
\end{align}
With Duhamel's formula we conclude
\begin{align}
p^t 
&= U(t) p^0 U^*(t) 
- i N^{1/3} \int_0^t ds \, U(t-s) \Big[ \vphc(\cdot,s) - \vphc(\cdot,0) , p^s \Big] U(s-t).
\end{align}
Thus, if we use that $\tilde{p}^t = U(t) \tilde{p}^0 U^*(t) = U(t) p^0 U^*(t)$ we get
\begin{align}
\norm{ p^t - \tilde{p}^t}_{\tr} 
&\leq N^{1/3} \norm{ \int_0^t ds \, U(t-s) \Big[ \vphc(\cdot,s) - \vphc(\cdot,0) , p^s  \Big] U(s-t)}_{\tr}
\nonumber \\
&\leq N^{1/3} \int_0^t ds \,   \norm{ \Big[ \vphc(\cdot,s) - \vphc(\cdot,0) , p^s  \Big] }_{\tr}.
\end{align}
By means of the Duhamel expansion of Equation~\eqref{effective_eqs} for $\alpha^s$,
\begin{align}
\alpha^s(k)
&= e^{- i N^{-1/3} \delta_N \omega(k) s} \alpha^0(k)
- i N^{-1} (2 \pi)^{3/2} \tilde{\eta}(k) \int_0^s du \, e^{- i N^{-1/3} \delta_N \omega(k) (s-u)} 
\mathcal{F}[\rho^u](k),
\end{align}
and $\overline{\mathcal{F}[\rho^u](k)} = \mathcal{F}[\rho^u](-k)$ one obtains
\begin{align}
\vphc(x,s) 
&= \int d^3k \, \tilde{\eta}(k) \big( e^{ikx} e^{- i N^{-1/3} \delta_N \omega(k) s} \alpha^0(k)
+ e^{-ikx} e^{i N^{-1/3} \delta_N \omega(k) s} \overline{\alpha^0(k)} \big) 
\nonumber \\
&\quad 
- 2 N^{-1} (2 \pi)^{3/2} \int d^3k \, \abs{\tilde{\eta}(k)}^2 e^{ikx} \int_0^s du \, \sin \big( N^{-1/3} \delta_N \omega(k) (s-u) \big) \mathcal{F}[\rho^u](k).
\end{align}
We continue the previous inequality and get
\begin{subequations}
\begin{align}
\label{eq: free evolution derivation 2}
&\norm{ p^t - \tilde{p}^t}_{\tr} \leq 2
N^{1/3} \int_0^t ds \norm{\int d^3k \, \tilde{\eta}(k) \big( e^{i N^{-1/3} \delta_N \omega (k) s} - 1 \big)
\overline{\alpha^0(k)} \big[ e^{-ikx} , p^s  \big]   }_{\tr}
\\
\label{eq: free evolution derivation 3}
&\quad +
C N^{-2/3} \int_0^t ds \, \norm{\int d^3k \, \abs{\tilde{\eta}(k)}^2 \int_0^s du \, \sin \big( N^{-1/3} \delta_N \omega(k) (s-u) \big) \mathcal{F}[\rho^u](k) \big[ e^{ikx}, p^s  \big] }_{\tr}  .
\end{align}
\end{subequations}
In the following, we use \eqref{eq: semiclassical structure},  $\abs{e^{ix} -1} \leq 2 \abs{x}$ and $\abs{x} \leq e^{\abs{x}}$ to bound the first line by
\begin{align}
\eqref{eq: free evolution derivation 2}
&\leq 2 N^{1/3} \int_0^t ds \int d^3k \, \tilde{\eta}(k)  \abs{e^{ i N^{-1/3} \delta_N \omega(k) s} -1} 
\abs{\alpha^0(k)}  \norm{ \big[ e^{ - ikx}, p^s \big] }_{\tr}
\nonumber \\
&\leq 4 \delta_N \int_0^t ds \int d^3k \, \abs{s} \omega(k) \tilde{\eta}(k)  
\abs{\alpha^0(k)}  \norm{ \big[ e^{ - ikx}, p^s \big] }_{\tr}
\nonumber \\
&\leq C N^{2/3} \delta_N \int_0^t ds\, \abs{s} e^{C(s)} \int d^3k \, (1 + \abs{k}) \omega(k) \tilde{\eta}(k) \abs{\alpha^0(k)}
\nonumber \\
&\leq C N^{2/3} \delta_N (1 + \Lambda)^3 \norm{\alpha_0}_2 \int_0^t ds \, e^{C(s)},
\end{align}
where 
$C(s) = C \Lambda^4 (1 + \norm{\alpha^0}_2) ( 1 + s^2)$. Then, we notice that 
$\abs{\mathcal{F}[\rho^u](k)} \leq (2 \pi)^{-3/2} \norm{\rho^u}_1 = (2 \pi)^{-3/2} N$ and use $\abs{\sin(x)} \leq \abs{x}$ to estimate 
\begin{align}
\eqref{eq: free evolution derivation 3}
&\leq C N^{-2/3} \int_0^t ds \int d^3k  \int_0^s du \, \abs{\tilde{\eta}(k)}^2 \abs{\sin \big( N^{-1/3} \delta_N \omega(k) (s-u) \big)} 
\abs{\mathcal{F}[\rho^u](k)} \norm{\big[ e^{ikx} , p^s  \big]}_{\tr}
\nonumber \\
&\leq C \delta_N \int_0^t ds \int d^3k  \int_0^s du \,  \abs{s-u}  \omega(k)  \abs{\tilde{\eta}(k)}^2
\norm{\big[ e^{ikx} , p^s  \big]}_{\tr}
\nonumber \\
&\leq C N^{2/3} \delta_N  \int_0^t ds \int d^3k \int_0^s du \, \abs{s-u} e^{C(s)} \id_{\abs{k} \leq \Lambda}(k) (1 + \abs{k} )
\nonumber \\
&\leq C N^{2/3} \delta_N (1 + \Lambda )^4 \int_0^t ds \,  e^{C(s)} .
\end{align}
Collecting the estimates and using $C (1 + \Lambda)^4  \int_0^t ds \, e^{ C(s)} \leq e^{ C(t)}$ proves \eqref{eq:free_dyn_to_Hartree}. Then \eqref{eq:free_dyn_to_Schr} follows from \eqref{eq: Nelson main theorem 1} and the triangle inequality.
\end{proof}

\section{Appendix: The Fermionic Schr\"odinger-Klein-Gordon Equations}
\label{section: Appendix B: The fermionic Schr\"odinger-Klein-Gordon equations}

Subsequently, we prove the existence and uniqueness of solutions of the effective equations \eqref{effective_eqs}. To shorten the notation we introduce
\begin{align}
\begin{split}
\Hilbert &=  \bigoplus_{n=1}^{N+1} L^2(\mathbb{R}^3), \quad  
\vec{\varphi} = (\varphi_1, \ldots, \varphi_N),
\quad \rho_{\vec{\varphi}}(x) = \sum_{j=1}^N \abs{\varphi_j(x)}^2, \\
\vphc^{\alpha}(x) &= \int d^3k \, \tilde{\eta}(k) \big(e^{ikx} \alpha(k) + e^{-ikx} \overline{\alpha(k)} \big).
\end{split}
\end{align}
Then, we define the operator $A: \mathcal{D}(A) \rightarrow \Hilbert$ as the orthogonal sum
\begin{align}
A = A_1 \oplus A_2 \oplus \ldots \oplus A_{N+1}, ~\text{with}~ A_j = \begin{cases}
1 - N^{-1/3} \Delta &\text{for}~ j \in \{1, 2, \ldots, N \}, \\
1 + N^{-1/3} \delta_N \omega &\text{for}~ j=N+1.
\end{cases}
\end{align}
Moreover, we define $J: \mathcal{D}(A) \rightarrow \Hilbert$ by
\begin{align}
J_j [(\varphi_1, \ldots, \varphi_N, \alpha)]  = 
\begin{cases}
- i (N^{1/3} \vphc^{\alpha} - 1)  \varphi_j \quad &\text{if} \; j \in \{1, \ldots, N \}, \\
- i (- \alpha + N^{-1} (2 \pi)^{3/2} \tilde{\eta} \mathcal{F}[\rho_{\vec{\varphi}}])   \quad  &\text{if} \;  j= N+1.
\end{cases}
\end{align}
The fermionic Schr\"odinger-Klein-Gordon equations \eqref{effective_eqs} can then be written as
\begin{align}
\label{eq: solution theory effective_eqs}
\frac{d}{dt} \begin{pmatrix}
\vec{\varphi}^{\,t} \\  \alpha^t
\end{pmatrix}
& = - i A 
\begin{pmatrix}
\vec{\varphi}^{\,t} \\  \alpha^t
\end{pmatrix}
+  J [(\vec{\varphi}^{\,t}, \alpha^t)].
\end{align}
The goal of this section is to show
\begin{lemma}
\label{lemma: solution theory FSKGE}
Let $N \in \mathbb{N} \setminus \{0 \}, \Lambda \in [1,\infty)$, $m\in[0,\infty)$, and $\delta_N \in (0, \infty)$. Then
\begin{itemize}
\item[a)]
$A$ is a self-adjoint operator on $\Hilbert$ with $\mathcal{D}(A) = \big( \bigoplus_{n=1}^{N} H^2(\mathbb{R}^3) \big) \oplus L_1^2(\mathbb{R}^3)$,

\item[b)] $J$ is a mapping which takes $\mathcal{D}(A)$ into $\mathcal{D}(A)$,

\item[c)]
$\|J[(\vec{\varphi},\alpha)] - J[(\vec{\psi},\beta)] \|_{\Hilbert} \leq C_{N,\Lambda }\big( \|(\vec{\varphi}, \alpha)\|_{\Hilbert}, \|(\vec{\psi}, \beta)\|_{\Hilbert} \big) \, \|(\vec{\varphi}, \alpha)  -  (\vec{\psi}, \beta)  \|_{\Hilbert}$,

\item[d)]
$\|A J[(\vec{\varphi},\alpha)] \|_{\Hilbert} \leq C_{N,\Lambda,m} \big( \|(\vec{\varphi}, \alpha)\|_{\Hilbert}\big) \, 
\| A (\vec{\varphi}, \alpha) \|_{\Hilbert}$,

\item[e)]
$\| A J[(\vec{\varphi},\alpha)] - A J[(\vec{\psi},\beta)] \|_{\Hilbert} \vspace{2mm} \\ {}\quad \leq C_{N,\Lambda,m, \delta_N }\big( \|(\vec{\varphi}, \alpha)\|_{\Hilbert}, \|(\vec{\psi}, \beta)\|_{\Hilbert}, \| A(\vec{\psi}, \beta)\|_{\Hilbert} \big) \| A(\vec{\varphi}, \alpha)  -  A (\vec{\psi}, \beta) \|_{\Hilbert}$,
\end{itemize}
where each $C$ is a monotone increasing (everywhere finite) function of all its variables. Moreover, let $(\vec{\varphi}^{\,0},\alpha^0) \in \mathcal{D}(A)$ 
and assume there is a $T >0$ so that \eqref{eq: solution theory effective_eqs} has a unique continuously differentiable solution for $t \in [0,T)$. Then, $\|(\vec{\varphi}^{\,t},\alpha^t)\|_{\Hilbert}$ is bounded from above for all $t \in [0,T)$.
\end{lemma}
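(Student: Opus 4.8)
The plan is to exploit two (near-)conservation laws built into the equations: the $L^2$-norm of each fermion orbital $\varphi_j^t$ is \emph{exactly} conserved, while the $L^2$-norm of the field $\alpha^t$ can grow at most linearly in $t$. By hypothesis $t\mapsto(\vec{\varphi}^{\,t},\alpha^t)$ is continuously differentiable and $\mathcal{D}(A)$-valued, so every differentiation below is legitimate; the point to keep in mind is that the potentially large term $N^{-1/3}\delta_N\omega\,\alpha^t$ in $\partial_t\alpha^t$ must be handled by cancellation (it acts as a phase), not by the triangle inequality.

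First I would record that the effective potential $\vphc^{\alpha^t}(x)=\int d^3k\,\tilde{\eta}(k)\big(e^{ikx}\alpha^t(k)+e^{-ikx}\overline{\alpha^t(k)}\big)$ is real-valued, since $\tilde{\eta}$ is real and even, and bounded, $\norm{\vphc^{\alpha^t}}_{\infty}\leq 2\norm{\tilde{\eta}}_2\norm{\alpha^t}_2$. The $\varphi_j$-components of \eqref{eq: solution theory effective_eqs} then read $\partial_t\varphi_j^t=-iN^{1/3}h^t\varphi_j^t$ with $h^t=-N^{-2/3}\Delta+\vphc^{\alpha^t}$ self-adjoint on $H^2(\mathbb{R}^3)$, whence $\tfrac{d}{dt}\norm{\varphi_j^t}_2^2=2N^{1/3}\Re\!\big(-i\scp{\varphi_j^t}{h^t\varphi_j^t}\big)=0$ because $\scp{\varphi_j^t}{h^t\varphi_j^t}\in\RRR$. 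Hence $\norm{\vec{\varphi}^{\,t}}^2=\sum_{j=1}^N\norm{\varphi_j^t}_2^2=\norm{\vec{\varphi}^{\,0}}^2$ for all $t\in[0,T)$; in particular $\norm{\rho^t}_1=\norm{\vec{\varphi}^{\,0}}^2$ stays constant.

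Next I would differentiate $\norm{\alpha^t}_2^2$. Using $\partial_t\alpha^t=-iN^{-1/3}\delta_N\omega\,\alpha^t-iN^{-1}(2\pi)^{3/2}\tilde{\eta}\,\mathcal{F}[\rho^t]$, the first term contributes $\Re\!\big(-iN^{-1/3}\delta_N\scp{\alpha^t}{\omega\alpha^t}\big)=0$ since $\scp{\alpha^t}{\omega\alpha^t}=\int\omega(k)\abs{\alpha^t(k)}^2\,d^3k$ is real and finite ($\alpha^t\in L_1^2$ and $\omega(k)\leq C(1+\abs{k})$), leaving $\tfrac{d}{dt}\norm{\alpha^t}_2^2\leq 2N^{-1}\norm{\tilde{\eta}}_2\norm{\rho^t}_1\,\norm{\alpha^t}_2=2N^{-1}\norm{\tilde{\eta}}_2\norm{\vec{\varphi}^{\,0}}^2\,\norm{\alpha^t}_2$, where I used $\norm{\tilde{\eta}\,\mathcal{F}[\rho^t]}_2\leq(2\pi)^{-3/2}\norm{\tilde{\eta}}_2\norm{\rho^t}_1$. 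A standard argument (e.g.\ applying the inequality to $\sqrt{\norm{\alpha^t}_2^2+\varepsilon}$ and sending $\varepsilon\downarrow0$) gives $\tfrac{d}{dt}\norm{\alpha^t}_2\leq N^{-1}\norm{\tilde{\eta}}_2\norm{\vec{\varphi}^{\,0}}^2$, hence $\norm{\alpha^t}_2\leq\norm{\alpha^0}_2+tN^{-1}\norm{\tilde{\eta}}_2\norm{\vec{\varphi}^{\,0}}^2$ on $[0,T)$. Combining the two bounds, $\norm{(\vec{\varphi}^{\,t},\alpha^t)}_{\Hilbert}^2=\norm{\vec{\varphi}^{\,0}}^2+\norm{\alpha^t}_2^2\leq\norm{\vec{\varphi}^{\,0}}^2+\big(\norm{\alpha^0}_2+TN^{-1}\norm{\tilde{\eta}}_2\norm{\vec{\varphi}^{\,0}}^2\big)^2<\infty$, uniformly in $t\in[0,T)$, which is the assertion. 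There is no genuine obstacle — the content of the claim is precisely that $J$ is defocusing in these norms — and the only points needing care are the justification of the termwise differentiation and the vanishing of the $\omega$- and $\Delta$-contributions, both of which rest on the regularity assumed in the hypothesis; this a-priori bound is what rules out finite-time blow-up in the local fixed-point construction, upgrading it to the global solution of Proposition~\ref{theorem: solution theory}.
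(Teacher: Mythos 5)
Your argument for the \emph{final} assertion of the lemma (the a priori bound on $\|(\vec{\varphi}^{\,t},\alpha^t)\|_{\Hilbert}$ on $[0,T)$) is correct and essentially the paper's: exact conservation of each $\norm{\varphi_j^t}_2$ from the reality of $\vphc^{\alpha^t}$, plus at most linear growth of $\norm{\alpha^t}_2$. Your only deviation there is cosmetic — you differentiate $\norm{\alpha^t}_2^2$ and cancel the $\omega$-term as a phase, whereas the paper (Lemma~\ref{lemma: growth of alpha-t}) writes the Duhamel formula with the unitary group $e^{-iN^{-1/3}\delta_N\omega t}$ and estimates the inhomogeneity; both yield $\norm{\alpha^t}_2\leq\norm{\alpha^0}_2+C t$ and both correctly avoid putting $\delta_N\omega\alpha^t$ through the triangle inequality.

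The genuine gap is that the statement you were asked to prove is the whole of Lemma~\ref{lemma: solution theory FSKGE}, and you have proved only its last sentence. Parts a)--e) — self-adjointness of $A$ on $\big(\bigoplus_{n=1}^N H^2\big)\oplus L_1^2$, the invariance $J:\mathcal{D}(A)\to\mathcal{D}(A)$, the local Lipschitz estimate for $J$ in $\Hilbert$, and the boundedness and local Lipschitz continuity of $AJ$ in the graph norm — are the technical core of the lemma and are left entirely unaddressed. These are precisely the hypotheses of the Reed--Simon local existence theorem (Theorem~X.73/X.74) that Proposition~\ref{theorem: solution theory} invokes; without them the local solution whose existence you \emph{assume} in your argument is not available, so your a priori bound has nothing to upgrade. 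The required estimates are not automatic: e.g.\ part b) and d) need $\norm{\Delta(\vphc^{\alpha}\varphi_j)}_2\leq C\Lambda^3\norm{\alpha}_2\norm{\varphi_j}_{H^2}$ via the pointwise bounds $\norm{\vphc^{\alpha}}_\infty,\norm{\nabla\vphc^{\alpha}}_\infty,\norm{\Delta\vphc^{\alpha}}_\infty\leq 2\Lambda^3\norm{\alpha}_2$ coming from the UV cutoff, and part e) needs a bilinear estimate on $\rho_{\vec{\varphi}}-\rho_{\vec{\psi}}$ in $L^1$ together with the weight $(1+N^{-1/3}\delta_N\omega)$ on $\tilde{\eta}$. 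You should supply these estimates (or at least indicate how the cutoff makes them elementary) before the final boundedness argument; as written, the proposal proves roughly one sixth of the lemma.
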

We give the proof of the lemma below. In order to prove Proposition~\ref{theorem: solution theory} we use \cite[Theorem~X.74]{reedsimon} with $n=1$.

\begin{proof}[Proof of Proposition~\ref{theorem: solution theory}]
From the statements in parts a)--e) in Lemma~\ref{lemma: solution theory FSKGE} we have that all conditions of part (a) of Theorem X.73 in \cite{reedsimon} are satisfied. This implies the existence of a $T >0$ and a unique continuously differentiable solution to \eqref{eq: solution theory effective_eqs} for $t\in[0,T)$ and for all $(\vec{\varphi}^{\, 0},\alpha^0) \in \mathcal{D}(A)$.
By the second part of Lemma~\ref{lemma: solution theory FSKGE} this solution is bounded in norm for all $t\in[0,T)$. Proposition~\ref{theorem: solution theory} then follows from Lemma~\ref{lemma: solution theory FSKGE} and \cite[Theorem~X.74]{reedsimon} with $n=1$. The $\varphi_1^t,\ldots,\varphi_N^t$ are orthonormal for all $t\in[0,\infty)$ because $ - \Delta + N^{2/3} \vphc(\cdot,t)$ is a symmetric operator.
\end{proof}

Before we prove Lemma \ref{lemma: solution theory FSKGE}, let us show that on the chosen time scale $\norm{\alpha^t}_2$ remains of order one during the time evolution.

\begin{lemma}
\label{lemma: growth of alpha-t} 
Let $(\varphi_1^t, \ldots, \varphi_N^t, \alpha^t)$ be the solution to \eqref{effective_eqs} with $(\varphi_1^0, \ldots, \varphi_N^0,\alpha^0) \in  H^2(\mathbb{R}^3) \oplus L_1^2(\mathbb{R}^3)$ and orthonormal $\varphi_1^0, \ldots, \varphi_N^0$, and let $\tilde{\eta}$ be defined as in \eqref{eq: Nelson cut off function}. Then
\begin{align}
\norm{\alpha^t}_2 \leq \norm{\alpha^0}_2 +   \norm{\tilde{\eta}}_2 \abs{t} .
\end{align}
\end{lemma}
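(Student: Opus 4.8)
The plan is to differentiate $\norm{\alpha^t}_2^2$ in time and to close an elementary Gronwall-type inequality. By Proposition~\ref{theorem: solution theory} the map $t\mapsto\alpha^t$ is strongly continuously differentiable in $L^2(\RRR^3)$, so $t\mapsto\norm{\alpha^t}_2^2=\scp{\alpha^t}{\alpha^t}$ is $C^1$ with $\tfrac{d}{dt}\norm{\alpha^t}_2^2 = 2\Re\scp{\alpha^t}{\partial_t\alpha^t}$. I would then insert the evolution equation for $\alpha^t$ from \eqref{effective_eqs}, namely $i\partial_t\alpha^t = N^{-1/3}\delta_N\,\omega\,\alpha^t + N^{-1}(2\pi)^{3/2}\tilde{\eta}\,\mathcal{F}[\rho^t]$, and note that $\scp{\alpha^t}{\omega\alpha^t}\in\RRR$ because $\omega$ acts as multiplication by a real function; hence the free part of $i\partial_t\alpha^t$ contributes a purely imaginary quantity that drops out of $\Re$. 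Only the source term survives, leaving
\begin{align}
\frac{d}{dt}\norm{\alpha^t}_2^2 = 2N^{-1}(2\pi)^{3/2}\,\Im\scp{\alpha^t}{\tilde{\eta}\,\mathcal{F}[\rho^t]}.
\end{align}

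Next I would estimate this by Cauchy--Schwarz, $\big|\scp{\alpha^t}{\tilde{\eta}\,\mathcal{F}[\rho^t]}\big|\leq\norm{\alpha^t}_2\,\norm{\tilde{\eta}\,\mathcal{F}[\rho^t]}_2$, and bound $\norm{\tilde{\eta}\,\mathcal{F}[\rho^t]}_2\leq\norm{\mathcal{F}[\rho^t]}_\infty\,\norm{\tilde{\eta}}_2$. The key input is that $\rho^t=\sum_{j=1}^N|\varphi_j^t|^2$ has total mass $N$: since the equation for each $\varphi_j$ in \eqref{effective_eqs} has the self-adjoint generator $-N^{-2/3}\Delta+\vphc(\cdot,t)$, the norms $\norm{\varphi_j^t}_2=1$ are conserved, so $\norm{\rho^t}_1=N$ and therefore $\norm{\mathcal{F}[\rho^t]}_\infty\leq(2\pi)^{-3/2}\norm{\rho^t}_1=(2\pi)^{-3/2}N$. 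With these bounds the factors of $N$ and $(2\pi)^{\pm3/2}$ cancel, yielding
\begin{align}
\frac{d}{dt}\norm{\alpha^t}_2^2 \leq 2\,\norm{\tilde{\eta}}_2\,\norm{\alpha^t}_2.
\end{align}

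Finally I would convert this into the stated bound on $\norm{\alpha^t}_2$. To avoid differentiating the norm where it vanishes, I would work with $g_\varepsilon(t)=(\varepsilon+\norm{\alpha^t}_2^2)^{1/2}$ for $\varepsilon>0$, which is $C^1$ with $g_\varepsilon'(t)=\tfrac12 g_\varepsilon(t)^{-1}\tfrac{d}{dt}\norm{\alpha^t}_2^2\leq g_\varepsilon(t)^{-1}\norm{\tilde{\eta}}_2\norm{\alpha^t}_2\leq\norm{\tilde{\eta}}_2$, using $\norm{\alpha^t}_2\leq g_\varepsilon(t)$. Integrating from $0$ to $t\geq0$ and letting $\varepsilon\to0$ gives $\norm{\alpha^t}_2\leq\norm{\alpha^0}_2+\norm{\tilde{\eta}}_2\,t$; the case $t<0$ is handled identically after reversing time. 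I do not anticipate a genuine obstacle here: the only points needing a word of care are the legitimacy of the above differentiations (guaranteed by the strong $C^1$ regularity of $\alpha^t$ from Proposition~\ref{theorem: solution theory}) and the harmless $\varepsilon$-regularization at zeros of $\norm{\alpha^t}_2$; the finiteness of $\norm{\tilde{\eta}}_2$, including for $m=0$, is immediate from the compact support and the integrability of $|k|^{-1}$ near the origin in three dimensions.
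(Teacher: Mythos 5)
Your argument is correct, but it follows a different route from the paper. You differentiate $\norm{\alpha^t}_2^2$, observe that the free part $-iN^{-1/3}\delta_N\,\omega\,\alpha^t$ contributes nothing to the real part (since $\scp{\alpha^t}{\omega\alpha^t}\in\RRR$, which makes sense because $\alpha^t\in L_1^2$), and close a differential inequality with an $\varepsilon$-regularization at the zeros of the norm. The paper instead writes the solution in mild (Duhamel) form, $\alpha^t = U_{\omega}(t)\alpha^0 - i\int_0^t ds\, U_{\omega}(t-s)\, N^{-1}(2\pi)^{3/2}\tilde{\eta}\,\mathcal{F}[\rho_{\vec{\varphi}^{\,s}}]$ with $U_{\omega}(t)=e^{-iN^{-1/3}\delta_N\omega t}$, and simply takes norms using unitarity of $U_{\omega}$. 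Both proofs hinge on the same two inputs: conservation of $\norm{\varphi_j^t}_2$ (so $\norm{\rho^t}_1=N$ and $\norm{\mathcal{F}[\rho^t]}_\infty\leq(2\pi)^{-3/2}N$) and the harmlessness of the free evolution (self-adjointness of $\omega$ in your version, unitarity of $U_\omega$ in theirs). The Duhamel route is slightly cleaner in that it needs no regularization and no discussion of differentiating $\norm{\alpha^t}_2$ where it vanishes, and it only uses the integral form of the equation; your route is self-contained at the level of the differential equation and avoids introducing the propagator, at the mild cost of the $\varepsilon$-trick and of checking that the relevant pairings are well defined, which Proposition~\ref{theorem: solution theory} indeed guarantees. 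Either way the constant and the final bound are identical.
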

\begin{proof}
We define $U_{\omega}(t) = e^{-iN^{-1/3}\delta_N\omega(k)t}$. Then the Duhamel expansion of Equation~\eqref{effective_eqs} for $\alpha^t$ can be written as
\begin{align}
\alpha^t = U_{\omega}(t)\alpha^0 - i \int_0^t ds \, U_{\omega}(t-s) N^{-1} (2\pi)^{3/2}  \, \tilde{\eta} \, \mathcal{F}[\rho_{\vec{\varphi}^{\,s}}].
\end{align}
Then, since $\norm{\mathcal{F}[\rho_{\vec{\varphi}^{\,s}}]}_{\infty} \leq (2 \pi)^{-3/2} \norm{\rho_{\vec{\varphi}^{\,s}}}_1 = (2 \pi)^{-3/2} N$ for all $s\in \RRR$, we have
\begin{align}
\norm{\alpha^t}_2 &\leq \norm{U_{\omega}(t)\alpha^0}_2 + N^{-1} (2\pi)^{3/2}  \int_0^t ds\, \norm{U_{\omega}(t-s) \, \tilde{\eta} \mathcal{F}[\rho_{\vec{\varphi}^{\,s}}]}_2 \nonumber\\
&\leq \norm{\alpha^0}_2 + N^{-1} (2\pi)^{3/2}  \norm{\tilde{\eta}}_2 \int_0^t ds\, \norm{\mathcal{F}[\rho_{\vec{\varphi}^{\,s}}]}_\infty \nonumber\\
&\leq \norm{\alpha^0}_2 + \norm{\tilde{\eta}}_2 \abs{t} .
\end{align}
\end{proof}

\begin{proof}[Proof of Lemma~\ref{lemma: solution theory FSKGE}]{~}
\subsubsection*{Part a)}
The operators $A_j = (1 - N^{-1/3} \Delta)$ with $\mathcal{D}(A_j) = H^2(\mathbb{R}^3)$ and $j \in \{1, 2, \ldots, N \}$
as well as  $A_{N+1} = 1 + N^{-1/3} \delta_N \omega$ with  $\mathcal{D}(A_{N+1}) =\{ \alpha \in L^2(\mathbb{R}^3) | A_{N+1} \alpha \in L^2(\mathbb{R}^3) \} = L_1^2(\mathbb{R}^3)$ are clearly self-adjoint. The fact that direct sums of self-adjoint operators are self-adjoint (see, e.g., \cite[Theorem 2.24]{teschl}) proves part a) of Lemma~\ref{lemma: solution theory FSKGE}.

\subsubsection*{Part b)}
Let $(\vec{\varphi}, \alpha) \in \mathcal{D}(A)$ and $j \in \{1, 2, \ldots, N \}$. Then, 
\begin{align}
\norm{\Delta J_{j}[(\vec{\varphi}, \alpha)]}_{L^2(\mathbb{R}^3)}
&=  \| \Delta (N^{1/3} \vphc^{\alpha} - 1) \varphi_j \|_2
\leq \norm{\Delta \varphi_j}_2 + N^{1/3} \norm{\Delta \vphc^{\alpha} \varphi_j}_2.
\end{align}
To bound the second summand, we note that $\| \tilde{\eta} \|_{L_2^2(\mathbb{R}^3)}  \leq (1 + \Lambda^2 ) \| \tilde{\eta} \|_2 \leq  \Lambda^3$ and estimate
\begin{equation}\label{eq: solution theory relation for the field operators}
\begin{split}
\norm{\vphc^{\alpha}}_{\infty}  &\leq 2 \int d^3k \, \abs{\tilde{\eta}(k)} \, \abs{\alpha(k)} 
\leq 2 \norm{\tilde{\eta}}_2 \norm{\alpha}_2
\leq 2 \Lambda^3  \norm{\alpha}_2, \\
\norm{\nabla \vphc^{\alpha}}_{\infty} &\leq 2 \int d^3k \, \abs{k} \, \abs{\tilde{\eta}(k)} \, \abs{\alpha(k)}
\leq 2 \norm{ \abs{\cdot} \tilde{\eta}}_2 \norm{\alpha}_2
\leq 2 \Lambda^3  \norm{\alpha}_2, \\
\norm{\Delta \vphc^{\alpha}}_{\infty}  &\leq 2 \int d^3k \, \abs{k}^2 \, \abs{\tilde{\eta}(k)} \, \abs{\alpha(k)}
\leq 2 \norm{ \abs{\cdot}^2 \tilde{\eta}}_2 \norm{\alpha}_2
\leq 2 \Lambda^3  \norm{\alpha}_2.
\end{split}
\end{equation}
Thus,
\begin{align}
\label{eq: solution theory L-2-norm of nabla pvhc varphi}
\norm{\Delta \vphc^{\alpha} \varphi_j}_2 
&\leq \norm{\vphc^{\alpha}}_{\infty} \norm{ \Delta \varphi_j}_2
+ \norm{\Delta \vphc^{\alpha}}_{\infty} \norm{\varphi_j}_2
+ 2 \norm{\nabla \vphc^{\alpha}}_{\infty} \norm{\nabla \varphi_j}_2 
\nonumber \\
&\leq C \Lambda^3  \norm{\alpha}_2   \norm{\varphi_j}_{H^2(\mathbb{R}^3)}
\end{align}
and
$\norm{\Delta J_{j}[(\vec{\varphi}, \alpha)]}_{L^2(\mathbb{R}^3)}
\leq  \big( 1 
+ C N^{1/3} \Lambda^3  \norm{\alpha}_2  \big) \norm{\varphi_j}_{H^2(\mathbb{R}^3)}$. This shows that $J_{j}[(\vec{\varphi}, \alpha)] \in H^2(\mathbb{R}^3)$ for $j \in \{ 1, 2, \ldots, N \}$.
The last component of $J$  is estimated by
$\norm{J_{N+1}[(\vec{\varphi}, \alpha)]}_{L_1^2(\mathbb{R}^3)}
\leq \norm{\alpha}_{L_1^2(\mathbb{R}^3)}
+ (2 \pi)^{3/2} N^{-1}  \norm{\mathcal{F}[\rho_{\vec{\varphi}}]}_{L^{\infty}(\mathbb{R}^3)}  \norm{\tilde{\eta}}_{L_1^2(\mathbb{R}^3)}$.
So if we use 
\begin{align}
\label{eq: Appendix b- estimate for the infinity norm of rho}
 \norm{\mathcal{F}[\rho_{\vec{\varphi}}]}_{L^{\infty}(\mathbb{R}^3)}
 &\leq (2 \pi)^{-3/2} \norm{\rho_{\vec{\varphi}}}_{L^1(\mathbb{R}^3)}
= (2 \pi)^{-3/2} \sum_{j=1}^N \norm{\varphi_j}_2^2
\end{align}
and $\norm{\tilde{\eta}}_{L_1^2(\mathbb{R}^3)} \leq \Lambda^2$ we get
$\norm{J_{N+1}[(\vec{\varphi}, \alpha)]}_{L_1^2(\mathbb{R}^3)}
\leq \norm{\alpha}_{L_1^2(\mathbb{R}^3)}
+ \Lambda^2 N^{-1}\sum_{j=1}^N \norm{\varphi_j}_2^2$.
Hence, $J_{N+1}[(\vec{\varphi}, \alpha)] \in L_1^2(\mathbb{R}^3)$ and thus $J[(\vec{\varphi}, \alpha)] \in \mathcal{D}(A)$.

\subsubsection*{Part c)}

To show part c) of Lemma \ref{lemma: solution theory FSKGE}, we note that the classical radiation field $\vphc^{\alpha}$ is linear in $\alpha$, i.e.,
\begin{align}
\vphc^{\alpha} + \vphc^{\beta} = \vphc^{\alpha + \beta}
\quad \text{and} \quad
\vphc^{\lambda \alpha} = \lambda \vphc^{\alpha} ~\forall \lambda\in\RRR.
\end{align}
For $j \in \{ 1, \ldots, N \}$ we can write
\begin{align}
\label{eq: solution of the FSKGE difference of the Js}
i \big( J[(\vec{\varphi}, \alpha)] - J[(\vec{\psi}, \beta)]  \big)_j
&=  (N^{1/3} \vphc^{\alpha} -1) \varphi_j - (N^{1/3} \vphc^{\beta} -1) \psi_j
\nonumber \\
&= - (\varphi_j - \psi_j) + N^{1/3} \vphc^{\alpha} ( \varphi_j - \psi_j )  + N^{1/3}  \vphc^{\alpha - \beta}  \psi_j
\end{align}
and estimate
\begin{align}
\norm{\big( J[(\vec{\varphi}, \alpha)] - J[(\vec{\psi}, \beta)]  \big)_j}_{L^2(\mathbb{R}^3)}
&\leq (1 + N^{1/3} \Lambda \norm{\alpha}_2) \norm{\varphi_j - \psi_j}_2 + N^{1/3} \Lambda \norm{\psi_j}_2 \norm{\alpha - \beta}_2
\end{align}
by means of \eqref{eq: solution theory relation for the field operators} and $\norm{\tilde{\eta}} \leq \Lambda$.
Hence
\begin{align}
&\sum_{j=1}^N  \norm{\big( J[(\vec{\varphi}, \alpha)] - J[(\vec{\psi}, \beta)]  \big)_j}_{L^2(\mathbb{R}^3)}^2 \nonumber\\
&\quad\leq  2 (1 + N^{1/3} \Lambda \norm{\alpha}_2)^2   \sum_{j=1}^N  \norm{\varphi_j - \psi_j}_2^{2} + 2 N^{2/3} \Lambda^2 \sum_{j=1}^N  \norm{\psi_j}_2^2 \norm{\alpha - \beta}_2^2
\nonumber \\
&\quad\leq  4 \Big( 1 + N^{2/3} \Lambda^2 \big( \|(\vec{\varphi}, \alpha)\|_{\Hilbert}^2 +  \|(\vec{\psi}, \beta)\|_{\Hilbert}^2 \big) \Big) \|(\vec{\varphi}, \alpha) -  (\vec{\psi}, \beta)\|_{\Hilbert}^2.
\end{align}
In order to estimate the difference of $J[(\vec{\varphi},\alpha)]_{N+1}$ and $J[(\vec{\psi},\beta)]_{N+1}$ we note that 
\begin{align}
\rho_{\vec{\varphi}}(x) - \rho_{\vec{\psi}}(x) 
&= \sum_{j=1}^N \overline{\varphi_j(x)} (\varphi_j(x) - \psi_j(x))
+ \sum_{j=1}^N \psi_j(x) ( \overline{\varphi_j(x)} - \overline{\psi_j(x)}).
\end{align}
Thus,
\begin{align}
\norm{\rho_{\vec{\varphi}} - \rho_{\vec{\psi}}}_{L^1(\mathbb{R}^3)}
&\leq \sum_{j=1}^N \norm{\varphi_j}_2 \norm{\varphi_j - \psi_j}_2
+  \sum_{j=1}^N \norm{\psi_j}_2 \norm{\varphi_j - \psi_j}_2
\nonumber \\
&\leq \left( \|(\vec{\varphi}, \alpha)\|_{\Hilbert} +  \|(\vec{\psi}, \beta)\|_{\Hilbert} \right) \|(\vec{\varphi}, \alpha) -  (\vec{\psi}, \beta)\|_{\Hilbert}.
\end{align}
So if we use the linearity of the Fourier transform we obtain
\begin{align}
&\norm{\big( J[(\vec{\varphi}, \alpha)] - J[(\vec{\psi}, \beta)]  \big)_{N+1}}_{L^2(\mathbb{R}^3)} \nonumber\\
&\quad\leq \norm{\alpha - \beta}_2 + N^{-1} (2 \pi)^{3/2} \norm{\tilde{\eta} \big(\mathcal{F}[\rho_{\vec{\varphi}}]  - \mathcal{F}[\rho_{\vec{\psi}}] \big)}_2
\nonumber \\
&\quad\leq   \norm{\alpha - \beta}_2 
+ N^{-1} (2 \pi)^{3/2} \norm{\tilde{\eta}}_2 \norm{\mathcal{F}[\rho_{\vec{\varphi}}]  - \mathcal{F}[\rho_{\vec{\psi}}] }_{\infty}
\nonumber \\
&\quad\leq \norm{\alpha - \beta}_2 
+ N^{-1} \Lambda \norm{\rho_{\vec{\varphi}} - \rho_{\vec{\psi}}}_{L^1(\mathbb{R}^3)} \nonumber \\
&\quad\leq  \left( 1 +
 N^{-1} \Lambda \left( \|(\vec{\varphi}, \alpha)\|_{\Hilbert} +  \|(\vec{\psi}, \beta)\|_{\Hilbert} \right) \right) \|(\vec{\varphi}, \alpha) -  (\vec{\psi}, \beta)\|_{\Hilbert}.
\end{align}
In total we get
\begin{align}
\label{eq: Appendix solution theory proof lemma part c}
 \norm{J[(\vec{\varphi}, \alpha)] - J[(\vec{\psi}, \beta)] }_{\Hilbert}^2
&\leq  8 \Lambda^2 N^{2/3} \Big( 1 +  \|(\vec{\varphi}, \alpha)\|_{\Hilbert}^2 +  \|(\vec{\psi}, \beta)\|_{\Hilbert}^2  \Big) \|(\vec{\varphi}, \alpha) -  (\vec{\psi}, \beta)\|_{\Hilbert}^2.
\end{align}

\subsubsection*{Part d)}

For $j \in \{1, \ldots, N \}$ we consider
\begin{align}
\norm{(A J[(\vec{\varphi}, \alpha)])_j}_{L^2(\mathbb{R}^3)}
&= \norm{(1 - N^{-1/3} \Delta)(N^{1/3} \vphc^{\alpha} -1 )\varphi_j}_{2}
\nonumber\\
&\leq \norm{(1 - N^{-1/3} \Delta) \varphi_j}_2 + C \Lambda^3 \norm{\alpha}_2 \big( N^{1/3} \norm{\varphi_j}_2 + \norm{\varphi_j}_{H^2(\mathbb{R}^3)}  \big)  ,
\end{align}
where we made use of \eqref{eq: solution theory relation for the field operators} and \eqref{eq: solution theory L-2-norm of nabla pvhc varphi}. 
By means of 
\begin{align}
\begin{split}
\label{eq: solution theory of the FSKGE relation beetween the norms}
\norm{\varphi}_2 &\leq \norm{(1 - N^{-1/3} \Delta) \varphi}_2
, \quad 
  \norm{\Delta \varphi}_2 \leq  N^{1/3} \norm{(1 - N^{-1/3} \Delta) \varphi}_2, \\
 \norm{\varphi}_{H^2(\mathbb{R}^3)} &\leq 2  N^{1/3} \norm{(1 - N^{-1/3} \Delta) \varphi}_2
\end{split}
\end{align}
we get
\begin{align}
&\norm{(A J[(\vec{\varphi}, \alpha)])_j}_{L^2(\mathbb{R}^3)}
\leq (1 + C N^{1/3} \Lambda^3 \norm{\alpha}_2)  \norm{(1 - N^{-1/3} \Delta) \varphi_j}_2.
\end{align}
Similarly, we have
$ \norm{\alpha}_2 \leq \norm{\left( 1 + N^{-1/3} \delta_N \omega \right) \alpha}_2$
and obtain
\begin{align}
\label{eq: solution theory of the FSKGE relation between the norms 2}
\norm{(\vec{\varphi},\alpha)}_{\Hilbert} 
&\leq \bigg( \sum_{j=1}^N \norm{(1 - N^{-1/3} \Delta) \varphi_j}_2^{2} + \norm{(1 + N^{-1/3} \delta_N \omega) \alpha}_2^{2} \bigg)^{1/2} 
= \norm{A(\vec{\varphi},\alpha)}_{\Hilbert}.
\end{align}
With \eqref{eq: Appendix b- estimate for the infinity norm of rho} we estimate
\begin{align}
\norm{(A J[(\vec{\varphi}, \alpha)])_{N+1}}_{L^2(\mathbb{R}^3)} 
& \leq \norm{(1 + N^{-1/3} \delta_N \omega) \alpha}_2
\nonumber \\
&\quad + N^{-1} (2 \pi)^{3/2} \left(\norm{\tilde{\eta}}_2  + N^{-1/3}\delta_N \norm{\omega\tilde{\eta}}_2 \right) \norm{\mathcal{F}[\rho_{\vec{\varphi}}]}_{\infty}
\nonumber \\
&\leq \norm{A_{N+1}\alpha}_2 + 2N^{-1} \Lambda^2(1+\sqrt{m}) \sum_{j=1}^N \norm{\varphi_j}_2^2
\nonumber \\
&\leq \Big( 1 + 2N^{-1} \Lambda^2(1+\sqrt{m})  \Big) \norm{A(\vec{\varphi},\alpha)}_{\Hilbert}.
\end{align}
Altogether, this yields
\begin{align}
\norm{A J[(\vec{\varphi}, \alpha)]}^2_{\Hilbert} 
&\leq C N^{2/3} \Lambda^{6} (1+\sqrt{m})^2\big( 1 +   \norm{(\varphi, \alpha)}_{\Hilbert} \big)^2 \norm{A(\vec{\varphi},\alpha)}_{\Hilbert}^2.
\end{align}

\subsubsection*{Part e)}

Note that
\begin{align}
&\sum_{j=1} ^N \norm{(A J[(\vec{\varphi}, \alpha)] - A J[(\vec{\psi}, \beta)])_j}_{L^2(\mathbb{R}^3)}^2 \nonumber\\
&\quad= \sum_{j=1}^N \norm{(1 - N^{-1/3} \Delta) ( J[(\vec{\varphi}, \alpha)] -  J[(\vec{\psi}, \beta)])_j}_2^2 
\nonumber \\
&\quad\leq  2 \Big( \norm{( J[(\vec{\varphi}, \alpha)] -  J[(\vec{\psi}, \beta)])}_{\Hilbert}^2
+ N^{-2/3} \sum_{j=1}^N  \norm{(- \Delta) ( J[(\vec{\varphi}, \alpha)] -  J[(\vec{\psi}, \beta)])_j}_2^2  \Big).
\end{align}
So if we recall \eqref{eq: solution of the FSKGE difference of the Js} and \eqref{eq: Appendix solution theory proof lemma part c} we get
\begin{align}
&\sum_{j=1} ^N \norm{(A J[(\vec{\varphi}, \alpha)] - A J[(\vec{\psi}, \beta)])_j}_{L^2(\mathbb{R}^3)}^2 \nonumber\\
&\quad\leq  8 N^{2/3} \Lambda^2  \big( 1 +  \big( \|(\vec{\varphi}, \alpha)\|_{\Hilbert}^2 +  \|(\vec{\psi}, \beta)\|_{\Hilbert}^2 \big) \big) \|(\vec{\varphi}, \alpha) -  (\vec{\psi}, \beta)\|_{\Hilbert}^2
\nonumber \\
&\qquad +  N^{-2/3} \sum_{j=1}^N \norm{(- \Delta) (\varphi_j - \psi_j)}_2^2
+  \sum_{j=1}^N \norm{(- \Delta)  \vphc^{\alpha} (\varphi_j - \psi_j) }_2^2
+  \sum_{j=1}^N \norm{(- \Delta)  \vphc^{\alpha - \beta} \psi_j }_2^2 .
\end{align}
By means of \eqref{eq: solution theory L-2-norm of nabla pvhc varphi}, \eqref{eq: solution theory of the FSKGE relation beetween the norms} and \eqref{eq: solution theory of the FSKGE relation between the norms 2}
we get
\begin{equation}
\begin{split}
N^{-2/3} \sum_{j=1}^N \norm{( - \Delta) (\varphi_j - \psi_j)}_2^2 
&\leq \norm{A(\vec{\varphi},\alpha) - A(\vec{\psi},\beta)}_{\Hilbert}^2 ,
\\
 \sum_{j=1}^N \norm{( - \Delta) \vphc^{\alpha} (\varphi_j - \psi_j)}_2^2 
&\leq C N^{2/3} \Lambda^6 \norm{(\vec{\varphi},\alpha)}_{\Hilbert}^2   \norm{A(\vec{\varphi},\alpha) - A(\vec{\psi},\beta)}_{\Hilbert}^2, \\
\sum_{j=1}^N \norm{(- \Delta)  \vphc^{\alpha - \beta} \psi_j }_2^2 
&\leq  C N^{2/3} \Lambda^6 \norm{A(\vec{\psi},\beta)}_{\Hilbert}^2   
\norm{A(\vec{\varphi},\alpha) - A(\vec{\psi},\beta)}_{\Hilbert}^2.
\end{split}
\end{equation}
Thus,
\begin{align}
&\sum_{j=1} ^N \norm{(A J[(\vec{\varphi}, \alpha)] - A J[(\vec{\psi}, \beta)])_j}_{L^2(\mathbb{R}^3)}^2 
\nonumber \\
&\quad\leq   C N^{2/3} \Lambda^6  \big( 1 + \norm{(\vec{\varphi},\alpha)}_{\Hilbert}^2 + \|(\vec{\psi},\beta)\|_{\Hilbert}^2 + \norm{A(\vec{\psi},\beta)}_{\Hilbert}^2  \big) 
 \norm{A(\vec{\varphi},\alpha) - A(\vec{\psi},\beta)}_{\Hilbert}^2.
\end{align}
On the other hand we have
\small
\begin{align}
&\norm{(A J[(\vec{\varphi}, \alpha)] - A J[(\vec{\psi}, \beta)])_{N+1}}_{L^2(\mathbb{R}^3)}^2 \nonumber\\
&\quad\leq 2 \norm{(1 + N^{-1/3} \delta_N \omega) (\alpha - \beta)}_2^2
+ 2 N^{-2} (2 \pi)^3 \norm{(1 + N^{-1/3} \delta_N \omega) \tilde{\eta} \big( \mathcal{F}[\rho_{\vec{\varphi}}] - \mathcal{F}[\rho_{\vec{\psi}}] \big)}_2^2 \nonumber\\
&\quad\leq 2 \norm{A (\vec{\varphi},\alpha) - A(\vec{\psi},\beta)}_{\Hilbert}^2 
\nonumber \\
&\qquad + 2 N^{-1}  \big( \|(\vec{\varphi}, \alpha)\|_{\Hilbert} 
+  \|(\vec{\psi}, \beta)\|_{\Hilbert}   \big)^2 \norm{(1 + N^{-1/3} \delta_N \omega) \tilde{\eta}}_2^2 \norm{(\vec{\varphi}, \alpha) - (\vec{\psi}, \beta)}_{\Hilbert}^2
\nonumber \\
&\quad\leq 2 \Big( 1 
+ N^{-1} \big( \|(\vec{\varphi}, \alpha)\|_{\Hilbert} 
+  \|(\vec{\psi}, \beta)\|_{\Hilbert}   \big)^2
\big( 1 + N^{-1/3} \delta_N \sqrt{\Lambda^2 + m^2} \big)^2 \Lambda^2
\Big)  \norm{A (\vec{\varphi},\alpha) - A(\vec{\psi},\beta)}_{\Hilbert}^2 
\nonumber \\
&\quad\leq  8 \Lambda^4 \Big( 1 
+  \big( \|(\vec{\varphi}, \alpha)\|^2_{\Hilbert} 
+  \|(\vec{\psi}, \beta)\|^2_{\Hilbert}   \big)
\big( 1 + N^{-1/3} \delta_N \sqrt{1 + m^2} \big)^2
\Big)  \norm{A (\vec{\varphi},\alpha) - A(\vec{\psi},\beta)}_{\Hilbert}^2 .
\end{align}
\normalsize
In total, we get
\begin{align}
\norm{A J[(\vec{\varphi}, \alpha)] - A J[(\vec{\psi}, \beta)]}_{\Hilbert}^2
&\leq  C N^{2/3} \Lambda^6 \left( 1 + \norm{(\vec{\varphi},\alpha)}_{\Hilbert}^2 + \|(\vec{\psi},\beta)\|_{\Hilbert}^2 + \norm{A(\vec{\psi},\beta)}_{\Hilbert}^2  \right) \nonumber\\
&\quad \times \big( 1 + N^{-1/3} \delta_N \sqrt{1 + m^2} \big)^2  \norm{A(\vec{\psi},\alpha) - A(\vec{\psi},\beta)}_{\Hilbert}^2.
\end{align}

\subsubsection*{Final Statement of the Lemma}

Let $(\vec{\varphi}^{\,0},\alpha^0) \in \mathcal{D}(A)$ and assume there is a $T >0$ so that \eqref{eq: solution theory effective_eqs} has a unique continuously differentiable solution for $t \in [0,T)$. Then,
\begin{align}
\frac{d}{dt} \norm{\varphi_j^t}_2^2
&=  \frac{d}{dt} \scp{\varphi_j^t}{\varphi_j^t} = 2 \Im \scp{\varphi_j^t}{\big(- N^{-1/3}  \Delta + \vphc^{\alpha^t} \big)\varphi_j^t} =0 
\end{align}
because $\vphc^{\alpha^t} \in \mathbb{R}$. Moreover, we can apply Lemma \ref{lemma: growth of alpha-t} locally and conclude $\norm{\alpha^t}_2 \leq \norm{\alpha^0}_2 + C \Lambda t \leq \norm{\alpha^0}_2 + C \Lambda T$. For all $t \in [0,T)$ this shows 
\begin{align}
\norm{(\vec{\varphi}^{\,t}, \alpha^t)}_{\Hilbert}
&\leq \norm{(\vec{\varphi}^{\,0}, \alpha^0)}_{\Hilbert} 
+ \norm{\alpha^0}_2 + C \Lambda T.
\end{align}
\end{proof}

\section*{Acknowledgments}
We would like to thank Peter Pickl and Robert Seiringer for fruitful discussions, and the anonymous referees for their valuable comments and suggestions. Moreover, we would like to thank Niels Benedikter and L{\'a}szl{\'o} Erd\H{o}s for helpful remarks about the semiclassical structure and the Schr\"odinger-Klein-Gordon equations. N.\,L.\ gratefully acknowledges financial support by the European Research Council (ERC) under the European Union's Horizon 2020 research and innovation programme (grant agreement No 694227) and funding for his stay at Princeton University from the project ``Effective One-Particle Equations for Correlated Many-Particle-(Coulomb) Systems: Derivation and Properties'' (project number 318342445) of the German Research Foundation (DFG). S.\,P.\ gratefully acknowledges support from the German Academic Exchange Service (DAAD) and the National Science Foundation under agreement No.\ DMS-1128155. Moreover, we would like to thank Princeton University and the Institute for Advanced Study for their hospitality. S.\,P.\ would additionally like to thank the University of Washington for hospitality.

{}

\end{document}